\definecolor{LightGray}{gray}{0.95}
\newcommand{\vlkey}[1]{\textcolor{OliveGreen}{\textbf{#1}}}
\newif\ifdraftComments
\def\mkDraftFn#1#2{%
  \expandafter\def\csname #1\endcsname##1{\ifdraftComments\textcolor{#2}{[#1: ##1]}\marginpar[$\longrightarrow$]{$\longleftarrow$}\fi}%
}
\begin{document}
\title{Compilation Semantics for a Programming Language with Versions}


\author{Yudai Tanabe\inst{1}\orcidID{0000-0002-7990-0989} \faIcon{envelope}\and\\
Luthfan Anshar Lubis\inst{2}\orcidID{0000-0002-1498-7788} \and\\
Tomoyuki Aotani\inst{3}\orcidID{0000-0003-4538-0230} \and\\
Hidehiko Masuhara\inst{2}\orcidID{0000-0002-8837-5303}
}

\authorrunning{Y. Tanabe et al.}

\institute{
Kyoto University, Kyoto, Japan
\faIcon{envelope} \footnote{\faIcon{envelope} Corresponding author}\email{yudaitnb@fos.kuis.kyoto-u.ac.jp}\\\and
Tokyo Institute of Technology, Tokyo, Japan
\email{\{luthfanlubis@prg.is.titech.ac.jp | masuhara@acm.org\}}\\\and
Sanyo-Onoda City University, Yamaguchi, Japan
\email{aotani@rs.socu.ac.jp}
}
\maketitle            
\begin{abstract}
\emph{Programming with versions} is a paradigm that allows a program to use multiple versions of a module so that the programmer can selectively use functions from both older and newer versions of a single module.
Previous work formalized \corelang{}, a core calculus for programming with versions, but it has not been integrated into practical programming languages.
In this paper, we propose \mylang{}, a Haskell-subset surface language for \corelang{} along with its compilation method.
We formally describe the core part of the \mylang{} compiler, which translates from the surface language to the core language by leveraging Girard's translation, soundly infers the consistent version of expressions along with their types, and generates a multi-version interface by bundling specific-version interfaces. 
We conduct a case study to show how \mylang{} supports practical software evolution scenarios and discuss the method's scalability.
\keywords{Type system \and Type inference \and Version control system.}
\end{abstract}
\section{Introduction}
\label{introduction}
Updating dependent software packages is one of the major issues in software development. Even though a newer version of a package brings improvements, it also brings the risk of breaking changes, which can make the entire software defective.

We argue that this issue originates from the principle of most programming languages that only allow the use of one version of a package at a time. Due to this principle, developers are faced with the decision to either update to a new, improved version of a package that requires many changes or to remain with an older version. The problem gets worse when a package is indirectly used. This dilemma often results in delays in adopting upgrades, leading to stagnation in software development and maintenance~\cite{6676878,10.1007/s10664-014-9325-9}.

\emph{Programming with versions}~\cite{Tanabe:2018:CPA:3242921.3242923,Tanabe_2021,batakjava2022,bundling2023} is a recent proposal that allows programming languages to support multiple versions of programming elements at a time so that the developer can flexibly cope with incompatible changes. \corelang{} is the core calculus in which a \emph{versioned value} encapsulates multiple versions of a value (including a function value). The \corelang{} type system checks the consistency of each term so that a value produced in a version is always passed to functions in the same version. The calculus and the type system design are based on coeffect calculus~\cite{brunel_core_2014,Orchard:2019:Granule}.

While \corelang{} offers the essential language constructs to support multiple versions in a program, the language is far from practical. For example, with  multiple versions of a module, each version of the function must be manually represented inside a versioned value (i.e., a record-like expression). \corelang{} is as simple as lambda calculus, yet it has a verbose syntax due to the coeffect calculus. In short, there are aspects of versioning in \corelang{} that a surface language compiler can automate.

We propose the functional language \mylang{} as a surface language for \corelang{} along with its compilation method. In \mylang{}, a function name imported from an external module represents a multi-version term, where each occurrence of the function name can reference a different version of the function. The \mylang{} compiler translates a program into an intermediate language \vlmini{}, a version-label-free variant of \corelang{}, determines the version for each name occurrence based on a type and version inference algorithm, and translates it back into a version-specialized Haskell program.
\mylang{} also offers the constructs to explicitly control versions of expressions, which are useful to keep using an older version for some reason.  

This paper presents the following techniques in \mylang{}: (a) \emph{an application of Girard's translation} for translating \mylang{} into \vlmini{}, (b) \emph{the bundling} for making a top-level function act as a versioned value, and (c) \emph{a type and version inference algorithm} for identifying the version of each expression with respect to the \corelang{} type system. Finally, we prove the soundness of the inference system and implement a \mylang{} compiler. Code generation converts a \mylang{} program into a version-specialized Haskell program using the solution obtained from z3~\cite{10.1007/978-3-540-78800-3_24}.

\vspace{-0.5\baselineskip}
\paragraph{Paper Organization.}
Section \ref{overview} introduces incompatibility issues and fundamental concepts in programming with versions with \corelang{} and \mylang{}.
Section \ref{compilation} introduces bundling and Girard's transformation.
Section \ref{inference} presents an algorithmic version inference for \mylang{}.
Section \ref{implementation} features an implementation of \mylang{}, and Section \ref{casestudy} introduces a case study that simulates an incompatible update made in a Haskell library.
Finally, Section \ref{conclusion} discusses further language development and concludes the paper by presenting related work and a conclusion.
\section{Overview} 
\label{overview}
\subsection{Motivating Example}
\label{sec:motivatingexample}
\begin{figure}[tb]
\centering
\begin{tikzpicture}[thick, node distance=0.8cm and 1cm]
\tikzset{codeblock/.style={
    outer xsep=-7pt,
    outer ysep=-7pt,
    rectangle,
    text centered,
    rounded corners,
    minimum width=5.8cm
}};

\node[codeblock] (app1) {
\begin{minipage}{.44\linewidth}
\begin{minted}[linenos,numbersep=2pt,highlightcolor=white,highlightlines={6-7}]{haskell}
module App where
import Dir
import Hash
main () =
  let s = getArg ()
      digest = mkHash s in
  if exists digest
    then print "Found"
    else error "Not found"
\end{minted}
\end{minipage}
};

\node[codeblock, right=of app1] (dir1) {
\begin{minipage}{.42\linewidth}
\begin{minted}{haskell}
module Dir where
import Hash
-- version 1.0.0
exists hash = 
  let fs = getFiles () in
  foldLeft
    (\(acc, f) ->
      acc || match f hash)
    false fs
\end{minted}
\end{minipage}
};

\node[codeblock, below right=of app1] (hash1) {
\begin{minipage}{.42\linewidth}
\begin{minted}{haskell}
module Hash where
-- version 1.0.0
mkHash s = {- MD5 hash -}
match s hash =
    mkHash s == hash
\end{minted}
\end{minipage}
};

\node[codeblock, below=of app1] (hash2) {
\begin{minipage}{.44\linewidth}
\begin{minted}{haskell}
module Hash where
-- version 2.0.0
mkHash s = {- SHA-3 hash -}
match s hash =
    mkHash s == hash
\end{minted}
\end{minipage}
};

\coordinate [below = 0.4 of dir1] (B) {};

\draw[->] (app1.east)  -- node[midway,above] {} (dir1.west);
\draw[->,red] (app1.south) -- node[midway,left] {\textcolor{red}{version 2.0.0}}(hash2.north);
\draw[dashed,->] ([xshift=0.5cm]app1.south) |- ([xshift=-0.5cm]B) -- ([xshift=-0.5cm]hash1.north);
\draw[->] (dir1.south) -- node[midway,right] {version 1.0.0} (hash1.north);

\end{tikzpicture}

\parbox[t]{.9\linewidth}{\small The \fn{exists} provided from \mn{Dir} (which depends on version $1$ of \mn{Hash}) expects an MD5 hash as an argument. However, after the dependency update of \mn{App} on \mn{Hash}, the value assigned to \texttt{digest} is a SHA-3 hash.}

\caption{Minimal module configuration before and after the dependency update  causing an error due to inconsistency expected to the dependent package.}
\label{fig2-1}
\end{figure}
First, we will explain a small example to clarify incompatibility issues. Consider a scenario where an incompatible change is made to a dependent package. Figure \ref{fig2-1} shows the package dependencies in a file explorer \mn{App} based on a hash-based file search. This function is developed using the system library \mn{Dir} and the cryptography library \mn{Hash}. For simplicity, we equate packages and modules here (each package consists of a single module), and we only focus on the version of \mn{Hash}. The pseudocode is written in a Haskell-like language.

Before its update, \mn{App} depends on version 1.0.0 of \mn{Hash} (denoted by $\dashrightarrow$). The \mn{App}'s \fn{main} function implements file search by a string from standard input using \fn{mkHash} and \fn{exists}.
The function \fn{mkHash} is in version 1.0.0 of \mn{Hash}, and it generates a hash value using the MD5 algorithm from a given string. \mn{Hash} also provides a function \fn{match} that determines if the argument string and hash value match under \fn{mkHash}.
The function \fn{exists} is in version 1.0.0 of \mn{Dir}, which is also dependent on version 1.0.0 of \mn{Hash}, and it determines if a file with a name corresponding to a given hash exists. 

Due to security issues, the developer of \mn{App} updated \mn{Hash} to version 2.0.0 (denoted by \textcolor{red}{$\longrightarrow$}). In version 2.0.0 of \mn{Hash}, SHA-3 is adopted as the new hash algorithm. Since \mn{Dir} continues to use version 1.0.0 of \mn{Hash}, \mn{App} needs two different versions of \mn{Hash}. 
Various circumstances can lead to this situation: \mn{Dir} may have already discontinued maintenance, or functions in \mn{Dir}, other than \fn{exists}, might still require the features provided by version 1.0.0 of \mn{Hash}.

Although the update does not modify \mn{App}, it causes errors within \mn{App}. Even if a file with an input filename exists, the program returns \texttt{Not Found} error contrary to the expected behavior. The cause of the unexpected output lies in the differences between the two versions required for \fn{main}. In line 6 of \mn{App}, an SHA-3 hash value is generated by \fn{mkHash} and assigned to \texttt{digest}. Since \fn{exists} evaluates hash equivalence using MD5, \texttt{exists digest} compares hashes generated by different algorithms, evaluating to \texttt{false}.

This example highlights the importance of version compatibility when dealing with functions provided by external packages. Using different versions of \mn{Hash} in separate program parts is fine, but comparing results may be semantically incorrect. Even more subtle changes than those shown in Figure \ref{fig2-1} can lead to significant errors, especially when introducing side effects or algorithm modifications that break the application's implicit assumptions. Manually managing version compatibility for all external functions is unfeasible.

In practical programming languages, dependency analysis is performed before the build process to prevent such errors, and package configurations requiring multiple versions of the same package are rejected. However, this approach tends towards conservative error reporting.
In cases where a core package, which many other libraries depend on, receives an incompatible change, no matter how minuscule, it requires coordinated updates of diverse packages across the entire package ecosystem~\cite{10.1007/s10664-014-9325-9,Tanabe_2021,semvertrick}.

\subsection{\corelang{}}
\label{core}
\corelang{}~\cite{Tanabe:2018:CPA:3242921.3242923,Tanabe_2021} is a core calculus designed to follow the principles: (1) enabling simultaneous usage of multiple versions of a package, (2) ensuring version consistency within a program. \corelang{} works by encapsulating relevant terms across multiple versions into a record-like term, tagged with a label indicating the specific module version. Record-like terms accessible to any of its several versions are referred to as \emph{versioned values}, and the associated labels are called \emph{version labels}.

\subsubsection{Version Labels}
Figure \ref{syntax:lambdavl} shows the syntax of \corelang{}.
Given modules and their versions, the corresponding set of version labels characterizes the variation of programs of a versioned value. 
In \corelang{}, version labels are implicitly generated for all external module-version combinations, in which $M_i$ is unique, with the universal set of these labels denoted by $\mathcal{L}$. 
Specifically, in the example illustared in Figure \ref{fig2-1}, $\mathcal{L} = \{l_1,l_2\}$ and $l_1 = \{\mn{Hash} = 1.0.0,\,\mn{Dir} = 1.0.0\}, l_2 = \{\mn{Hash} = 2.0.0,\,\mn{Dir} = 1.0.0\}$.
The size of $\mathcal{L}$ is proportional to $V^M$ where $M$ is the number of modules and $V$ is the maximum number of versions.

\begin{figure}[tb]
\centering
\hrule
\medskip
\begin{minipage}{.9\linewidth}
\textbf{\corelang{} syntax}
\end{minipage}
\begin{minipage}{\textwidth}
\vspace{-.7\baselineskip}
\begin{align*}
t &::= n \mid x \mid \app{t_1}{t_2} \mid \lam{x}{t} \mid \clet{x}{t_1}{t_2} \mid  u.l \mid \ivval{\overline{l_i=t_i}}{l_k} \mid u \hspace{2.5em}\tag{terms}\\
\end{align*}
\end{minipage}
\begin{minipage}{\textwidth}
    \vspace{-2.2\baselineskip}
    \begin{minipage}{.5\textwidth}
    \begin{align*}
    u &::= \pr{t} \mid \nvval{\overline{l_i=t_i}} \tag{versioned values}\\ 
    A, B &::= \inttype \mid \ftype{A}{B} \mid \vertype{r}{A} \tag{types}
    \end{align*}
    \end{minipage}
    \begin{minipage}{.5\textwidth}
    \begin{align*}
    r    &::= \bot \mid \{\,\overline{l_i}\,\} \tag{resources}\\
    \mathcal{L}\ \ni\ l&::= \{\overline{M_i = V_i}\} \tag{version labels}
    \end{align*}
    \end{minipage}
\end{minipage}
\parbox[t]{\linewidth}{\smallskip
\small $M_i$ and $V_i$ are metavariables over module names and versions of $M_i$, respectively.}
\smallskip
\hrule
\caption{The syntax of \corelang{}.}
\label{syntax:lambdavl}
\end{figure}
\subsubsection{Syntax of \corelang{}}
\corelang{} extends \lrpcf{}~\cite{brunel_core_2014} and GrMini~\cite{Orchard:2019:Granule} with additional terms that facilitate introducing and eliminating versioned values.
Versioned values can be introduced through versioned records $\nvval{\overline{l_i=t_i}}$ and promotions $\pr{t}$.
A versioned record encapsulates related definitions $t_1,\ldots,t_n$ across multiple versions and their version labels $l_1,\ldots,l_n$.
For instance, the two versions of \fn{mkHash} in Figure \ref{fig2-1} can be bundled as the following version record.
\begin{align*}
\mathit{mkHash}\quad:=\quad
&
\begin{aligned}
    \{&l_1=\lam{s}{\textnormal{\com{make MD5 hash}}},\\
    &l_2=\lam{s}{\textnormal{\com{make SHA-3 hash}}}\}
\end{aligned}
\end{align*}

In \corelang{}, programs are constructed via function application of versioned values. A function application of \fn{mkHash} to the string \texttt{s} can be written as follows.
\begin{align*}
\mathit{app}\quad:=\quad
\begin{aligned}
&\clet{\mathit{mkHash'}}{\mathit{mkHash}}{
    \\&\clet{s}{
        \pr{``\mathtt{compiler.vl}"}
    }{
        \pr{\app{\mathit{mkHash'}}{s}}
    }
}
\end{aligned}
\end{align*}

This program ($app$ hereafter) makes a hash for the string ``\texttt{compiler.vl}" and is available for both $l_1$ and $l_2$.
The contextual let-binding $\clet{x}{t_1}{t_2}$ provides the elimination of version values by binding a versioned value for $t_1$ to $x$, thus making it accessible in $t_2$.
Promotion $\pr{x}$ offers an alternative way to introduce versioned values, making any term $t$ act as a versioned value.

The evaluation of terms $t_i$ stored in a versioned value $\nvval{\overline{l_i=t_i}}$ and $\pr{t}$ is postponed until a specific version label is later specified.
To proceed with a postponed evaluation of a versioned value, we use extraction $u.l_k$. Extraction specifies one versioned label $l_k$ for the versioned value $u$ and recursively extracts the inner term $t_k$ corresponding to $l_k$ from $\nvval{l_i=t_i}$, and $t$ from $\pr{t}$ as follows.
\begin{align*}
\mathit{app}\#l_1\quad:=\quad
&
\begin{aligned}
&\clet{\mathit{mkHash'}}{\mathit{mkHash}}{
    \\&\clet{s}{
        \pr{``\mathtt{compiler.vl}"}
    }{
        \pr{\app{\mathit{mkHash'}}{s}}
    }
}.l_1
\end{aligned}\\
\longrightarrow^* \quad&\app{(\lam{s}{\textnormal{\com{make MD5 hash}}})}{``\mathtt{compiler.vl}"}\\
\longrightarrow^{\phantom{*}}\quad &\texttt{4dcb6ebe3c6520d1f57c906541cf3823}
\end{align*}

Consequently, $\mathit{app}\#l_1$ evaluates into an MD5 hash corresponding to $l_1$.
\vspace{-\baselineskip}
\subsubsection{Type of Versioned Values}
The type of a versioned value is expressed as \vertype{r}{A}, assigning a set of version labels $r$, called \textit{version resources}, to a type $A$. Intuitively, the type of a versioned value represents the versions available to that versioned value. For example, $\mathit{mkHash}$ and $\mathit{app}$ are typed as follows.
\begin{align*}
\mathit{mkHash}\,:\,\vertype{\{l_1,l_2\}}{\left(\ftype{\strtype{}}{\strtype{}}\right)}\quad
\mathit{app}\,:\,\vertype{\{l_1,l_2\}}{\left(\ftype{\strtype{}}{\strtype{}}\right)}
\end{align*}

The types have $\{l_1,l_2\}$ as their version resource, illustrating that the versioned values have definitions of $l_1$ and $l_2$.
For function application, the type system computes the intersection of the version resource of subterms.
Since the promoted term is considered to be available in all versions, the version resource of the entire function application indicates $\{l_1,l_2\} = \{l_1,l_2\} \cap \mathcal{L}$.

For extractions, the type system verifies if the version resource contains the specified version as follows.
\begin{align*}
\mathit{app}\#l_1\,:\,\ftype{\strtype{}}{\strtype{}} \quad
\mathit{app}\#{\textcolor{red}{l_3}}\,:\,\textcolor{red}{(rejected)}
\end{align*}

Assuming $\mathcal{L} = \{l_1,l_2,l_3\}$, $\mathit{app}\#{\textcolor{red}{l_3}}$ is rejected by type checking because the version resource of $\mathit{app}$ does not contain $l_3$. Conversely, $\mathit{app}\#l_1$ is well-typed, but note that the resultant type lost its version resource. It is attributed to the design principle that it could be used in other versions upon extraction.

The \corelang{} type system incorporates the notion of version consistency in addition to the standard notions of preservation and progress. Proofs of these theorems can be found in Appendix \ref{appendix:lambdavl_safety}.

\subsection{Programming with Versions in \mylang{}}
\begin{figure}[tb]
\centering
\begin{tikzpicture}[thick, node distance=0.5cm and 1cm]
\tikzset{codeblock/.style={
    outer xsep=-5pt,
    outer ysep=-7pt,
    rectangle,
    text centered,
    rounded corners,
    minimum width=5.7cm
}};

\node[codeblock] (app1) {
\begin{minipage}{.50\linewidth}
\begin{minted}[linenos,numbersep=2pt,highlightcolor=white,highlightlines={7}]{haskell}
module App where
import Dir
import Hash
main () =
  let s = getArg ()
      digest = mkHash s in
  if exists digest
    then print "Found"
    else error "Not found"
\end{minted}
\end{minipage}
};

\node[codeblock, right=of app1] (dir1) {
\begin{minipage}{.42\linewidth}
\begin{minted}{haskell}
module Dir where
-- version 1.0.0
import Hash
exists hash = 
  let fs = getFiles () in
  foldLeft
    (\(acc, f) ->
      acc || match f hash)
    false fs
\end{minted}
\end{minipage}
};

\node[codeblock, below right=of app1, yshift=-5] (hash1) {
\begin{minipage}{.42\linewidth}
\begin{minted}{haskell}
module Hash where
-- version 1.0.0
mkHash s = {- MD5 hash -}
match s hash =
    mkHash s == hash
\end{minted}
\end{minipage}
};

\node[codeblock, below=of app1, yshift=-5] (hash2) {
\begin{minipage}{.44\linewidth}
\begin{minted}{haskell}
module Hash where
-- version 2.0.0
mkHash s = {- SHA-3 hash -}
match s hash =
    mkHash s == hash
\end{minted}
\end{minipage}
};

\node[draw, fit=(dir1), label={above:Bundled \mn{Dir}}] (Dir) {};
\node[draw, fit=(hash1) (hash2), label={above:Bundled \mn{Hash}}] (Hash) {};

\draw[->] ([xshift=2]app1.east)  -- node[midway,below] {} (Dir.west);
\draw[->] ([yshift=5]app1.south) -- node[midway,above] {} ([yshift=5]hash2.north);
\draw[->] ([yshift=5]dir1.south) -- node[midway,right] {} ([yshift=5]hash1.north);

\end{tikzpicture}
\parbox[t]{.9\linewidth}{\small The versions of each external module are bundled. Programs using a bundled module can refer to the definitions of all versions of the bundled module.}

\caption{The programs in Figure \ref{fig2-1} in \mylang{}.}
\label{fig2-3}
\end{figure}
Our contributions enjoy the benefits of programming with versions on a $\lambda$-calculus-based functional language \mylang{}. To achieve this, we develop a compilation method between lambda calculus and \vlmini{}, a version-label free variant of \corelang{}, and a version inference algorithm to infer the appropriate version of expressions.

In \mylang{}, (1) all versions are available for every module, and (2) the version of each expression is determined by expression-level dependency analysis. This approach differs from existing languages that determine one version for each dependent package.
Figure \ref{fig2-3} shows how the programs in Figure \ref{fig2-1} are interpreted in \mylang{}.
The \mylang{} compiler bundles the interfaces of multiple versions and generates a cross-version interface to make external functions available in multiple versions.
The \mylang{} type system enforces version consistency in \fn{main} and selects a newer version if multiple versions are available. Thus it gives the version label $\{\mn{Hash} = 2.0.0,\,\mn{Dir} = 1.0.0\}$ to dependent expressions of \fn{main}. As a result, since \mn{Hash} version referenced from \mn{Dir} is no longer limited to 1.0.0, \texttt{exists digest} is evaluated using SHA-3 under the context of \mn{Hash} version 2.0.0.

Furthermore, \mylang{} provides \emph{version control terms} to convey the programmer's intentions of versions to the compiler. For example, to enforce the evaluation in Figure \ref{fig2-3} to MD5, a programmer can rewrite line 7 of \mn{App} as follows.

\begin{minted}[linenos,numbersep=2pt,highlightcolor=white,firstnumber=7,highlightlines={7}]{haskell}
  if @\vlkey{ver}@ [@\textcolor{blue}{\textbf{Hash}}@=1.0.0] of (exists digest)
\end{minted}
The program dictates that \texttt{exists digest} is evaluated within the context of the \mn{Hash} version 1.0.0. Consequently, both \fn{mkHash} and \fn{match}, which depend on \texttt{exists digest}, are chosen to align with version 1.0.0 of \mn{Hash}.
Moreover, \mylang{} provides \texttt{\vlkey{unversion} t}. It eliminates the dependencies associated with term \texttt{t}, facilitating its collaboration with other versions under the programmer's responsibility, all while maintaining version consistency within its subterm.
Thus, \mylang{} not only ensures version consistency but also offers the flexibility to control the version of a particular part of the program.


\section{Compilation} 
\label{compilation} 

\begin{figure}[tb]
\centering
\begin{tikzpicture}[thick]
    \tikzset{vecArrow/.style={thick, decoration={markings,mark=at position
    1 with {\arrow[semithick]{open triangle 60}}},
    double distance=1.4pt, shorten >= 5.5pt,
    preaction = {decorate},
    postaction = {draw,line width=1.4pt, white,shorten >= 4.5pt}}};
    \tikzset{innerWhite/.style={semithick, white,line width=1.4pt, shorten >= 4.5pt}};
    \tikzset{Package/.style={rectangle, fill=cyan!10, text centered, rounded corners, minimum width=1.9cm, minimum height=0.65cm}};
    \tikzset{App/.style={rectangle, text centered, minimum width=1.2cm, minimum height=0.65cm}};

    \newlength\mywidth
    \newlength\myheight
    \newlength\tempdimen
    
    \newcommand\getdimensions[1]{
        \pgfextractx{\mywidth}{\pgfpointanchor{#1}{east}}
        \pgfextractx{\tempdimen}{\pgfpointanchor{#1}{west}}
        \addtolength{\mywidth}{-\tempdimen}
        \pgfextracty{\myheight}{\pgfpointanchor{#1}{north}}
        \pgfextracty{\tempdimen}{\pgfpointanchor{#1}{south}}
        \addtolength{\myheight}{-\tempdimen}
    }

    \tikzset{
      pics/stacked rectangles/.style n args={4}{
        code={
          \def\rectangleWidth{2cm}  
          \def\rectangleHeight{1cm} 
          \pgfmathsetmacro\offsetY{(#1+1)*#2/2} 
          
          \node[draw=none, align=center] (tempRect1) at (1*#2,1*#2-\offsetY) { #3 };
          
          \getdimensions{tempRect1}
          
          \foreach \i in {#1,...,2} {
            \node[draw, fill=white, minimum width=\mywidth, minimum height=\myheight] (rect\i) at (\i*#2,\i*#2-\offsetY) {};
          }

          \node[draw, fill=white, minimum width=\mywidth, minimum height=\myheight, align=center] (rect1) at (1*#2,1*#2-\offsetY) { #3 };
    
          \node[draw=none, inner sep=0pt, fit={(rect1.south west) (rect#1.north east)}, name=#4] {};
        }
      }
    }

    \tikzset{
      pics/stackedty rectangles/.style n args={4}{
        code={
          \def\rectangleWidth{2cm}  
          \def\rectangleHeight{1cm} 
          \pgfmathsetmacro\offsetY{(#1+1)*#2/2} 
          
          \node[draw=none, align=center] (tempRect1) at (1*#2,1*#2-\offsetY) { #3 };
          
          \getdimensions{tempRect1}
          
          \foreach \i in {#1,...,2} {
            \node[draw, fill=white, minimum width=\mywidth, minimum height=\myheight, rounded corners] (rect\i) at (\i*#2,\i*#2-\offsetY) {};
          }

          \node[draw, fill=white, minimum width=\mywidth, minimum height=\myheight, align=center, rounded corners] (rect1) at (1*#2,1*#2-\offsetY) { #3 };
    
          \node[draw=none, inner sep=0pt, fit={(rect1.south west) (rect#1.north east)}, name=#4] {};
        }
      }
    }

    \tikzset{
      pics/stackedtybundled rectangles/.style n args={4}{
        code={
          \def\rectangleWidth{2cm}  
          \def\rectangleHeight{1cm} 
          \pgfmathsetmacro\offsetY{(#1+1)*#2/2} 
          
          \node[draw=none, align=center] (tempRect1) at (1*#2,1*#2-\offsetY) { #3 };
          
          \getdimensions{tempRect1}
          
          \foreach \i in {#1,...,2} {
            \node[draw, fill=white, minimum width=\mywidth, minimum height=\myheight, rounded corners, line width=2pt] (rect\i) at (\i*#2,\i*#2-\offsetY) {};
          }

          \node[draw, fill=white, minimum width=\mywidth, minimum height=\myheight, align=center, rounded corners, line width=2pt] (rect1) at (1*#2,1*#2-\offsetY) { #3 };
    
          \node[draw=none, inner sep=0pt, fit={(rect1.south west) (rect#1.north east)}, name=#4] {};
        }
      }
    }

    \pic at (0,0) {stacked rectangles={4}{0.1cm}{\mylang{}\\program}{vl1}};
    \pic at (3.3,0) {stacked rectangles={4}{0.1cm}{\vlmini{}\\program}{vlmini1}};
    \pic at (6.5,0) {stackedty rectangles={4}{0.1cm}{\vlmini{}\\interface}{vlmini2}};
    \pic at (9.7,1.4) {stackedtybundled rectangles={3}{0.12cm}{\vphantom{b}\hspace{3em}}{vlmini4}};
    \node[draw,App, align=center, rounded corners, line width=2pt] at (10,0) (vlmini3) {\vlmini{}\\interface\\(bundled)};
    \coordinate (topofinference) at ($(vlmini1.east)!0.5!(vlmini2.west)$);
    \coordinate (botofinference) at ($(vlmini1.east)!0.5!(vlmini2.west) + (0, -1.75)$);
    \coordinate (topofbundling) at ($(vlmini2.east)!0.5!(vlmini3.west)$);
    \coordinate (botofbundling) at ($(vlmini2.east)!0.5!(vlmini3.west) + (0, -1.75)$);
    \node[draw,App,double] at (10,-1.5) (constraints) {Constraints};
    \fill (topofinference) circle (2.5pt);
    \fill (topofbundling) circle (2.5pt);
    
    \draw[-latex] (vl1.east) to [yshift=-5]node[midway,below,align=center] {\footnotesize{Girard's}\\\footnotesize{translation}\\\footnotesize{(version-wise)}} (vlmini1.west);
    \draw[-latex] (vlmini1.east) to node[midway,below] {} (vlmini2.west);
    \draw[-latex] (vlmini1.east) to [yshift=-5]node[midway,below,align=center] (inference){\footnotesize{Type}\\\footnotesize{inference}} (vlmini2.west);
    \draw[-latex] (vlmini2.east) to [yshift=-5]node[midway,below,align=center] (bundling) {\footnotesize{Bundling}} (vlmini3.west);
    \draw[-latex] (vlmini4.west) -| node[near start,below,align=center] {\footnotesize{Import modules}} ([yshift=2]topofinference);
    \draw[-latex] (inference.south) |- (constraints.west);
    \draw[-latex] (bundling.south) |- (constraints.west);
\end{tikzpicture}
\caption{The translation phases for a single module with multiple versions.}
\label{fig:translationoverview}
\end{figure}
The entire translation consists of three parts: (1) \emph{Girard's translation}, (2) an \emph{algorithmic type inference}, and (3) \emph{bundling}.
Figure \ref{fig:translationoverview} shows the translation process of a single module. First, through Girard's translation, each version of the \mylang{} program undergoes a version-wise translation into the \vlmini{} program. 
Second, the type inference synthesizes types and constraints for top-level symbols. Variables imported from external modules reference the bundled interface generated in the subsequent step.
Finally, to make the external variables act as multi-version expressions, bundling consolidates each version's interface into one \vlmini{} interface.
These translations are carried out in order from downstream of the dependency tree.
By resolving all constraints up to the main module, the appropriate version for every external variable is determined.

It is essential to note that the translations focus on generating constraints for dispatching external variables into version-specific code. While implementing versioned records in \corelang{} presents challenges, such as handling many version labels and their code clones, our method is a constraint-based approach in \vlmini{} that enables static inference of version labels without their explicit declaration.

In the context of coeffect languages, constraint generation in \mylang{} can be seen as the automatic generation of type declarations paired with resource constraints.
Granule\cite{Orchard:2019:Granule} can handle various resources as coeffects, but it requires type declarations to indicate resource constraints. \mylang{} restricts its resources solely to the version label set. This specialization enables the automatic collection of version information from external sources outside the codebase.

\subsection{An Intermediate Language, \vlmini{}}
\subsubsection{Syntax of \vlmini{}}
\begin{figure}[tb]
    \hrule
    \medskip
    \begin{minipage}{.9\textwidth}
        \textbf{\vlmini{} syntax (w/o data constructors and version control terms)}
    \end{minipage}
    \begin{minipage}{\textwidth}
        \vspace{-.8\baselineskip}
        \begin{minipage}{.475\textwidth}
          \begin{align*}
            t & ::= n \mid x \mid \app{t_1}{t_2} \mid \lam{p}{t} \mid \pr{t} \tag{terms}\\
            p & ::= x \mid [x] \tag{patterns}\\
            A, B & ::= \inttype{} \mid \alpha \mid \ftype{A}{B} \mid \vertype{r}{A} \tag{types}\\
            \kappa & ::= \typekind{} \mid \labelskind{} \tag{kinds}
          \end{align*}
        \end{minipage}
        \begin{minipage}{.475\textwidth}
          \begin{align*}
            \Gamma & ::= \emptyset \mid \Gamma, x:A \mid \Gamma, x:\verctype{A}{r} \tag{contexts}\\
            \Sigma & ::= \emptyset \mid \Sigma, \alpha:\kappa \tag{type variable kinds}\\
            R      & ::= - \mid r \tag{resource contexts}\\
          \end{align*}
        \end{minipage}
    \end{minipage}
    \begin{minipage}{.9\textwidth}
        \medskip\textbf{Extended with data constructors}
    \end{minipage}
    \begin{minipage}{\textwidth}
        \vspace{-.5\baselineskip}
        \begin{minipage}{.52\textwidth}
          \begin{align*}
            t & ::= \ldots \mid C\,\overline{t_i} \mid \caseof{t}{\overline{p_i \mapsto t_i}}\tag{terms}\\
            p & ::= \ldots \mid C\,\overline{p_i} \tag{patterns}\\
            C & ::= (,) \mid [,] \tag{constructors}
          \end{align*}
        \end{minipage}
        \begin{minipage}{.44\textwidth}
          \begin{align*}
            A, B & ::= ... \mid K \overline{A_i} \tag{types}\\
            K    & ::= (,) \mid [,] \tag{type constructors}\\
          \end{align*}
        \end{minipage}
    \end{minipage}
    \begin{minipage}{.9\textwidth}
        \medskip\textbf{Extended with version control terms}
    \end{minipage}
    \begin{minipage}{\textwidth}
        \vspace{-.5\baselineskip}
        \begin{minipage}{\textwidth}
          \begin{align*}
            t & ::= \ldots \mid \verof{\{\overline{M_i=V_i}\}}{t} \mid \unver{t}\hspace{6em}\tag{terms}
          \end{align*}
        \end{minipage}
    \end{minipage}
    \begin{minipage}{.9\textwidth}
        \medskip\textbf{\vlmini{} constraints}
    \end{minipage}
    \begin{minipage}{\textwidth}
      \vspace{-.5\baselineskip}
      \begin{align*}
        \mathcal{C} & ::= \underbrace{\top \mid \mathcal{C}_1 \land \mathcal{C}_2 \mid \mathcal{C}_1 \lor \mathcal{C}_2}_{\substack{\text{propositional}\\\text{formulae}}} \mid \underbrace{\vphantom{\mid}\alpha \preceq \alpha'}_{\substack{\text{variable}\\\text{dependencies}}} \mid \underbrace{\vphantom{\mid}\alpha \preceq \mathcal{D}}_{\substack{\text{label}\\\text{dependencies}}}\tag{dependency 
 constraints}\\
        \mathcal{D} & ::= \cs{\overline{M_i = V_i}} \tag{dependent labels}\\
        \Theta      & ::= \top \mid \Theta_1 \land \Theta_2 \mid \{A \sim B\} \tag{type constraints}
      \end{align*}
    \end{minipage}
  \medskip
  \hrule
  \caption{The syntax of \vlmini{}.}
  \label{syntax:vlmini}
\end{figure}
Figure \ref{syntax:vlmini} shows the syntax of \vlmini{}.
\vlmini{} encompasses all the terms in \corelang{} except for versioned records $\nvval{l_i=t_i}$, intermediate term $\ivval{\overline{l_i=t_i}}{l_k}$, and extractions $t.l_k$. As a result, its terms are analogous to those in \lrpcf{}\cite{brunel_core_2014} and GrMini\cite{Orchard:2019:Granule}. However, \vlmini{} is specialized to treat version resources as coeffects.
We also introduce data constructors by introduction $C\,t_1,...,t_n$ and elimination $\caseof{t}{\overline{p_i \mapsto t_i}}$ for lists and pairs, and version control terms \unver{t} and \verof{\{\overline{M_i=V_i}\}}{t}. 
Here, contextual-let in \corelang{} is a syntax sugar of lambda abstraction applied to a promoted pattern.
\begin{align*}
\clet{x}{t_1}{t_2} \triangleq \app{(\lam{\pr{x}}{t_2})}{t_1}
\end{align*}

Types, version labels, and version resources are almost the same as \corelang{}.
Type constructors are also added to the type in response to the \vlmini{} term having a data constructor.
The remaining difference from \corelang{} is type variables $\alpha$. Since \vlmini{} is a monomorphic language, type variables act as unification variables; type variables are introduced during the type inference and are expected to be either concrete types or a set of version labels as a result of constraint resolution.
To distinguish those two kinds of type variables, we introduce kinds $\kappa$.
The kind \labelskind{} is given to type variables that can take a set of labels $\{\overline{l_i}\}$ and is used to distinguish them from those of kind \typekind{} during algorithmic type inference.

\subsubsection{Constraints}
The lower part of Figure \ref{syntax:vlmini} shows constraints generated through bundling and type inference.
Dependency constraints comprise \emph{variable dependencies} and \emph{label dependencies} in addition to propositional formulae.
Variable dependencies $\alpha \sqsubseteq \alpha'$ require that if a version label for $\alpha'$ expects a specific version for a module, then $\alpha$ also expects the same version.
Similarly, label dependencies $\alpha \preceq \cs{\overline{M_i = V_i}}$ require that a version label expected for $\alpha$ must be $V_i$ for $M_i$. For example, assuming that versions $1.0.0$ and $2.0.0$ exist for both modules \mn{A} and \mn{B}, the minimal upper bound set of version labels satisfying $\alpha \preceq \cs{\mn{A}\mapsto 1.0.0}$ is $\alpha = \{\{\mn{A}=1.0.0,\mn{B}=1.0.0\},\{\mn{A}=1.0.0,\mn{B}=2.0.0\}\}$. If the constraint resolution is successful, $\alpha$ will be specialized with either of two labels.
$\Theta$ is a set of type equations resolved by the type unification.

\subsection{Girard's Translation for \vlmini{}}
\label{sec:VLMini}
We extend Girard's translation between \mylang{} (lambda calculus) to \vlmini{} following Orchard's approach~\cite{Orchard:2019:Granule}.
\begin{align*}
\llbracket n \rrbracket \equiv n \qquad
\llbracket x \rrbracket \equiv x \qquad
\llbracket \lam{x}{t} \rrbracket \equiv \lam{[x]}{\llbracket t \rrbracket} \qquad
\llbracket t\ s\rrbracket \equiv \llbracket t\rrbracket\ [ \llbracket s \rrbracket ]
\end{align*}

The translation replaces lambda abstractions and function applications of \mylang{} by lambda abstraction with promoted pattern and promotion of \vlmini{}, respectively. From the aspect of types, this translation replaces all occurrences of $\ftype{A}{B}$ with $\ftype{\vertype{r}{A}}{B}$ with a version resource $r$.
This translation inserts a syntactic annotation $[*]$ at each location where a version resource needs to be addressed. Subsequent type inference will compute the resource at the specified location and produce constraints to ensure version consistency at that point.

The original Girard's translation~\cite{girardlinear1987} is well-known as a translation between the simply-typed $\lambda $-calculus and an intuitionistic linear calculus. The approach involves replacing every intuitionistic arrow $A \rightarrow B$ with $!A \multimap B$, and subsequently unboxing via let-in abstraction and promoting during application \cite{Orchard:2019:Granule}.

\subsection{Bundling}
\label{sec:bundling}
Bundling produces an interface encompassing types and versions from every module version, allowing top-level symbols to act as multi-version expressions. During this process, bundling reviews interfaces from across module versions, identifies symbols with the same names and types after removing $\square_r$ using Girard's transformation, and treats them as multiple versions of a singular symbol (also discussed in Section \ref{sec:typeincompatibilities}). In a constraint-based approach, bundling integrates label dependencies derived from module versions, ensuring they align with the version information in the typing rule for versioned records of \corelang{}.

For example, assuming that the $\mathit{id}$ that takes an \inttype{} value as an argument is available in version 1.0.0 and 2.0.0 of \mn{M} as follows:
\begin{align*}
\mathit{id} &: \vertype{\alpha_1}{(\ftype{\vertype{\alpha_2}{\inttype}}{\inttype})}\ |\ \mathcal{C}_1 \tag{\textnormal{version 1.0.0}}\\
\mathit{id} &: \vertype{\beta_1}{(\ftype{\vertype{\beta_2}{\inttype}}{\inttype})}\ |\ \mathcal{C}_2 \tag{\textnormal{version 2.0.0}}
\end{align*}
where $\alpha_1$ and $\alpha_2$ are version resource variables given from type inference. They capture the version resources of $\mathit{id}$ and its argument value in version 1.0.0. $\mathcal{C}_1$ is the constraints that resource variables of version 1.0.0 will satisfy. Likewise for $\beta_1$, $\beta_2$, and $\mathcal{C}_2$.
Since the types of $\mathit{id}$ in both versions become $\ftype{\inttype}{\inttype}$ via Girard's translation, they can be bundled as follows:
\begin{align*}
\mathit{id} : \vertype{\gamma_1}{(\ftype{\vertype{\gamma_2}{\textsf{Int}}}{\textsf{Int}})}\ |\
\mathcal{C}_1 \land \mathcal{C}_2 \land \Bigl(\ 
     &(\gamma_1 \preceq \cs{\mn{M} = 1.0.0} \land \gamma_1 \preceq \alpha_1 \land \gamma_2 \preceq \alpha_2)\\
\lor\ &(\gamma_1 \preceq \cs{\mn{M} = 2.0.0} \land \gamma_1 \preceq \beta_1 \land \gamma_2 \preceq \beta_2)\ \Bigr)
\end{align*}
where $\gamma_1$ and $\gamma_2$ are introduced by this conversion for the bundled $id$ interface, with label and variable dependencies that they will satisfy.
$\gamma_1$ captures the version resource of the bundled $\mathit{id}$. The generated label dependencies $\gamma_1 \preceq \cs{\mn{M} = 1.0.0}$ and $\gamma_1 \preceq \cs{\mn{M} = 2.0.0}$ indicate that $\mathit{id}$ is available in either version 1.0.0 or 2.0.0 of \mn{M}.
These label dependencies are exclusively\footnote{In the type checking rules for $\verof{l}{t}$, type inference exceptionally generates label dependencies. Please see Appendix \ref{appendix:vlmini_version_control_terms}} generated during bundling.
The other new variable dependencies indicate that the $\mathit{id}$ bundled interface depends on one of the two version interfaces. The dependency is made apparent by pairing the new resource variables with their respective version resource variable for each version.
These constraints are retained globally, and the type definition of the bundled interface is used for type-checking modules importing $\mathit{id}$.

\section{Algorithmic Type Inference}
\label{inference}
\begin{figure*}[t]
  \centering
  \hrule
  \smallskip
  \begin{tabular}{c}
    \begin{minipage}{.9\linewidth{}}
        \textbf{\vlmini{} pattern type synthesis \ \ \ 
        \fbox{\ensuremath{\Sigma, R\,\vdash\, p : A \rhd \Gamma; \Sigma'; \Theta; \mathcal{C}}}}
    \end{minipage}
    \smallskip\\
    \begin{minipage}{.475\linewidth}
      \infrule[pVar]{
        \\
      }{
        \Sigma; - \vdash x : A \rhd x:A; \Sigma; \top; \top
      }
    \end{minipage}
    \begin{minipage}{.475\linewidth}
      \infrule[\mbox{[}pVar\mbox{]}]{
        \\
      }{
        \Sigma; r \vdash x : A \rhd x:\verctype{A}{r}; \Sigma; \top; \top
      }
    \end{minipage}
    \smallskip\\
    \begin{minipage}{.85\linewidth}
      \infrule[p$\square$]{
        \Sigma, \alpha:\textsf{Labels}, \beta:\textsf{Type}; \alpha \vdash x : \beta \rhd \Delta; \Sigma'; \Theta; \mathcal{C}
      }{
        \Sigma; - \vdash [x] : A \rhd \Delta; \Sigma'; \Theta \land \{A \sim \vertype{\alpha}{\beta}\}; \mathcal{C}
      }
    \end{minipage}
  \end{tabular}
  \smallskip
  \hrule
  \smallskip
  \begin{tabular}{c}
    \begin{minipage}{.9\linewidth{}}
        \textbf{\vlmini{} type synthesis (excerpt) \ \ \ \fbox{\ensuremath{\Sigma;\Gamma \vdash t \Rightarrow A;\Sigma';\Theta; \mathcal{C}}}}
    \end{minipage}
    \smallskip\\
    \begin{minipage}{.45\linewidth}
      \infrule[$\Rightarrow_{\textsc{lin}}$]{
        x:A\in\Gamma
      }{
        \Sigma; \Gamma \vdash x \Rightarrow A; \Sigma; x:A; \top; \top
      }
    \end{minipage}
    \begin{minipage}{.45\linewidth}
      \infrule[$\Rightarrow_{\textsc{gr}}$]{
        x:\verctype{A}{r}\in\Gamma
      }{
        \Sigma; \Gamma \vdash x \Rightarrow A; \Sigma; x:\verctype{A}{1}; \top; \top
      }
    \end{minipage}
    \medskip\\
    \begin{minipage}{.95\linewidth}
      \infrule[$\Rightarrow_{\textsc{abs}}$]{
        \Sigma_1, \alpha:\textsf{Type};- \vdash p:\alpha \rhd \Gamma'; \Sigma_2; \Theta_1
        \andalso
        \Sigma_2;\Gamma,\Gamma' \vdash t \Rightarrow B;\Sigma_3;\Delta; \Theta_2; \mathcal{C}
      }{
        \Sigma_1;\Gamma \vdash \lam{p}{t} \Rightarrow \ftype{\alpha}{B};\Sigma_3;\Delta\backslash\Gamma' ; \Theta_1\land\Theta_2; \mathcal{C}
      }
    \end{minipage}
    \medskip\\
    \begin{minipage}{.95\linewidth}
      \infrule[$\Rightarrow_{\textsc{pr}}$]{
        \Sigma_1 \vdash [\Gamma\cap\textsf{FV}(t)]_{\textsf{Labels}} \rhd \Gamma'
        \andalso
        \Sigma_1; \Gamma' \vdash t \Rightarrow A; \Sigma_2; \Delta; \Theta; \mathcal{C}_1
        \andalso\\
        \Sigma_3 = \Sigma_2, \alpha:\textsf{Labels}
        \andalso
        \Sigma_3 \vdash \alpha \sqsubseteq_c \Gamma' \rhd \mathcal{C}_2
      }{
        \Sigma_1;\Gamma \vdash [t] \Rightarrow \vertype{\alpha}{A}; \Sigma_3; \alpha \cdot \Delta ;\Theta; \mathcal{C}_1 \land \mathcal{C}_2
      }
    \end{minipage}
  \end{tabular}
  \smallskip
  \hrule
  \smallskip
  \begin{tabular}{c}
    \begin{minipage}{.9\linewidth{}}
        \textbf{\vlmini{} constraints generation \ \ \ 
        \fbox{\ensuremath{\Sigma \,\vdash\, \alpha \sqsubseteq_{c} \Gamma \rhd \mathcal{C}}}}
    \end{minipage}
    \smallskip\\
    \begin{minipage}{.35\linewidth}
      \infrule[$\emptyset$]{
        \\
      }{
        \Sigma \,\vdash\, \alpha \sqsubseteq_{c} \emptyset \rhd \top
      }
    \end{minipage}
    \begin{minipage}{.60\linewidth}
      \infrule[$\alpha$]{
        \Sigma \,\vdash\, \alpha \sqsubseteq_{c} \Gamma \rhd \mathcal{C}
      }{
        \Sigma \,\vdash\, \alpha \sqsubseteq_{c} (x:[A]_r, \Gamma) \rhd (\alpha \preceq r \land \mathcal{C})
      }
    \end{minipage}
  \end{tabular}
  \smallskip
  \hrule
  \caption{\vlmini{} algorithmic typing.}
  \label{fig:rule_algorithmic_typing}
\end{figure*}
We develop the algorithmic type inference for \vlmini{} derived from the declarative type system of \corelang{}~\cite{Tanabe:2018:CPA:3242921.3242923,Tanabe_2021}.
The type inference consists of two judgments: \emph{type synthesis} and \emph{pattern type synthesis}.
The judgment forms are similar to Gr~\cite{Orchard:2019:Granule}, which is similarly based on coeffect calculus.
While Gr provides type-checking rules in a bidirectional approach~\cite{10.1145/2544174.2500582,10.1145/3290322} to describe resource constraint annotations and performs unifications inside the type inference, \vlmini{} only provides synthesis rules and unification performs after the type inference.
In addition, Gr supports user-defined data types and multiple computational resources, while \vlmini{} supports only built-in data structures and specializes in version resources.
The inference system is developed to be sound for declarative typing in \corelang{}, with the proof detailed in Appendix \ref{appendix:vlmini_safety}.

Type synthesis takes type variable kinds $\Sigma$, a typing context $\Gamma$ of term variables, and a term $t$ as inputs. Type variable kinds $\Sigma$ are added to account for distinct unification variables for types and version resources.
The synthesis produces as outputs a type $A$, type variable kinds $\Sigma'$, type constraints $\Theta$, and dependency constraints $\mathcal{C}$.
The type variable kinds $\Sigma$ and $\Sigma'$ always satisfy $\Sigma \subseteq \Sigma'$ due to the additional type variables added in this phase.

Pattern type synthesis takes a pattern $p$, type variable kinds $\Sigma$, and resource environment $R$ as inputs. It synthesizes outputs, including typing context $\Gamma$, type variable kinds $\Sigma'$, and type and dependency constraints $\Theta$ and $\mathcal{C}$.
Pattern type synthesis appears in the inference rules for $\lambda$-abstractions and case expressions. It generates a typing context from the input pattern $p$ for typing $\lambda$-bodies and branch expressions in case statements.
When checking a nested promoted pattern, the resource context $R$ captures version resources inside a pattern.

\subsection{Pattern Type Synthesis}
Pattern type synthesis conveys the version resources captured by promoted patterns to the output typing context. The rules are classified into two categories, whether or not it has resources in the input resource context $R$. The base rules are \textsc{pVar}, \textsc{p}$\Box$, while the other rules are resource-aware versions of the corresponding rules. The resource-aware rules assume they are triggered within the promoted pattern and collect version resource $r$ in the resource context.

The rules for variables \textsc{pVar} and \textsc{[pVar]} differ in whether the variable pattern occurs within a promoted pattern. \textsc{pVar} has no resources in the resource context because the original pattern is not inside a promoted pattern. Therefore, this pattern produces typing context $x:A$. \textsc{[pVar]} is for a variable pattern within the promoted pattern, and a resource $r$ is recorded in the resource context. The rule assigns the collected resource $r$ to the type $A$ and outputs it as a versioned assumption $x:\verctype{A}{r}$.

The rules for promoted patterns \textsc{p}$\square$
propagate version resources to the subpattern synthesis. The input type $A$ is expected to be a versioned type, so the rule generates the fresh type variables $\alpha$ and $\beta$, then performs the subpattern synthesis considering $A$ as $\vertype{\alpha}{\beta}$. Here, the resource $\alpha$ captured by the promoted pattern is recorded in the resource context. Finally, the rule unifies $A$ and $\vertype{\alpha}{\beta}$ and produces the type constraints $\Theta'$ for type refinement.

\subsection{Type Synthesis}
The algorithmic typing rules for \vlmini{}, derived from declarative typing rules for \corelang{}, are listed in Figure \ref{fig:rule_algorithmic_typing}. We explain a few important rules in excerpts.

The rule $\Rightarrow_{\textsc{abs}}$ generates a type variable $\alpha$, along with the binding pattern $p$ of the $\lambda$-abstraction generating the typing context $\Gamma'$. Then the rule synthesizes a type $B$ for the $\lambda$-body under $\Gamma'$, and the resulting type of the $\lambda$-abstraction is $\alpha \rightarrow B$ with the tentatively generated $\alpha$.
With the syntax sugar, the type rules of the contextual-let are integrated into $\Rightarrow_{\textsc{abs}}$.
Instead, $\lambda$-abstraction does not just bind a single variable but is generalized to pattern matching, which leverages pattern typing, as extended by promoted patterns and data constructors. 

The rule $\Rightarrow_{\textsc{pr}}$ is the only rule that introduces constraints in the entire type inference algorithm.
This rule intuitively infers consistent version resources for the typing context $\Gamma$. Since we implicitly allow for weakening, we generate a constraint from $\Gamma'$ that contains only the free variables in $t$, produced by \emph{context grading} denoted as $[\Gamma]_\textsf{Labels}$.
Context grading converts all assumptions in the input environment into versioned assumptions by assigning the empty set
for the assumption with no version resource.

Finally, the rule generates constraints from $\Gamma'$ and a fresh type variable $\alpha$ by constraints generation defined in the lower part of Figure \ref{fig:rule_algorithmic_typing}.
The rules assert that the input type variable $\alpha$ is a subset of all the resources of the versioned assumptions in the input environment $\Gamma$. The following judgment is the simplest example triggered by the type synthesis of $\pr{\app{f}{x}}$.
\begin{align*}
r:\labelskind,s:\labelskind \,\vdash\, \alpha \sqsubseteq_{c} f:\verctype{\ftype{\inttype}{\inttype}}{r}, x:\verctype{\inttype}{s} \rhd \alpha \preceq r \land \alpha \preceq s
\end{align*}
The inputs are type variable $\alpha$ and the type environment ($f:\verctype{\ftype{\inttype}{\inttype}}{r}, x:\verctype{\inttype}{s}$). In this case, the rules generate variable dependencies for $r$ and $s$, each resource of the assumptions, and return a constraint combined with $\land$.

\subsection{Extensions}
\subsubsection{Version Control Terms}
The rule for $\verof{l}{t}$ uses the same trick as ($\Rightarrow_\textsc{pr}$), and generates label dependencies from the input environment $\Gamma$ to $\cs{l}$. Since $\verof{l}{t}$ only instructs the type inference system, the resulting type is the same as $t$.
$\unver{t}$ removes the version resource from the type of $t$, which is assumed to be a versioned value. We extend Girard's translation so that $t$ is always a versioned value.
Since a new resource variable is given to the term by the promotion outside of \textbf{unversion}, the inference system guarantees the version consistency inside and outside the boundary of \textbf{unversion}.
The list of the rules is provided in Appendix \ref{appendix:vlmini_version_control_terms}.

\subsubsection{Data Structures}
To support data structures, Hughes et al. suggest that coeffectful data types are required to consider the interaction between the resources inside and outside the constructor~\cite{EPTCS353.6}. They introduce the derivation algorithm for \emph{push} and \emph{pull} for an arbitrary type constructor $K$ to address this.

\begin{center}
\begin{minted}{haskell}
push : @$\forall$@{a b: Type, r: Labels}. (a,b)[r] -> (a[r],b[r])
push [(x, y)] = ([x], [y])
pull : @$\forall$@{a b: Type, m n: Labels}. (a[n],b[m]) -> (a,b)[n@$\sqcap$@m]
pull ([x], [y]) = [(x, y)]
\end{minted}
\end{center}

Following their approach, we developed inference rules for pairs and lists.
When a data structure value $p$ is applied to a function $f$, the function application $\app{f}{p}$ is implicitly interpreted as $\app{f}{(\app{pull}{p})}$. As a dual, a pattern match for a data structure value $\caseof{p}{\overline{p_i \mapsto t_i}}$ is interpreted as $\caseof{(\app{push}{p})}{\overline{p_i \mapsto t_i}}$.
Appendix \ref{appendix:vlmini_data_structures} provides the complete set of extended rules.
 

\section{Implementation}
\label{implementation} 


We implement the \mylang{} compiler\footnote{\url{https://github.com/yudaitnb/vl}} on GHC (v9.2.4) with haskell-src-exts\footnote{\url{https://hackage.haskell.org/package/haskell-src-exts}} as its parser with an extension of versioned control terms, and z3~\cite{10.1007/978-3-540-78800-3_24} as its constraint solver.
The \mylang{} compiler performs the code generation by compiling \vlmini{} programs back into $\lambda$-calculus via Girard's translation and then translating them into Haskell ASTs using the version in the result version labels.

\paragraph{\textbf{Ad-hoc Version Polymorphism via Duplication}}
\label{sec:adhocversionpolymorphism}
The \mylang{} compiler replicates external variables to assign individual versions to homonymous external variables.
Duplication is performed before type checking of individual versions and renames every external variable along with the type and constraint environments generated from the import declarations.
Such ad hoc conversions are necessary because \vlmini{} is monomorphic, and the type inference of \vlmini{} generates constraints by referring only to the variable's name in the type environment.
Therefore, assigning different versions to homonymous variables requires manual renaming in the preliminary step of the type inference.
A further discussion on version polymorphism can be found in Section \ref{sec:fullversionpolymorphism}.

\paragraph{\textbf{Constraints Solving with z3}}
We use sbv\footnote{\url{https://hackage.haskell.org/package/sbv-9.0}} as the binding of z3.
The sbv library internally converts constraints into SMT-LIB2 scripts~\cite{barrett2010smt} and supplies it to z3.
Dependency constraints are represented as vectors of symbolic integers, where the length of the vector equals the number of external modules, and the elements are unique integers signifying each module's version number. Constraint resolution identifies the expected vectors for symbolic variables, corresponding to the label on which external identifiers in \mylang{} should depend. If more than one label satisfies the constraints, the default action is to select a newer one.
 
\section{Case Study and Evaluation}
\label{casestudy}
\begin{table*}[tbp]
  \centering
  \begin{tabular}{r|ccc}
    version & \texttt{join} & \texttt{vjoin} & \texttt{udot}, \texttt{sortVector}, \texttt{roundVector}\\ \hline
    $<$ 0.15     & available  & undefined & undefined\\
    $\geq$ 0.16  & deleted & available & available\\
  \end{tabular}
  \caption{Availability of functions in hmatrix before and after tha update.}
  \label{table:join}
\end{table*}
\subsection{Case Study}
We demonstrate that \mylang{} programming achieves the two benefits of programming with versions. 
The case study simulated the incompatibility of hmatrix,\footnote{\url{https://github.com/haskell-numerics/hmatrix/blob/master/packages/base/CHANGELOG}} a popular Haskell library for numeric linear algebra and matrix computations, in the VL module \mn{Matrix}.
This simulation involved updating the applications \mn{Main} depending on \mn{Matrix} to reflect incompatible changes.

Table \ref{table:join} shows the changes introduced in version 0.16 of hmatrix. Before version 0.15, hmatrix provided a \texttt{join} function for concatenating multiple vectors.
The update from version 0.15 to 0.16 replaced \texttt{join} with \texttt{vjoin}. Moreover, several new functions were introduced.
We implement two versions of \mn{Matrix} to simulate backward incompatible changes in \mylang{}.
Also, due to the absence of user-defined types in \mylang{}, we represent \texttt{Vector a} and \texttt{Matrix a} as \texttt{List Int} and \texttt{List (List Int)} respectively, using \mn{List}, a partial port of \texttt{Data.List} from the Haskell standard library.

\begin{figure}[t]

\begin{minipage}{.5\textwidth}
\begin{minted}{haskell}
module Main where
import Matrix
import List
main = let
  vec = [2, 1]
  sorted = sortVector vec
  m22 = join -- [[1,2],[2,1]]
          (singleton sorted)
          (singleton vec)
  in determinant m22
-- error: version inconsistent
\end{minted}
\end{minipage}
\begin{minipage}{.5\textwidth}
\begin{minted}{haskell}
module Main where
import Matrix
import List
main = let
  vec = [2, 1]
  sorted = @\vlkey{unversion}@
             (sortVector vec)
  m22 = join -- [[1,2],[2,1]]
          (singleton sorted)
          (singleton vec)
  in determinant m22 -- ->* -3
\end{minted}
\end{minipage}
\caption{Snippets of \texttt{Main} before (left) and after (right) rewriting.}
\label{fig6-5}

\end{figure}
We implement \mn{Main} working with two conflicting versions of \mn{Matrix}. The left side of Figure \ref{fig6-5} shows a snippet of \mn{Main} in the process of updating \mn{Matrix} from version 0.15.0 to 0.16.0. \fn{main} uses functions from both versions of \mn{Matrix} together: \fn{join} and \fn{sortVector} are available only in version 0.15.0 and 0.16.0 respectively, hence \mn{Main} has conflicting dependencies on both versions of \mn{Matrix}. Therefore, it will be impossible to successfully build this program in existing languages unless the developer gives up using either \fn{join} or \fn{sortVector}.

\begin{itemize}
\item \textbf{Detecting Inconsistent Version}:
\mylang{} can accept \mn{Main} in two stages. First, the compiler flags a version inconsistency error.
It is unclear which \mn{Matrix} version the \fn{main} function depends on as \fn{join} requires version 0.15.0 while \fn{sortVector} requires version 0.16.0.
The error prevents using such incompatible version combinations, which are not allowed in a single expression.

\item \textbf{Simultaneous Use of Multiple Versions}:
In this case, using \fn{join} and \fn{sortVector} simultaneously is acceptable, as their return values are vectors and matrices. Therefore, we apply \texttt{\vlkey{unversion} t} for $t$ to collaborate with other versions.
The right side of Figure \ref{fig6-5} shows a rewritten snippet of \mn{Main}, where \texttt{sortVector vec} is replaced by \texttt{\vlkey{unversion} (sortVector vec)}. Assuming we avoid using programs that depend on a specific version elsewhere in the program, we can successfully compile and execute \fn{main}.
\end{itemize}

\subsection{Scalability of Constraint Resolution}
\begin{figure}[tbp]
    \centering
    \includegraphics[height=6.5cm]{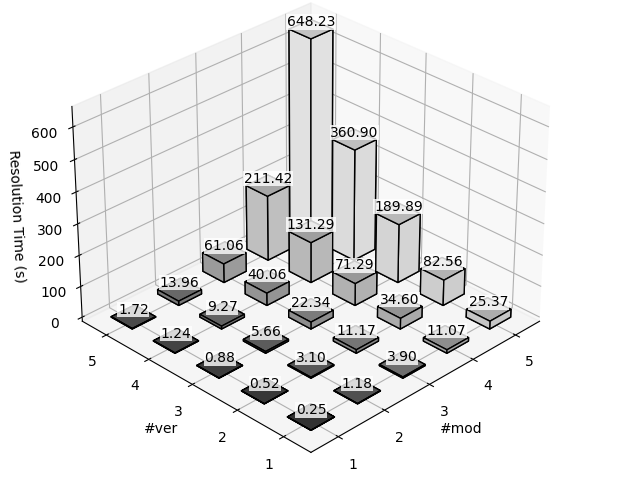}
    \caption{Constraint resolution time for the duplicated \mn{List} by \texttt{\#mod} $\times$ \texttt{\#ver}.}
    \label{fig:consres}
\end{figure}

We conducted experiments on the constraint resolution time of the \mylang{} compiler. In the experiment, we duplicated a \mylang{} module, renaming it to \texttt{\#mod} like \mn{List\_i}, and imported each module sequentially. Every module had the same number of versions, denoted as \texttt{\#ver}. Each module version was implemented identically to \mn{List}, with top-level symbols distinguished by the module name, such as \fn{concat\_List\_i}. The experiments were performed ten times on a Ryzen 9 7950X running Ubuntu 22.04, with \texttt{\#mod} and \texttt{\#ver} ranging from 1 to 5.

Figure \ref{fig:consres} shows the average constraint resolution time. 
The data suggests that the resolution time increases polynomially (at least square) for both \texttt{\#mod} and \texttt{\#ver}.
Several issues in the current implementation contribute to this inefficiency:
First, we employ sbv as a z3 interface, generating numerous redundant variables in the SMT-Lib2 script.
For instance, in a code comprising 2600 LOC (with $\texttt{\#mod} =5$ and $\texttt{\#ver} =5$), the \mylang{} compiler produces 6090 version resource variables and the sbv library creates SMT-Lib2 scripts with approximately 210,000 intermediate symbolic variables.
Second, z3 solves versions for all AST nodes, whereas the compiler's main focus should be on external variables and the subterms of \texttt{\vlkey{unversion}}.
Third, the current \mylang{} nests the constraint network, combined with $\lor$, \texttt{\#mod} times at each bundling. This approach results in an overly complex constraint network for standard programs.
Hence, to accelerate constraint solving, we can develop a more efficient constraint compiler for SMT-Lib2 scripts, implement preprocess to reduce constraints, and employ a greedy constraint resolution for each module.

\section{Related Work, Future Work, and Conclusion} 
\label{conclusion} 

\subsubsection{Managing Dependency Hell}
Mainstream techniques for addressing dependency hell stand in stark contrast to our approach, which seeks to manage dependencies at a finer granularity.
\emph{Container}~\cite{container:docker} encapsulates each application with all its dependencies in an isolated environment, a container, facilitating multiple library versions to coexist on one physical machine. However, it does not handle internal dependencies within the container.
\emph{Monorepository}~\cite{monorepo:google,monorepo:facebook} versions logically distinct libraries within a single repository, allowing updates across multiple libraries with one commit. It eases testing and bug finding but can lower the system modularity.

\subsubsection{Toward a Language Considering Compatibility}
The next step in this research is to embed compatibility tracking within the language system. The current \mylang{} considers different version labels incompatible unless a programmer uses \texttt{\vlkey{unversion}}. Since many updates maintain backward compatibility and change only minor parts of the previous version, the existing type system is overly restrictive.

To illustrate, consider Figure \ref{fig2-3} again with more version history. The module \mn{Hash} uses the MD5 algorithm for \fn{mkHash} and \fn{match} in the 1.x.x series. However, it adopts the SHA-3 algorithm in version 2.0.0, leaving other functions the same. The hash by \fn{mkHash} version 1.0.1 (an MD5 hash) aligns with any MD5 hash from the 1.x.x series. Therefore, we know that comparing the hash using \fn{match} version 1.0.0 is appropriate. However, the current \mylang{} compiler lacks mechanisms to express such compatibility in constraint resolution. The workaround involves using \texttt{\vlkey{unversion}}, risking an MD5 hash's use with \fn{match} version 2.0.0.

One promising approach to convey compatibilities is integrating semantic versioning~\cite{preston2013semantic} into the type system.
If we introduce semantics into version labels, the hash generated in version 1.0.1 is backward compatible with version 1.0.0. Thus, by constructing a type system that respects explicitly defined version compatibilities, we can improve \mylang{} to accept a broader range of programs.

It is important to get reliable versions to achieve this goal.
Lam et al.~\cite{10.1145/3426428.3426922} emphasize the need for tool support to manage package modifications and the importance of analyzing compatibility through program analysis.
\emph{Delta-oriented programming}~\cite{10.1145/1868688.1868696,10.5555/1885639.1885647,10.1145/1960275.1960283} could complement this approach by facilitating the way modularizing addition, overriding, and removal of programming elements and include application conditions for those modifications.
This could result in a sophisticated package system that provides granular compatibility information.

Such a language could be an alternative to existing technologies for automatic update, collectively known as \emph{adoptation}.
These methods generate replacement rules based on structural similarities~\cite{Cossette2014,5970177} and extract API replacement patterns from migrated code bases \cite{10.1145/1368088.1368153}.
Some techniques involve library maintainers recording refactorings~\cite{10.1002/smr.328,1553570} and providing annotations~\cite{565039} to describe how to update client code. However, the reported success rate of these techniques is less than 20\% on average~\cite{10.1145/2393596.2393661}.

\subsubsection{Supporting Type Incompatibility}
\label{sec:typeincompatibilities}
One of the apparent problems with the current \mylang{} does not support \emph{type incompatibilities}.
\mylang{} forces terms of different versions to have the same type, both on the theoretical (typing rules in \corelang{}) and implementation (bundling in \vlmini{}) aspects. Supporting type incompatibility is important because type incompatibility is one of the top reasons for error-causing incompatibilities \cite{RAEMAEKERS2017140}.
The current \mylang{} is designed in such a way because it retains the principle that equates the types of promotions and versioned records in \corelang{}, easing the formalization of the semantics.

A promising approach to address this could be to decouple version inference from type inference and develop a version inference system on the polymorphic record calculus \cite{10.1145/218570.218572}.
The idea stems from the fact that versioned types $\vertype{\{l_1,l_2\}}{A}$ are structurally similar to record types $\{ l_1 : A,\, l_2 : A\}$ of $\Lambda^{\forall,\bullet}$. 
Since $\Lambda^{\forall,\bullet}$ allows different record-element types for different labels and has concrete inference algorithms with polymorphism, implementing version inference on top of $\Lambda^{\forall,\bullet}$ would also make \mylang{} more expressive.

\subsubsection{Adequate Version Polymorphism}
\label{sec:fullversionpolymorphism}
In the current \mylang{}, there is an issue that the version label of top-level symbols in imported modules must be specified one, whereas users can select specific versions of external variables using \texttt{\vlkey{unversion}} within the importing module.
Consider using a generic function like \fn{List.concat} in Figure \ref{fig6-5}. If it is used in one part of the program within the context of \mn{Matrix} version 1.0.0, the solution of the resource variable of \fn{List.concat} version 1.0.0 becomes confined to $\{\mn{Matrix}=1.0.0,\mn{List}=\ldots\}$. As a result, it is impossible to utilize \fn{List.concat} version 1.0.0 with \mn{Matrix} version 2.0.0 elsewhere in the program. This problem becomes apparent when we define a generic module like a standard library.

It is necessary to introduce full-version polymorphism in the core calculus instead of duplication to address this problem.
The idea is to generate a type scheme by solving constraints for each module during bundling and instantiate each type and resource variable at each occurrence of an external variable.
Such resource polymorphism is similar to that already implemented in Gr~\cite{Orchard:2019:Granule}. However, unlike Gr, \vlmini{} provides a type inference algorithm that collects constraints on a per-module basis, so we need the well-defined form of the principal type. This extension is future work.

\subsubsection{Conclusion}
This paper proposes a method for dependency analysis and version control at the expression level by incorporating versions into language semantics, which were previously only identifiers of packages.
This enables the simultaneous use of multiple versions and identifies programs violating version consistency at the expression level, which is impossible with conventional languages.

Our next step is to extend the version label, which currently only identifies versions, to \emph{semantic versions} and to treat the notion of compatibility with language semantics.
Like automatic updates by modern build tools based on semantic versioning, it would be possible to achieve incremental updates, which would be done step-by-step at the expression level.
Working with existing package managers to collect compatibility information at the expression level would be more feasible to realize the goal.

%
%
%
\bibliographystyle{splncs04}
\bibliography{tanabe.bib}

%
%
\appendix
\clearpage
\section{\corelang{} Definitions}

\subsection{\corelang{} Syntax}

\paragraph{\textnormal{\textbf{\corelang{} syntax}}}
\begin{align*}
t      \quad&::=\quad n \mid x \mid \app{t_1}{t_2} \mid \lam{p}{t} \mid \\
            &\hspace{2em}\quad \clet{x}{t_1}{t_2} \mid u.l \mid \ivval{\overline{l_i=t_i}}{l_j} \mid u\tag{terms}\\
p      \quad&::=\quad x \mid  [x] \tag{patterns}\\
u      \quad&::=\quad \pr{t} \mid \nvval{\overline{l_i=t_i}} \tag{versioned values}\\
v      \quad&::=\quad \lam{p}{t} \mid n \mid u \tag{values}\\
A, B   \quad&::=\quad \inttype \mid \ftype{A}{B} \mid \vertype{r}{A} \tag{types}\\
r      \quad&::=\quad \bot \mid \{\overline{l_i}\} \tag{version resources}\\
\mathcal{L}\ \ni\ l\quad&::=\quad \{\overline{M_i = V_i}\} \tag{version labels}
\end{align*}
where $M_i\in\mathcal{M}$ and $V_i\in\mathcal{V}_{M_i}$ are metavariables over module names and versions of $M_i$, respectively.

\paragraph{\textnormal{\textbf{\corelang{} contexts}}}
\begin{align*}
\Gamma,\Delta \quad&::=\quad \emptyset \mid \Gamma, x:A \mid \Gamma, x:\verctype{A}{r} \tag{contexts}\\
R      \quad&::=\quad - \mid r \tag{resource contexts}\\
E      \quad&::=\quad [\cdot]\ \mid\ E\ t \ \mid\ E.l\ \mid\ \clet{x}{E}{t} \tag{evaluation contexts}
\end{align*}

\subsection{\corelang{} Well-formedness}

\begin{rules}{Type well-formedness}{\vdash A}
    \begin{minipage}{.20\linewidth}
        \infrule[Tw$_\textsc{Int}$]{
             \\
        }{
            \vdash \inttype{}
        }
    \end{minipage}
    \begin{minipage}{.35\linewidth}
        \infrule[Tw$_\rightarrow$]{
             \vdash A
             \andalso
             \vdash B
        }{
            \vdash \ftype{A}{B}
        }
    \end{minipage}
    \begin{minipage}{.30\linewidth}
        \infrule[Tw$_\square$]{
             \vdash r
             \andalso
             \vdash A
        }{
            \vdash \vertype{r}{A}
        }
    \end{minipage}
\end{rules}

\begin{rules}{Resource well-formedness}{\vdash r}
    \begin{minipage}{.2\linewidth}
        \infrule[Rw$_\bot$]{
             \\
        }{
            \vdash \bot
        }
    \end{minipage}
    \begin{minipage}{.30\linewidth}
        \infrule[Rw$_\textsc{Label}$]{
            l_i \in \mathcal{L}
        }{
            \vdash \{\overline{l_i}\}
        }
    \end{minipage}
\end{rules}

\begin{rules}{Type environment well-formedness}{\vdash \Gamma}
    \begin{minipage}{.25\linewidth}
        \infrule[Tew$_\emptyset$]{
             \\
        }{
            \vdash \emptyset
        }
    \end{minipage}
    \begin{minipage}{.50\linewidth}
        \infrule[Tew$_\textsc{Lin}$]{
             \vdash \Gamma
             \andalso
             \vdash A
             \andalso
             x \notin \mathrm{dom}(\Gamma)
        }{
            \vdash \Gamma, x : A
        }
    \end{minipage}
    \begin{minipage}{.6\linewidth}
        \infrule[Tew$_\textsc{Gr}$]{
             \vdash \Gamma
             \andalso
             \vdash A
             \andalso
             \vdash r
             \andalso
             x \notin \mathrm{dom}(\Gamma)
        }{
            \vdash \Gamma, x : \verctype{A}{r}
        }
    \end{minipage}
\end{rules}

\begin{rules}{Resource environment well-formedness}{\vdash R}
    \begin{minipage}{.25\linewidth}
        \infrule[Rew$_{-}$]{
             \\
        }{
            \vdash -
        }
    \end{minipage}
    \begin{minipage}{.50\linewidth}
        \infrule[Rew$_r$]{
            \vdash_{\textsc{Rw}} r
        }{
            \vdash_{\textsc{Rew}} r
        }
    \end{minipage}
\end{rules}
\vspace{0.5\baselineskip}
where we use the notations $\vdash_{\textsc{Rw}}$ and $\vdash_{\textsc{Rew}}$ in $(\textsc{Rew}_r)$ to represent the  judgements of resource and resource environment well-formedness respectively, to avoid ambiguity between two syntactically indistinguishable judgements.

\subsection{\corelang{} Type System (Declarative)}
\begin{rules}{\corelang{} typing}{\Gamma \vdash t:A}
    \begin{minipage}{.22\linewidth}
        \infrule[int]{
             \\
        }{
            \emptyset \vdash n : \textsf{Int}
        }
    \end{minipage}
    \begin{minipage}{.25\linewidth}
        \infrule[var]{
            \vdash A\\
        }{
            x:A \vdash x:A
        }
    \end{minipage}
    \begin{minipage}{.44\linewidth}
        \infrule[abs]{
            - \vdash p : A \rhd \Delta
            \andalso
            \Gamma, \Delta \vdash t : B
        }{
            \Gamma \vdash \lam{p}{t} : A \rightarrow B
        }
    \end{minipage}
    \begin{minipage}{.55\linewidth}
        \infrule[let]{
            \Gamma_1 \,\vdash\, t_1 : \vertype{r}{A}
            \andalso
            \Gamma_2, x:\verctype{A}{r} \,\vdash\, t_2 : B
        }{
            \Gamma_1 + \Gamma_2 \,\vdash\, \clet{x}{t_1}{t_2} : B
        }
    \end{minipage}
    \begin{minipage}{.49\linewidth}
        \infrule[app]{
            \Gamma_1 \vdash t_1 : \ftype{A}{B}
            \andalso
            \Gamma_2 \vdash t_2 : A
        }{
            \Gamma_1 + \Gamma_2 \vdash \app{t_1}{t_2} : B
        }
    \end{minipage}
    \begin{minipage}{.32\linewidth}
        \infrule[weak]{
            \Gamma \vdash t : A
            \andalso
            \vdash \Delta
        }{
            \Gamma, \verctype{\Delta}{0} \vdash t : A
        }
    \end{minipage}
    \begin{minipage}{.33\linewidth}
        \infrule[der]{
            \Gamma, x:A \vdash t : B
        }{
            \Gamma, x:\verctype{A}{1} \vdash t : B
        }
    \end{minipage}
    \begin{minipage}{.33\linewidth}
        \infrule[pr]{
            [\Gamma] \vdash t : A
            \andalso
            \vdash r
        }{
            r\cdot[\Gamma] \vdash [t] : \vertype{r}{A} 
        }
    \end{minipage}
    \begin{minipage}{.55\linewidth}
        \infrule[sub]{
            \Gamma,x:\verctype{A}{r}, \Gamma' \vdash t : B
            \andalso
            r \sqsubseteq s
            \andalso
            \vdash s
        }{
            \Gamma,x:\verctype{A}{s}, \Gamma' \vdash t : B
        }
    \end{minipage}
    \begin{minipage}{.40\linewidth}
        \infrule[extr]{
            \Gamma \vdash u : \vertype{r}{A}
            \andalso
            l \in r
        }{
            \Gamma \vdash u.l : A
        }
    \end{minipage}
    \begin{minipage}{.47\linewidth}
        \infrule[ver]{
            [\Gamma_i] \vdash t_i : A
            \andalso
            \vdash \{\overline{l_i}\}
        }{
            \bigcup\{l_i\}\cdot [\Gamma_i] \vdash \nvval{\overline{l_i=t_i}} : \vertype{\{\overline{l_i}\}}{A}
        }
    \end{minipage}
    \begin{minipage}{.49\linewidth}
        \infrule[veri]{
            [\Gamma_i] \vdash t_i : A
            \andalso
            \vdash \{\overline{l_i}\}
            \andalso
            l_k \in \{\overline{l_i}\}
        }{
            \bigcup\{l_i\}\cdot [\Gamma_i] \vdash \ivval{\overline{l_i=t_i}}{l_k} : A
        }
    \end{minipage}
\end{rules}

\ \newline
\hbox{where $0=\bot$, $1 = \emptyset$ and $\sqsubseteq\,=\,\subseteq$.}

\begin{rules}{\corelang{} pattern typing}{R \vdash p : A \rhd \Delta}
    \begin{minipage}{.33\linewidth}
        \infrule[pVar]{
            \vdash A
        }{
            - \vdash x:A \rhd x:A
        }
    \end{minipage}
    \begin{minipage}{.38\linewidth}
        \infrule[\mbox{[}pVar\mbox{]}]{
            \vdash r
            \andalso
            \vdash A
        }{
            r \vdash x:A \rhd x:\verctype{A}{r}
        }
    \end{minipage}
    \begin{minipage}{.33\linewidth}
        \infrule[p$\square$]{
            r \vdash x : A \rhd \Delta
        }{
            - \vdash \pr{x} : \vertype{r}{A} \rhd \Delta
        }
    \end{minipage}
\end{rules}

\subsection{\corelang{} Dynamic Semantics}
\begin{rules}{Evaluation}{t \longrightarrow t'}
    \begin{minipage}{.95\linewidth}
        \infrule[]{
            t \leadsto t'
        }{
            E[t] \longrightarrow E[t']
        }
    \end{minipage}
\end{rules}

\begin{rules}{Reduction}{t \leadsto t'}
    \begin{minipage}{.475\linewidth}
        \infrule[E-abs1]{
            \\
        }{
            \app{(\lam{x}{t})}{t'} \leadsto (t' \rhd x)t
        }
    \end{minipage}
    \begin{minipage}{.49\linewidth}
        \infrule[E-abs2]{
            \\
        }{
            \app{(\lam{\pr{x}}{t})}{t'} \leadsto \clet{x}{t'}{t}
        }
    \end{minipage}
    \begin{minipage}{.25\linewidth}
        \infrule[E-ex1]{
            \\
        }{
            \pr{t}.l \leadsto t@l
        }
    \end{minipage}
    \begin{minipage}{.36\linewidth}
        \infrule[E-ex2]{
            \\
        }{
            \nvval{\overline{l_i=t_i}}.l_i \leadsto t_i@l_i
        }
    \end{minipage}
    \begin{minipage}{.50\linewidth}
        \infrule[E-clet]{
            \\
        }{
            \clet{x}{u}{t} \leadsto (u \rhd \pr{x})t
        }
    \end{minipage}
    \begin{minipage}{.40\linewidth}
        \infrule[E-veri]{
            \\
        }{
            \langle\overline{l_i=t_i}\,|\,l_k\rangle \leadsto t_k@l_k
        }
    \end{minipage}
\end{rules}

\paragraph{\textnormal{\textbf{Substitutions}}}
\begin{align*}
    (t' \rhd x) t \quad&=\quad [t'/x]t \\
    ([t'] \rhd [x])t \quad&=\quad (t' \rhd x)t\\
    (\nvval{\overline{l_i=t_i}} \rhd [x]) t \quad&=\quad [\ivval{\overline{l_i=t_i}}{l_k}/x] t \quad (l_k \in \{\overline{l_i}\}) 
\end{align*}

\paragraph{\textnormal{\textbf{Default version overwriting}}}
\begin{align*}
n@l \quad&=\quad n\\
x@l \quad&=\quad x\\
(\lam{p}{t})@l \quad&=\quad \lam{p}{(t@l)}\\
(\app{t}{u})@l \quad&=\quad \app{(t@l)}{(u@l)}\\
(\clet{x}{t_1}{t_2})@l \quad&=\quad \clet{x}{(t_1@l)}{(t_2@l)}\\
\pr{t}@l \quad&=\quad \pr{t}\\
\nvval{\overline{l_i=t_i}}@l \quad&=\quad \nvval{\overline{l_i=t_i}}\\
(u.l')@l \quad&=\quad (u@l).l'\\
\ivval{\overline{l_i=t_i}}{l_i}@l' \quad&=\quad \left\{ 
\begin{aligned}
\ivval{\overline{l_i=t_i}}{l'} \quad (l' \in \{\overline{l_i}\})\\
\ivval{\overline{l_i=t_i}}{l_i} \quad (l' \notin \{\overline{l_i}\})
\end{aligned}
\right.
\end{align*}

\clearpage
\section{\vlmini{} Definitions}

\subsection{\vlmini{} Syntax (w/o version control terms/data constructors)}

\paragraph{\textnormal{\textbf{\vlmini{} syntax}}}
\begin{align*}
t \quad&::=\quad n \mid x \mid \app{t_1}{t_2} \mid \lam{p}{t} \mid \pr{t} \tag{terms}\\
p \quad&::=\quad x \mid [x] \tag{patterns}\\
A, B \quad&::=\quad \alpha \mid \inttype{} \mid \ftype{A}{B} \mid \vertype{r}{A} \tag{types}\\
\kappa \quad&::=\quad \typekind{} \mid \labelskind{} \tag{kinds}\\
r      \quad&::=\quad \alpha \mid  \bot \mid \{\overline{l_i}\} \tag{version resources}\\
\mathcal{L}\ \ni\ l\quad&::=\quad \{\overline{M_i = V_i}\} \tag{version labels}
\end{align*}
where $M_i\in\mathcal{M}$ and $V_i\in\mathcal{V}_{M_i}$ are metavariables over module names and versions of $M_i$, respectively.

\paragraph{\textnormal{\textbf{\vlmini{} contexts}}}
\begin{align*}
\Gamma,\Delta \quad&::=\quad \emptyset \mid \Gamma, x:A \mid \Gamma, x:\verctype{A}{r} \tag{contexts}\\
\Sigma \quad&::=\quad \emptyset \mid \Sigma, \alpha:\kappa \tag{type variable kinds}\\
R      \quad&::=\quad - \mid r \tag{resource contexts}
\end{align*}

\paragraph{\textnormal{\textbf{\vlmini{} constraints}}}
\begin{align*}
\mathcal{C} \quad&::=\quad \top \mid \mathcal{C}_1 \land \mathcal{C}_2 \mid \mathcal{C}_1 \lor \mathcal{C}_2 \mid \alpha \preceq \alpha' \mid \alpha \preceq \mathcal{D} \tag{dependency constraints}\\
\mathcal{D} \quad&::=\quad \cs{\overline{M_i = V_i}} \tag{dependent labels}\\
\Theta      \quad&::=\quad \top \mid \Theta_1 \land \Theta_2 \mid \{A \sim B\} \tag{type constraints}
\end{align*}

\paragraph{\textnormal{\textbf{\vlmini{} type substitutions}}}
\begin{align*}
\theta \quad&::=\quad \emptyset \mid \theta \circ [\alpha \mapsto A] \mid \theta \circ [\alpha \mapsto r] \tag{type substitutions}\\
\eta   \quad&::=\quad \emptyset \mid \eta\circ[\alpha \mapsto \{l\}] \tag{label substituions} 
\end{align*}

\subsection{\vlmini{} Well-formedness and Kinding}
\begin{rules}{Type variable kinds well-formedness}{\vdash \Sigma}
    \begin{minipage}{.25\linewidth}
        \infrule[Kw$_\emptyset$]{
            \\
        }{
            \vdash \emptyset
        }
    \end{minipage}
    \begin{minipage}{.70\linewidth}
        \infrule[Kw$_\alpha$]{
            \vdash \Sigma
            \andalso
            \kappa \in \{\typekind,\,\labelskind\}
            \andalso
            \alpha \notin \mathrm{dom}(\Sigma)
        }{
            \vdash \Sigma, \alpha : \kappa
        }
    \end{minipage}
\end{rules}

\begin{rules}{\vlmini{} kinding}{\Sigma \vdash A:\kappa}
    \begin{minipage}{.30\linewidth}
        \infrule[$\kappa_\textsc{Int}$]{
            \vdash \Sigma
        }{
            \Sigma \vdash \inttype : \typekind
        }
    \end{minipage}
    \begin{minipage}{.45\linewidth}
        \infrule[$\kappa_\rightarrow$]{
             \Sigma \vdash A : \typekind
             \andalso
             \Sigma \vdash B : \typekind
        }{
            \Sigma \vdash \ftype{A}{B} : \typekind
        }
    \end{minipage}
    \begin{minipage}{.50\linewidth}
        \infrule[$\kappa_\square$]{
            \Sigma \vdash r : \labelskind
            \andalso
            \Sigma \vdash A : \typekind
        }{
            \Sigma \vdash \vertype{r}{A} : \typekind
        }
    \end{minipage}
    \begin{minipage}{.35\linewidth}
        \infrule[$\kappa_\alpha$]{
            \vdash \Sigma
            \andalso
            \Sigma(\alpha) = \kappa
        }{
            \Sigma \vdash \alpha : \kappa
        }
    \end{minipage}
    \begin{minipage}{.30\linewidth}
        \infrule[$\kappa_\bot$]{
            \vdash \Sigma
        }{
            \Sigma \vdash \bot : \labelskind
        }
    \end{minipage}
    \begin{minipage}{.35\linewidth}
        \infrule[$\kappa_\textsc{Label}$]{
            \vdash \Sigma
            \andalso
            l_i \in \mathcal{L} \\
        }{
            \Sigma \vdash \{\overline{l_i}\} : \labelskind
        }
    \end{minipage}
\end{rules}

\begin{rules}{Type environment well-formedness}{\Sigma \vdash A}
    \begin{minipage}{.19\linewidth}
        \infrule[Tew$_\emptyset$]{
            \vdash \Sigma
        }{
            \Sigma \vdash \emptyset
        }
    \end{minipage}
    \begin{minipage}{.65\linewidth}
        \infrule[Tew$_\textsc{Lin}$]{
             \Sigma \vdash \Gamma
             \andalso
             \Sigma \vdash A : \typekind
             \andalso
             x \notin \mathrm{dom}(\Gamma)
        }{
            \Sigma \vdash \Gamma, x : A
        }
    \end{minipage}
    \begin{minipage}{.80\linewidth}
        \infrule[Tew$_\textsc{Gr}$]{
             \Sigma \vdash \Gamma
             \andalso
             \Sigma \vdash A : \typekind
             \andalso
             \Sigma \vdash r : \labelskind
             \andalso
             x \notin \mathrm{dom}(\Gamma)
        }{
            \Sigma \vdash \Gamma, x : \verctype{A}{r}
        }
    \end{minipage}
\end{rules}

\begin{rules}{Resource environment well-formedness}{\Sigma \vdash R}
    \begin{minipage}{.25\linewidth}
        \infrule[Rew$_{-}$]{
            \vdash \Sigma
        }{
            \Sigma \vdash -
        }
    \end{minipage}
    \begin{minipage}{.50\linewidth}
        \infrule[Rew$_r$]{
            \vdash \Sigma
            \andalso
            \Sigma \vdash r : \labelskind
        }{
            \Sigma \vdash r
        }
    \end{minipage}
\end{rules}

\begin{rules}{Type substitutions well-formedness}{\Sigma \vdash \theta}
    \begin{minipage}{.25\linewidth}
        \infrule[Sw$_{\emptyset}$]{
            \vdash \Sigma
        }{
            \Sigma \vdash \emptyset
        }
    \end{minipage}
    \begin{minipage}{.65\linewidth}
        \infrule[Sw$_\textsc{Ty}$]{
            \Sigma \vdash \theta
            \andalso
            \Sigma \vdash \alpha : \typekind
            \andalso
            \Sigma \vdash A : \typekind
        }{
            \Sigma \vdash \theta \circ [\alpha \mapsto A]
        }
    \end{minipage}
    \begin{minipage}{.65\linewidth}
        \infrule[Sw$_\textsc{Res}$]{
            \Sigma \vdash \theta
            \andalso
            \Sigma \vdash \alpha : \labelskind
            \andalso
            \Sigma \vdash r : \labelskind
        }{
            \Sigma \vdash \theta \circ [\alpha \mapsto r]
        }
    \end{minipage}
\end{rules}

\subsection{\vlmini{} Algorithmic Type Inference System}
\label{appendix:typing_vlmini}

\begin{rules}{\vlmini{} type synthesis}{\Sigma;\Gamma \vdash t \Rightarrow A;\Sigma'; \Delta; \Theta; \mathcal{C}}
    \begin{minipage}{.45\linewidth}
      \infrule[$\Rightarrow_{\textsc{int}}$]{
        \vdash \Sigma
        \andalso
        \Sigma \vdash \Gamma
      }{
        \Sigma; \Gamma \vdash n \Rightarrow \inttype; \Sigma; \emptyset; \top; \top
      }
    \end{minipage}
    \begin{minipage}{.5\linewidth}
      \infrule[$\Rightarrow_{\textsc{lin}}$]{
        \vdash \Sigma
        \andalso
        \Sigma \vdash \Gamma
        \andalso
        x:A\in\Gamma
      }{
        \Sigma; \Gamma \vdash x \Rightarrow A; \Sigma; x:A; \top; \top
      }
    \end{minipage}
    \begin{minipage}{.5\linewidth}
      \infrule[$\Rightarrow_{\textsc{gr}}$]{
        \vdash \Sigma
        \andalso
        \Sigma \vdash \Gamma
        \andalso
        x:\verctype{A}{r}\in\Gamma
      }{
        \Sigma; \Gamma \vdash x \Rightarrow A; \Sigma; x:\verctype{A}{1}; \top; \top
      }
    \end{minipage}
    \begin{minipage}{.95\linewidth}
      \infrule[$\Rightarrow_{\textsc{abs}}$]{
        \Sigma_1, \alpha:\textsf{Type};- \vdash p:\alpha \rhd \Gamma'; \Sigma_2; \Theta_1
        \andalso
        \Sigma_2;\Gamma,\Gamma' \vdash t \Rightarrow B;\Sigma_3;\Delta; \Theta_2; \mathcal{C}
      }{
        \Sigma_1;\Gamma \vdash \lam{p}{t} \Rightarrow \ftype{\alpha}{B};\Sigma_3;\Delta\backslash\Gamma' ; \Theta_1\land\Theta_2; \mathcal{C}
      }
    \end{minipage}
    \begin{minipage}{.95\linewidth}
      \vspace{0.5\baselineskip}
      \infrule[$\Rightarrow_{\textsc{app}}$]{
        \Sigma_1; \Gamma \vdash t_1 \Rightarrow A_1 ; \Sigma_2; \Delta_1; \Theta_1; \mathcal{C}_1
        \andalso
        \Sigma_2; \Gamma \vdash t_2 \Rightarrow A_2; \Sigma_3; \Delta_2; \Theta_2; \mathcal{C}_2
      }{
        \Sigma_1;\Gamma \vdash \app{t_1}{t_2} \Rightarrow \beta; \Sigma_3, \beta:\typekind; \Delta_1+\Delta_2; \\\hspace{11em}\Theta_1\land\Theta_2\land\{A_1\sim \ftype{A_2}{\beta}\}; \mathcal{C}_1 \land \mathcal{C}_2
      }
    \end{minipage}
    \begin{minipage}{.95\linewidth}
      \vspace{0.5\baselineskip}
      \infrule[$\Rightarrow_{\textsc{pr}}$]{
        \Sigma_1 \vdash [\Gamma\cap\textsf{FV}(t)]_{\textsf{Labels}} \rhd \Gamma'
        \andalso
        \Sigma_1; \Gamma' \vdash t \Rightarrow A; \Sigma_2; \Delta; \Theta; \mathcal{C}_1
        \andalso\\
        \Sigma_3 = \Sigma_2, \alpha:\textsf{Labels}
        \andalso
        \Sigma_3 \vdash \alpha \sqsubseteq_c \Gamma' \rhd \mathcal{C}_2
      }{
        \Sigma_1;\Gamma \vdash [t] \Rightarrow \vertype{\alpha}{A}; \Sigma_3; \alpha \cdot \Delta ;\Theta; \mathcal{C}_1 \land \mathcal{C}_2
      }
    \end{minipage}
\end{rules}

\begin{rules}{\vlmini{} pattern type synthesis}{\Sigma; R\vdash p : A \rhd \Gamma; \Sigma'; \Theta; \mathcal{C}}
    \begin{minipage}{.50\linewidth}
      \infrule[pVar]{
        \vdash \Sigma
        \andalso
        \Sigma \vdash A : \textsf{Type}
      }{
        \Sigma; - \vdash x : A \rhd x:A; \Sigma; \top; \top
      }
    \end{minipage}
    \begin{minipage}{.6\linewidth}
      \infrule[\mbox{[}pVar\mbox{]}]{
        \vdash \Sigma
        \andalso
        \Sigma \vdash A : \textsf{Type}
        \andalso
        \Sigma \vdash r : \textsf{Labels}
      }{
        \Sigma; r \vdash x : A \rhd x:\verctype{A}{r}; \Sigma; \top; \top
      }
    \end{minipage}
    \begin{minipage}{.7\linewidth}
      \vspace{0.75\baselineskip}
      \infrule[p$\square$]{
        \Sigma, \alpha:\textsf{Labels}, \beta:\textsf{Type}; \alpha \vdash x : \beta \rhd \Delta; \Sigma'; \Theta; \mathcal{C}
      }{
        \Sigma; - \vdash [x] : A \rhd \Delta; \Sigma'; \Theta \land \{A \sim \vertype{\alpha}{\beta}\}; \mathcal{C}
      }
    \end{minipage}
\end{rules}

\begin{rules}{\vlmini{} context grading}{\Sigma \vdash [\Gamma]_{\textsf{Labels}} \rhd \Gamma'}
        \begin{minipage}{.30\linewidth}
            \infrule[$\emptyset$]{
                \\
            }{
                \Sigma \vdash [\emptyset]_{\textsf{Labels}} \rhd \emptyset
            }
        \end{minipage}
        \begin{minipage}{.5\linewidth}
            \infrule[{[}lin{]}]{
                \Sigma \vdash [\Gamma]_{\textsf{Labels}} \rhd \Gamma'
            }{
                \Sigma \vdash [\Gamma,x:A]_{\textsf{Labels}} \rhd \Gamma', x:[A]_1
            }
        \end{minipage}
        \begin{minipage}{.55\linewidth}
            \infrule[{[}gr{]}]{
                \Sigma \vdash [\Gamma]_{\textsf{Labels}} \rhd \Gamma'
            }{
                \Sigma \vdash [\Gamma,x:[A]_r]_{\textsf{Labels}} \rhd \Gamma', x:[A]_{r}
            }
        \end{minipage}
\end{rules}

\ \newline
\hbox{where $1 = \emptyset$.}

\begin{rules}{\vlmini{} constraints generation}{\Sigma \vdash \alpha \sqsubseteq_{c} [\Gamma] \rhd \mathcal{C}}
    \begin{minipage}{.35\linewidth}
      \infrule[$\sqsubseteq^{\textsc{VD}}_\emptyset$]{
        \\
      }{
        \Sigma \vdash \alpha \sqsubseteq_{c} \emptyset \rhd \top
      }
    \end{minipage}
    \begin{minipage}{.60\linewidth}
      \infrule[$\sqsubseteq^{\textsc{VD}}_\Gamma$]{
        \Sigma \vdash \alpha \sqsubseteq_{c} [\Gamma] \rhd \mathcal{C}
      }{
        \Sigma \vdash \alpha \sqsubseteq_{c} ([\Gamma], x:[A]_r) \rhd (\mathcal{C} \land (\alpha \preceq r))
      }
    \end{minipage}
\end{rules}

\begin{rules}{\vlmini{} type unification}{\Sigma \vdash A \sim B \rhd \theta}
    \begin{minipage}{.45\linewidth}
      \infrule[$U_{\alpha}$]{
        \Sigma \vdash \alpha:\typekind
        \andalso
        \Sigma \vdash A:\typekind
      }{
        \Sigma \vdash \alpha \sim A \rhd \alpha \mapsto A
      }
    \end{minipage}
    \begin{minipage}{.35\linewidth}
      \infrule[$U_{=}$]{
        \Sigma \vdash A:\typekind
      }{
        \Sigma \vdash A \sim A \rhd \emptyset      }
    \end{minipage}
    \begin{minipage}{.62\linewidth}
      \infrule[$U_\rightarrow$]{
        \Sigma \vdash A' \sim A \rhd \theta_1
        \andalso
        \Sigma \vdash \theta_1B\sim \theta_1B' \rhd \theta_2
      }{
        \Sigma \vdash A \rightarrow B \sim A' \rightarrow B' \rhd \theta_1 \uplus \theta_2
      }
    \end{minipage}
    \begin{minipage}{.6\linewidth}
      \infrule[$U_{\Box}$]{
        \Sigma \vdash A \sim A' \rhd \theta_1
        \andalso
        \Sigma \vdash \theta_1r \sim \theta_1r' \rhd \theta_2
      }{
        \Sigma \vdash \vertype{r}{A} \sim \vertype{r'}{A'} \rhd \theta_1 \uplus \theta_2
      }
    \end{minipage}
\end{rules}
\vspace{\baselineskip}
otherwise fail.

\begin{rules}{\vlmini{} unification}{\Sigma \vdash \Theta \rhd \theta}
    \begin{minipage}{.25\linewidth}
      \infrule[$U_\emptyset$]{
        \\
      }{
        \Sigma \vdash \top \rhd \emptyset
      }
    \end{minipage}
    \begin{minipage}{.55\linewidth}
      \infrule[$U_\Theta$]{
        \Sigma \vdash \Theta \rhd \theta_1
        \andalso
        \Sigma \vdash \theta_1A \sim \theta_1B \rhd \theta_2
      }{
        \Sigma \vdash \Theta \land \{A \sim B\} \rhd  \theta_1 \uplus \theta_2
      }
    \end{minipage}
\end{rules}

\paragraph{\textnormal{\textbf{Type substitutions}}}
\begin{align*}
\theta \inttype{} \quad&= \quad \inttype{}\\
\theta\alpha \quad&= \quad
    \left\{\begin{aligned}
    A \hspace{1.5em} (\theta(\alpha)=A)\\
    \alpha \hspace{1.5em} (\mathrm{otherwise})
    \end{aligned}\right.\\
\theta(\ftype{A}{B}) \quad&= \quad \ftype{\theta A}{\theta B}\\
\theta(\vertype{r}{A}) \quad&= \quad \vertype{(\theta r)}{(\theta A)}\\
\theta \bot \quad&= \quad \bot \\
\theta \{\overline{l_i}\} \quad&= \quad \{\overline{l_i}\}
\end{align*}

\paragraph{\textnormal{\textbf{Substitution compositions}}}
\begin{align*}
\emptyset \uplus \theta_2 &= \theta_2\\
(\theta_1, \alpha \mapsto A) \uplus \theta_2 &= \left\{\begin{aligned}
    &(\theta_1 \uplus (\theta_2 \backslash \alpha) \uplus \theta), \alpha \mapsto \theta A &\theta_2(\alpha) = B \land \Sigma \vdash A \sim B \rhd \theta\\
    &(\theta_1 \uplus \theta_2),\alpha \mapsto A &\alpha\notin  \mathrm{dom}(\theta_2)
\end{aligned} \right.
\end{align*}

\subsection{Extensions for Version Control Terms}
\label{appendix:vlmini_version_control_terms}
\paragraph{\textnormal{\textbf{\vlmini{} syntax}}}
\begin{align*}
t \quad&::=\quad \ldots \mid \verof{l}{t} \mid \unver{t}\tag{terms}
\end{align*}

\begin{rules}{\vlmini{} algorithmic type synthesis}{\Sigma;\Gamma \vdash t \Rightarrow A;\Sigma'; \Delta; \Theta; \mathcal{C}}
    \begin{minipage}{.95\linewidth}
      \vspace{0.5\baselineskip}
      \infrule[$\Rightarrow_{\textsc{Ver}}$]{
        l \in \mathcal{L}
        \andalso
        \Sigma_1 \vdash [\Gamma\cap\textsf{FV}(t)]_{\textsf{Labels}} \rhd \Gamma'
        \andalso\\
        \Sigma_1 \vdash \Gamma' \sqsubseteq_c \cs{l} \rhd \mathcal{C}_2
        \andalso
        \Sigma_1; \Gamma' \vdash t \Rightarrow A; \Sigma_2; \Delta_1; \Theta_1; \mathcal{C}_1
      }{
        \Sigma_1;\Gamma \vdash \verof{l}{t} \Rightarrow A; \Sigma_2; \Delta_1 ; \Theta_1 ; \mathcal{C}_1 \land \mathcal{C}_2
      }
    \end{minipage}
    \begin{minipage}{.95\linewidth}
      \vspace{0.5\baselineskip}
      \infrule[$\Rightarrow_{\textsc{Unver}}$]{
        \Sigma_1; \Gamma \vdash t \Rightarrow A; \Sigma_2; \Delta; \Theta; \mathcal{C}
        \andalso\\
        A = \vertype{r}{A'}
        \andalso
        \Sigma_3 = \Sigma_2,\alpha:\labelskind
        \andalso
        \vdash \Sigma_3
      }{
        \Sigma_1;\Gamma \vdash \unver{t} \Rightarrow \vertype{\alpha}{A'}; \Sigma_3; \Delta; \Theta; \mathcal{C}
      }
    \end{minipage}
\end{rules}

\begin{rules}{\vlmini{} constraints generation}{\Sigma \vdash [\Gamma] \sqsubseteq_{c} \mathcal{D} \rhd \mathcal{C}}
    \begin{minipage}{.35\linewidth}
      \infrule[$\sqsubseteq^{\textsc{LD}}_{\emptyset}$]{
        \\
      }{
        \Sigma \vdash \emptyset \sqsubseteq_{c} \mathcal{D} \rhd \top
      }
    \end{minipage}
    \begin{minipage}{.60\linewidth}
      \infrule[$\sqsubseteq^{\textsc{LD}}_{\Gamma}$]{
        \Sigma \vdash [\Gamma] \sqsubseteq_{c} \mathcal{D} \rhd \mathcal{C}
      }{
        \Sigma \vdash ([\Gamma], x:[A]_r) \sqsubseteq_{c} \mathcal{D} \rhd \left(\mathcal{C}\land (r \preceq \mathcal{D})\right)
      }
    \end{minipage}
\end{rules}
\vspace{\baselineskip}
Note that type environment resources in \vlmini{} always contain only type variables, so $r = \alpha~(\exists\alpha)$.

\subsection{Extensions for Data Constructors}
\label{appendix:vlmini_data_structures}
\paragraph{\textnormal{\textbf{\vlmini{} syntax}}}
\begin{align*}
t \quad&::=\quad \ldots \mid C\,\overline{t_i} \mid \caseof{t}{\overline{p_i \mapsto t_i}}\tag{terms}\\
p \quad&::=\quad \ldots \mid C\,\overline{p_i} \tag{patterns}\\
C \quad&::=\quad (,) \mid [,] \tag{constructors}\\
A, B \quad&::=\quad \ldots \mid K\,\overline{A_i} \tag{types}\\
K    \quad&::=\quad (,) \mid [,] \tag{type constructors}
\end{align*}

\begin{rules}{\vlmini{} algorithmic type synthesis}{\Sigma;\Gamma \vdash t \Rightarrow A;\Sigma'; \Delta; \Theta; \mathcal{C}}
    \begin{minipage}{.95\linewidth}
      \infrule[$\Rightarrow_{\textsc{(,)}}$]{
        \Sigma_{i-1};\Gamma \vdash t_i \Rightarrow A_i; \Sigma_i; \Delta_i; \Theta_i; \mathcal{C}_i
      }{
        \Sigma_0;\Gamma \vdash (t_1,\,..\,,t_n) \Rightarrow (A_1,\,..\,,A_n); \Sigma_{n}; \Delta_1,..\Delta_n; \bigwedge\Theta_{i}; \bigwedge \mathcal{C}_i
      }
    \end{minipage}
    \begin{minipage}{.95\linewidth}
      \infrule[$\Rightarrow_{\textsc{[,]}}$]{
        \Sigma_{i-1};\Gamma \vdash t_i \Rightarrow A; \Sigma_i; \Delta_i; \Theta_i; \mathcal{C}_i
      }{
        \Sigma_0;\Gamma \vdash [t_1,\,..\,,t_n] \Rightarrow [A]; \Sigma_{n}; \Delta_1,..\Delta_n; \bigwedge\Theta_{i}; \bigwedge\mathcal{C}_i
      }
    \end{minipage}
    \begin{minipage}{.95\linewidth}
      \vspace{.5\baselineskip}
      \infrule[$\Rightarrow_{\textsc{case}}$]{
        \Sigma_{0};\Gamma \vdash t \Rightarrow A; \Sigma_1; \Gamma'_1; \Theta_0; \mathcal{C}_0
        \andalso\\
        \Sigma_{i-1};- \vdash p_{i}:A \rhd \Delta_{i}; \Sigma'_{i}; \Theta'_{i}
        \andalso
        \Sigma'_{i};\Gamma,\Delta_{i} \vdash t_{i} \Rightarrow B;\Sigma_{i}; \Delta'_{i}; \Theta_{i}; \mathcal{C}_{i}
      }{
        \Sigma_0;\Gamma \vdash \caseof{t}{\overline{p_i\mapsto t_i}} \Rightarrow B;\Sigma_{n}; \Gamma_1'+\bigcup(\Delta'_i\backslash\Delta_i); \bigwedge\Theta_{i}; \bigwedge \mathcal{C}_i
      }
    \end{minipage}
\end{rules}

\begin{rules}{\vlmini{} pattern type synthesis}{\Sigma; R\vdash p : A \rhd \Gamma; \Sigma'; \Theta; \mathcal{C}}
    \begin{minipage}{.95\linewidth}
      \vspace{.5\baselineskip}
      \infrule[pCon]{
        \Sigma_{i-1}' = \Sigma_{i-1}, \alpha_i:\textsf{Type}
        \andalso
        \Sigma_{i-1}'; - \vdash p_i : \alpha_i \rhd \Gamma_i; \Sigma_i; \Theta_i; \mathcal{C}
      }{
        \Sigma_0; - \vdash C\,p_1\,..\,p_n : A \rhd \Gamma_i, .. ,\Gamma_n; \Sigma_n; \{K\,\overline{\alpha_i} \sim A\} \land \bigwedge\Theta_i; \mathcal{C}
      }
    \end{minipage}
    \begin{minipage}{.95\linewidth}
      \vspace{.5\baselineskip}
      \infrule[\mbox{[}pCon\mbox{]}]{
        \Sigma_{i-1}' = \Sigma_{i-1}, \alpha_i:\textsf{Type}
        \andalso\\
        \Sigma_{i-1}'; r \vdash p_i : \alpha_i \rhd \Gamma_i; \Sigma_i; \Theta_i; \mathcal{C}
        \andalso
        \Sigma_i' \vdash r : \textsf{Labels}
      }{
        \Sigma_0; r \vdash C\,p_1\,..\,p_n : A \rhd \Gamma_i, .. ,\Gamma_n; \Sigma_n; \{K\,\overline{\alpha_i} \sim A\}\land\bigwedge\Theta_{i}; \mathcal{C}
      }
    \end{minipage}
\end{rules}

\begin{rules}{\vlmini{} type unification}{\Sigma \vdash A \sim B \rhd \theta}
    \begin{minipage}{.95\linewidth}
      \infrule[$U_\textsc{Con}$]{
        \Sigma \vdash A_1 \sim B_1 \rhd \theta_1
        \andalso
        \Sigma \vdash \theta_{i-1} A_i \sim \theta_{i-1} B_i \rhd \theta_i~(i\geq 2)
      }{
        \Sigma \vdash K\,\overline{A_i} \sim K\,\overline{B_i} \rhd \biguplus\theta_{i}}
    \end{minipage}
\end{rules}

\paragraph{\textnormal{\textbf{Type substitutions}}}
\begin{align*}
\theta (K\,\overline{A_i}) \quad&= \quad K\,\overline{\theta A_i}
\end{align*}
\clearpage
\def\center{\trivlist \centering\item\relax}
\def\endcenter{\endtrivlist}

\section{\corelang{} Type Safety}
\label{appendix:lambdavl_safety}
\subsection{Resource Properties}
\begin{definition}[Version resource semiring]
The version resource semiring is given by the structural semiring (semiring with preorder) $(\mathcal{R},\oplus,0,\otimes,1,\sqsubseteq)$, defined as follows.
\begin{gather*}
    0 = \bot
    \quad
    1 = \emptyset
    \quad
    \begin{minipage}{0.10\hsize}
        \infax{\bot \sqsubseteq \texttt{r}}
    \end{minipage}
    \quad
    \begin{minipage}{0.10\hsize}
        \infrule{
            \texttt{r}_1 \subseteq \texttt{r}_2
        }{
            \texttt{r}_1 \sqsubseteq \texttt{r}_2
        }
    \end{minipage}
    \\
    \begin{aligned}
        r_1 \oplus r_2 &=
        \left\{
        \begin{aligned}
            &r_1 & &\mbox{$r_2 = \bot$}\\
            &r_2 & &\mbox{$r_1 = \bot$}\\
            &r_1 \cup r_2 & &\mbox{otherwise}
        \end{aligned}
        \right.
        \quad
        r_1 \otimes r_2 &=
        \left\{
        \begin{aligned}
            &\bot & &\mbox{$r_1 = \bot$}\\
            &\bot & &\mbox{$r_2 = \bot$}\\
            &r_1 \cup r_2 & &\mbox{otherwise}
        \end{aligned}
        \right.
    \end{aligned}
\end{gather*}
where $\bot$ is the smallest element of $\mathcal{R}$, and $r_1 \subseteq r_2$ is the standard subset relation over sets defined only when both $r_1$ and $r_2$ are not $\bot$.\\
\end{definition}

\begin{lemma}[Version resource semiring is a structural semiring]
\label{proof:semiring}
\end{lemma}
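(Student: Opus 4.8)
The plan is to unfold the definition of a structural semiring into its constituent obligations and discharge each one, exploiting the fact that both $\oplus$ and $\otimes$ reduce to ordinary set union on proper (non-$\bot$) resources. Concretely, I must show that $(\mathcal{R},\oplus,\bot)$ is a commutative monoid, that $(\mathcal{R},\otimes,\emptyset)$ is a monoid with $\bot$ as a two-sided annihilator, that $\otimes$ distributes over $\oplus$, that $\sqsubseteq$ is a preorder, and finally that $\oplus$ and $\otimes$ are monotone with respect to $\sqsubseteq$ (with $\bot$ its least element, which is exactly the axiom $\bot\sqsubseteq r$).

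The guiding observation that makes all of this routine is that, restricted to proper resources $\{\overline{l_i}\}$, both operations agree with $\cup$, while $\bot$ behaves as the unit of $\oplus$ and as the zero of $\otimes$. Hence the monoid and distributivity laws follow from the corresponding laws of the commutative, idempotent monoid $(\mathcal{P}(\mathcal{L}),\cup,\emptyset)$, together with a finite case split on which of the arguments equal $\bot$. For instance, associativity and commutativity of $\oplus$ reduce to those of $\cup$ once the $\bot$-cases are absorbed by the unit law; the unit law $r\otimes\emptyset=r$ holds because $\emptyset$ is non-$\bot$ and $r\cup\emptyset=r$ for proper $r$, while $\bot\otimes\emptyset=\bot$; and since $\otimes$ is itself commutative, two-sided distributivity reduces to a single side, where the only cases needing attention are those in which one of $r_1,r_2,r_3$ equals $\bot$, each collapsing both sides to the same value via the annihilator and unit equations. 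Reflexivity and transitivity of $\sqsubseteq$ follow from those of $\subseteq$ on proper resources, with the remaining cases covered directly by $\bot\sqsubseteq r$.

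I would carry these out in the order: monoid laws for $\oplus$; annihilation and monoid laws for $\otimes$; distributivity; preorder laws; monotonicity. The step I expect to be the main obstacle is monotonicity of $\otimes$, because it is the only place where the partial nature of $\sqsubseteq$ (defined via $\subseteq$ only between proper resources, plus $\bot\sqsubseteq r$) interacts non-trivially with an operation that itself collapses to $\bot$ on any $\bot$ argument. Concretely, to show that $r_1\sqsubseteq r_1'$ and $r_2\sqsubseteq r_2'$ imply $r_1\otimes r_2\sqsubseteq r_1'\otimes r_2'$, I must handle the mixed cases in which an argument is raised from $\bot$ to a proper set: when $r_1=\bot$ (or $r_2=\bot$) the left-hand side is $\bot$ and the conclusion holds by $\bot\sqsubseteq r$, whereas when all four resources are proper the claim is simply monotonicity of $\cup$. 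Being careful to enumerate these boundary cases exhaustively, rather than the arithmetic within any single case, is where the real work lies.
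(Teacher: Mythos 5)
Your proposal is correct and follows essentially the same route as the paper's proof: both reduce the monoid, distributivity, and monotonicity obligations to properties of set union via an exhaustive case split on which arguments equal $\bot$, with the $\bot$-cases absorbed by the unit and annihilator equations. The paper likewise treats the order-theoretic part by checking $\bot \sqsubseteq r$ and reducing $\sqsubseteq$ to $\subseteq$ on proper resources, so your identification of the $\bot$/proper boundary cases in monotonicity as the only delicate point matches where the paper's case analysis actually does its work.
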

\begin{proof}
Version resource semiring $(\mathcal{R},\oplus,\bot,\otimes,\emptyset,\sqsubseteq)$ induces a semilattice with $\oplus$ (join).
\begin{itemize}
\item $(\mathcal{R},\oplus,\bot,\otimes,\emptyset)$ is a semiring, that is:
    \begin{itemize}
        \item $(\mathcal{R},\oplus,\bot)$ is a commutative monoid, i.e., for all $p,q,r\in\mathcal{R}$
        \begin{itemize}
            \item (Associativity) $(p\oplus q) \oplus r = p\oplus (q \oplus r)$ holds since $\oplus$ is defined in associative manner with $\bot$.
            \item (Commutativity) $p\oplus q = q\oplus p$  holds since $\oplus$ is defined in commutative manner with $\bot$.
            \item (Identity element) $\bot \oplus p = p \oplus \bot = p$
        \end{itemize}
        \item $(\mathcal{R},\otimes,\emptyset)$ is a monoid, i.e., for all $p,q,r\in\mathcal{R}$
        \begin{itemize}
            \item (Associativity) $(p\otimes q) \otimes r = p\otimes (q \otimes r)$ holds since $\oplus$ is defined in associative manner with $\bot$.
            \item (Identity element) $\emptyset \otimes p = p \otimes \emptyset = p$
            \begin{itemize}
                \item if $p = \bot$ then $\emptyset \otimes \bot = \bot \otimes \emptyset = \bot$
                \item otherwise if $p \neq \bot$ then $\emptyset \otimes p = \emptyset \cup p = p$ and $p \otimes \emptyset = p \cup \emptyset = p$
            \end{itemize}
        \end{itemize}
        \item multiplication $\otimes$ distributes over addition $\oplus$, i.e., for all $p, q, r \in \mathcal{R},r\otimes(p\oplus q) = (r\otimes p) \oplus (r \otimes q)$ and $(p\oplus q)\otimes r = (p\otimes r) \oplus (q \otimes r)$
        \begin{itemize}
            \item if $r = \bot$ then $r\otimes(p\oplus q) = \bot$ and $(r\otimes p) \oplus (r \otimes q) = \bot \oplus \bot = \bot$.
            \item otherwise if $r\neq \bot$ and $p = \bot$ and $q \neq \bot$ then $r\otimes(p\oplus q) = r\otimes q = r\cup q = (r\cup r) \cup q = r\cup (r \cup q) = (r \oplus p) \cup (r \cup q) = (r\otimes p) \oplus (r \otimes q)$
            \item otherwise if $r\neq \bot$ and $p = \bot$ and $q = \bot$ then $r\otimes(p\oplus q) = r\otimes \bot = \bot$ and $(r\otimes p) \oplus (r \otimes q) = \bot \oplus \bot = \bot$.
            \item otherwise if $r\neq \bot$ and $p \neq \bot$ and $q \neq \bot$ then $r\otimes(p\oplus q) = r\cup(p\cup q) = (r\cup p)\cup (r\cup q) = (r\otimes p) \oplus (r \otimes q)$
        \end{itemize}
        The other cases are symmetrical cases.
        \item $\bot$ is absorbing for multiplication: $p\otimes \bot= \bot \otimes p = \bot$ for all $p\in\mathcal{R}$
    \end{itemize}
\item $(\mathcal{R},\sqsubseteq)$ is a bounded semilattice, that is
    \begin{itemize}
        \item $\sqsubseteq$ is a partial order on $\mathcal{R}$ such that the least upper bound of every two elements $p,q \in \mathcal{R}$ exists and is denoted by $p\oplus q$.
        \item there is a least element; for all $r\in\mathcal{R}$, $\bot \sqsubseteq r$.
    \end{itemize}
\item (Motonicity of $\oplus$) $p\sqsubseteq q$ implies $p\oplus r\sqsubseteq q\oplus r$ for all $p, q, r \in \mathcal{R}$
   \begin{itemize}
        \item if $r = \bot$ then $p\oplus r \sqsubseteq q\oplus r \Leftrightarrow p\subseteq q$, so this case is trivial.
        \item otherwise if $r \neq \bot, p = q = \bot$ then $p\oplus r \sqsubseteq q \oplus r \Leftrightarrow r\subseteq r$, so this case is trivial.
        \item otherwise if $r \neq \bot, p = \bot, q\neq \bot$ then $p\oplus r \sqsubseteq q \oplus r \Leftrightarrow r\subseteq q\cup r$, and $r\subseteq q\cup r$ holds in standard subset relation.
        \item otherwise if $r \neq \bot, p \neq \bot, q \neq \bot$ then $p\oplus r \sqsubseteq q \oplus r \Leftrightarrow p \cup r \subseteq q\cup r$, and  $p \subseteq q$ implies $p \cup r \subseteq q\cup r$.
    \end{itemize}
\item (Motonicity of $\otimes$) $p\sqsubseteq q$ implies $p\otimes r\sqsubseteq q\otimes r$ for all $p, q, r \in \mathcal{R}$
    \begin{itemize}
        \item if $r = \bot$ then $p\otimes r \sqsubseteq q\otimes r \Leftrightarrow \bot\subseteq \bot$, so this case is trivial.
        \item otherwise if $r \neq \bot, p = q = \bot$ then $p\otimes r \sqsubseteq q \otimes r \Leftrightarrow \bot\subseteq \bot$, so this case is trivial.
        \item otherwise if $r \neq \bot, p = \bot, q\neq \bot$ then $p\otimes r \sqsubseteq q \otimes r \Leftrightarrow \bot\subseteq q\cup r$, so this case is trivial.
        \item otherwise if $r \neq \bot, p \neq \bot, q \neq \bot$ then $p\otimes r \sqsubseteq q \otimes r \Leftrightarrow p \cup r \subseteq q\cup r$, and  $p \subseteq q$ implies $p \cup r \subseteq q\cup r$.
    \end{itemize}
\end{itemize}
\end{proof}

\begin{definition}[Version resource summation]
Using the addition $+$ of version resource semiring, summation of version resouce is defined as follows:
\begin{align*}
\sum_i r_i = r_1 \oplus \cdots \oplus r_n
\end{align*}
\end{definition}





\subsection{Context Properties}
\begin{definition}[Context concatenation]
\label{def:contextconcat}
Two typing contexts can be concatenated by "$,$" if they contain disjoint assumptions. 
Furthermore, the versioned assumptions appearing in both typing contexts can be combined using the context concatenation $+$ defined with the addition $\oplus$ in the version resource semiring as follows.
\begin{align*}
\emptyset + \Gamma &= \Gamma\\
(\Gamma,x:A)+\Gamma' &= (\Gamma + \Gamma'),x:A \hspace{1em} \text{iff}\hspace{0.5em} x \notin \mathrm{dom}(\Gamma')\\
\Gamma + \emptyset &= \Gamma\\
\Gamma+(\Gamma',x:A) &= (\Gamma + \Gamma'),x:A \hspace{1em} \text{iff}\hspace{0.5em} x \notin \mathrm{dom}(\Gamma)\\
(\Gamma,x:\verctype{A}{r})+(\Gamma',x:\verctype{A}{s}) &= (\Gamma + \Gamma'),x:\verctype{A}{r \,\oplus\, s}
\end{align*}
\end{definition}

\begin{definition}[Context multiplication by version resource]
\label{def:multiply}
Assuming that a context contains only version assumptions, denoted $[\Gamma]$ in typing rules, then $\Gamma$ can be multiplied by a version resource $r \in \mathcal{R}$ by using the product $\otimes$ in the version resource semiring, as follows.
\begin{align*}
r \cdot \emptyset\ =\ \emptyset \hspace{4em}
r \cdot (\Gamma,\, x:\verctype{A}{s})\ =\ (r \cdot \Gamma),\, x:\verctype{A}{r\,\otimes\, s}
\end{align*}    
\end{definition}

\begin{definition}[Context summation]
Using the context concatenation $+$, summation of typing contexts is defined as follows:
\begin{align*}
\displaystyle \bigcup_{i=1}^{n} \ \Gamma_i\ =\ \Gamma_1 + \cdots + \Gamma_n
\end{align*}
\end{definition}

\begin{definition}[Context partition]
\label{def:restriction}
For typing contexts $\Gamma_1$ and $\Gamma_2$, we define $\incl{\Gamma_1}{\Gamma_2}$ and $\excl{\Gamma_1}{\Gamma_2}$ as follows.
\begin{align*}
\incl{\Gamma_1}{\Gamma_2} &\triangleq \{ x:A\ |\ x \in \mathrm{dom}(\Gamma_1) \land x \in \mathrm{dom}(\Gamma_2)\}\\
\excl{\Gamma_1}{\Gamma_2} &\triangleq \{ x:A\ |\ x \in \mathrm{dom}(\Gamma_1) \land x \notin \mathrm{dom}(\Gamma_2)\}
\end{align*}
$\incl{\Gamma_1}{\Gamma_2}$ is a subsequence of $\Gamma_1$ that contains all the term variables that are \emph{included} in $\Gamma_2$, and
$\excl{\Gamma_1}{\Gamma_2}$ is a subsequence of $\Gamma_1$ that contains all the term variables that are \emph{not included} in $\Gamma_2$.
\end{definition}

Using $\incl{\Gamma_1}{\Gamma_2}$ and $\excl{\Gamma_1}{\Gamma_2}$, we state some corollaries about typing contexts.
These theorems follow straightforwardly from the definitions of \ref{def:restriction}.

\begin{lemma}[Context collapse]
\label{lemma:restriction}
For typing contexts $\Gamma_1$ and $\Gamma_2$,
\begin{align*}
    \incl{\Gamma_1}{\Gamma_2} + \excl{\Gamma_1}{\Gamma_2} = \Gamma_1
\end{align*}
\end{lemma}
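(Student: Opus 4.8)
The plan is to prove the identity by structural induction on $\Gamma_1$, reading $\incl{\Gamma_1}{\Gamma_2}$ and $\excl{\Gamma_1}{\Gamma_2}$ as the order-preserving subsequences of $\Gamma_1$ selected by the membership test $x \in \mathrm{dom}(\Gamma_2)$, exactly as the prose accompanying Definition~\ref{def:restriction} indicates. The structural fact I lean on throughout is that every assumption of $\Gamma_1$ lands in precisely one of the two partitions, so $\incl{\Gamma_1}{\Gamma_2}$ and $\excl{\Gamma_1}{\Gamma_2}$ have disjoint domains, both contained in $\mathrm{dom}(\Gamma_1)$. Before starting I would record the mild notational point that the set-builder definitions of $\incl{\cdot}{\cdot}$ and $\excl{\cdot}{\cdot}$ are to be understood as selecting the matching assumptions from $\Gamma_1$ in place, so that induction on the spine of $\Gamma_1$ is well-defined.

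In the base case $\Gamma_1 = \emptyset$ both partitions are empty and $\emptyset + \emptyset = \emptyset$, giving the claim. For the inductive step write $\Gamma_1 = \Gamma_1', x:A$; the versioned case $\Gamma_1', x:\verctype{A}{r}$ is handled identically, since both the partition and the relevant clauses of $+$ treat linear and versioned assumptions uniformly. I then split on whether $x \in \mathrm{dom}(\Gamma_2)$. If $x \in \mathrm{dom}(\Gamma_2)$, the head assumption is appended to the included part while the excluded part is unchanged, so the goal becomes $(\incl{\Gamma_1'}{\Gamma_2}, x:A) + \excl{\Gamma_1'}{\Gamma_2} = \Gamma_1', x:A$. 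If $x \notin \mathrm{dom}(\Gamma_2)$ the roles are reversed and the goal is $\incl{\Gamma_1'}{\Gamma_2} + (\excl{\Gamma_1'}{\Gamma_2}, x:A) = \Gamma_1', x:A$.

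In each subcase I apply the corresponding recursive clause of context concatenation (Definition~\ref{def:contextconcat}) that peels the head assumption $x:A$ off the left, respectively right, operand. That clause is licensed precisely because $x \notin \mathrm{dom}(\Gamma_1')$ by well-formedness of $\Gamma_1$, hence $x$ is fresh for both $\incl{\Gamma_1'}{\Gamma_2}$ and $\excl{\Gamma_1'}{\Gamma_2}$, which are sub-contexts of $\Gamma_1'$. After peeling, the remaining obligation is $\incl{\Gamma_1'}{\Gamma_2} + \excl{\Gamma_1'}{\Gamma_2} = \Gamma_1'$, which is exactly the induction hypothesis; re-attaching $x:A$ then yields $\Gamma_1', x:A$ with its original position restored, so the reconstructed context coincides with $\Gamma_1$ on the nose rather than merely up to exchange.

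The one step I would be most careful about is ensuring that the final, versioned-assumption-merging clause of $+$ (the one combining $x:\verctype{A}{r}$ and $x:\verctype{A}{s}$ into $x:\verctype{A}{r \oplus s}$) is never triggered. This is guaranteed by the disjointness noted above: since $\incl{\Gamma_1}{\Gamma_2}$ collects variables in $\mathrm{dom}(\Gamma_2)$ and $\excl{\Gamma_1}{\Gamma_2}$ those outside it, no variable is shared, so $+$ degenerates to plain concatenation and no resource $\oplus$ ever occurs. Everything else is a routine unfolding of the definitions, so I expect the whole argument to be short once this freshness/disjointness invariant is stated explicitly.
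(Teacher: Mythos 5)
Your proof is correct and matches what the paper intends: the paper gives no explicit proof of this lemma, stating only that it ``follows straightforwardly from the definitions,'' and your structural induction on $\Gamma_1$ --- with the case split on $x \in \mathrm{dom}(\Gamma_2)$, the freshness of the head variable licensing the peeling clauses of $+$, and the observation that domain-disjointness keeps the resource-merging clause from firing --- is precisely that routine argument carried out in full.
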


\begin{lemma}[Context shuffle]
\label{lemma:shuffle}
For typing contexts $\Gamma_1$, $\Gamma_2$, $\Gamma_3$ and $\Gamma_4$, and variable $x$ and type $A$:
\begin{align*}
(\Gamma_1,x:A,\Gamma'_1)+\Gamma_2 &= (\Gamma_1 + \incl{\Gamma_2}{\Gamma_1}),x:A,(\Gamma_1' + \excl{\Gamma_2}{\Gamma_1}) \tag{1}\\
\Gamma_1 + (\Gamma_2,x:A,\Gamma'_2) &= (\excl{\Gamma_1}{\Gamma'_2} + \Gamma_2),x:A,(\incl{\Gamma_1}{\Gamma'_2} + \Gamma'_2) \tag{2}\\
(\Gamma_1,\Gamma_2)+(\Gamma_3,\Gamma_4) &= \left((\Gamma_1+\incl{\Gamma_3}{\Gamma_1}+\incl{\Gamma_4}{\Gamma_1}), (\Gamma_2+\excl{\Gamma_3}{\Gamma_1}+\excl{\Gamma_4}{\Gamma_1})\right) \tag{3}
\end{align*}
\end{lemma}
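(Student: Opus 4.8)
The plan is to prove the three identities by structural induction on a suitably chosen context, driving each computation with the recursive definition of context concatenation (Definition \ref{def:contextconcat}), the domain-based partition (Definition \ref{def:restriction}), and the Context collapse lemma (Lemma \ref{lemma:restriction}). Throughout I would use that $+$ behaves as a commutative-monoid operation on the graded assumptions: whenever a variable occurs as a graded assumption in both operands its resource is combined with $\oplus$, which is associative, commutative, and has $\bot$ as unit by Lemma \ref{proof:semiring}; variables occurring in only one operand are carried over unchanged; and the linear assumption $x:A$ in (1) and (2) occurs in exactly one operand (since for $+$ to be defined a shared variable must be graded in both), so it is never merged and keeps its position.

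For identity (1) I would induct on $\Gamma_2$, peeling off its rightmost assumption $y:C$, which matches the shape of the clauses of $+$. In the base case $\Gamma_2=\emptyset$ both $\incl{\emptyset}{\Gamma_1}$ and $\excl{\emptyset}{\Gamma_1}$ are empty and both sides collapse to $\Gamma_1,x:A,\Gamma_1'$. In the step I case-split on whether $y\in\mathrm{dom}(\Gamma_1)$. If $y\notin\mathrm{dom}(\Gamma_1)$, then $y:C$ belongs to $\excl{\Gamma_2}{\Gamma_1}$; I apply the clause $\Gamma+(\Gamma',x{:}A)=(\Gamma+\Gamma'),x{:}A$ to move $y:C$ to the far right, invoke the induction hypothesis on the remaining prefix, and re-absorb $y:C$ as the rightmost entry of the second block, using that $y\notin\mathrm{dom}(\Gamma_1')$ as well. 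If $y\in\mathrm{dom}(\Gamma_1)$, then $y$ is graded, lies in $\incl{\Gamma_2}{\Gamma_1}$, and its resource must be merged with the matching occurrence inside $\Gamma_1$; here I use commutativity and associativity of $+$ (equivalently, admissibility of exchange on graded assumptions) to bring the two occurrences of $y$ adjacent so that the merging clause of $+$ fires, then apply the induction hypothesis. Identity (2) I would prove symmetrically by induction on $\Gamma_1$, partitioning its rightmost assumption according to membership in $\mathrm{dom}(\Gamma_2')$. It is worth noting that (2) is not an immediate corollary of (1) by commutativity alone: (1) routes the shared variables by their occurrence in the left block $\Gamma_2$, whereas (2) routes them by occurrence in the right block $\Gamma_2'$, which places variables shared with neither block on syntactically different sides of $x:A$.

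For identity (3) I would first decompose each of $\Gamma_3$ and $\Gamma_4$ with Context collapse as $\incl{\Gamma_3}{\Gamma_1}+\excl{\Gamma_3}{\Gamma_1}$ and $\incl{\Gamma_4}{\Gamma_1}+\excl{\Gamma_4}{\Gamma_1}$, exposing the whole right operand as a sum of four blocks. Using associativity and commutativity of $+$, I would then route the two included blocks, whose domains are contained in $\mathrm{dom}(\Gamma_1)$, to merge with $\Gamma_1$, and the two excluded blocks, which are disjoint from $\mathrm{dom}(\Gamma_1)$, to merge with $\Gamma_2$. The first group has domain $\mathrm{dom}(\Gamma_1)$ and the second is disjoint from it, so their concatenation is a well-formed ``$,$'' context matching the right-hand side. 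Alternatively, (3) can be obtained by two guided applications of (1), first repositioning $\Gamma_3$ and then $\Gamma_4$.

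I expect the main obstacle to be the mismatch between the position-sensitive merging clause of $+$ in Definition \ref{def:contextconcat}, which only fires when the shared variable is the last entry of \emph{both} operands, and the need to merge variables occurring at arbitrary positions. Making this rigorous requires first establishing the structural facts that $+$ is associative and commutative and that graded assumptions may be permuted, each of which reduces to the commutative-monoid laws for $\oplus$ from Lemma \ref{proof:semiring}; once these are in hand, every step above is a routine rewrite. The remaining care is purely clerical: checking the disjointness side conditions of the ``$,$'' concatenation and of the clauses of $+$ at each stage, so that every context expression produced along the way is well-formed.
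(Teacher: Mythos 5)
The paper never actually proves this lemma: it is one of several context facts introduced with the remark that they ``follow straightforwardly from the definitions,'' so there is no official proof to compare against, and your proposal is a reasonable, essentially correct elaboration of what that remark hides. Your decomposition --- induction on the context being redistributed with a case split on domain membership in $\Gamma_1$ (resp.\ $\Gamma'_2$), plus Context collapse (Lemma \ref{lemma:restriction}) for identity (3) --- is the natural one, and your observation that (2) is not (1) plus commutativity, because the two identities route variables shared with neither block to opposite sides of $x:A$, is exactly right. You also correctly isolate the one place where ``straightforward'' conceals real work: the clauses of Definition \ref{def:contextconcat} merge a shared graded variable only when it is the \emph{last} entry of both operands, so associativity, commutativity, and exchange for graded assumptions must be established first (reducing to the monoid laws for $\oplus$ in Lemma \ref{proof:semiring}); the paper uses these freely in the substitution lemmas without proving them either. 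Two small things to tighten if you write this out: identities (1) and (2) carry the implicit side condition $x\notin\mathrm{dom}(\Gamma_2)$ (resp.\ $x\notin\mathrm{dom}(\Gamma_1)$), without which the right-hand side mentions $x$ twice and is ill-formed; and in your inductive step for (1) the subcase $y\notin\mathrm{dom}(\Gamma_1)$ but $y\in\mathrm{dom}(\Gamma'_1)$ (a graded $y$ shared with the suffix) is not just an append --- it needs the same exchange-and-merge argument you give for $y\in\mathrm{dom}(\Gamma_1)$, only targeted at the second block.
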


\begin{lemma}[Composition of context shuffle]
\label{lemma:shufflecomposition}
For typing contexts $\Gamma_i$ and $\Gamma_i'$ for $i\in \mathbb{N}$, there exixts typing contexts $\Gamma$ and $\Gamma'$ such that:
\begin{align*}
\bigcup_{i}(\Gamma_i,\Gamma_i') = \Gamma, \Gamma'\ \land\ \bigcup_i(\Gamma_i+\Gamma_i') = \Gamma + \Gamma'
\end{align*}
\end{lemma}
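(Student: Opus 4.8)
The plan is to prove the statement by induction on the number $n$ of pairs in the family $\{(\Gamma_i,\Gamma_i')\}_{i=1}^{n}$, treating $\bigcup_i$ as the $n$-fold iterated context concatenation $+$ of Definition~\ref{def:contextconcat}. The key observation is that the binary case already exists as part $(3)$ of Lemma~\ref{lemma:shuffle} (Context shuffle): it rewrites the $+$-combination of two comma-concatenations as a single comma-concatenation, with the two halves built explicitly from the partitions $\incl{\cdot}{\cdot}$ and $\excl{\cdot}{\cdot}$. The whole difficulty is to push this through the induction while keeping a \emph{single} pair $(\Gamma,\Gamma')$ that simultaneously witnesses both required equalities. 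For the base case $n=1$ we simply set $\Gamma = \Gamma_1$ and $\Gamma' = \Gamma_1'$; then $\bigcup_i(\Gamma_i,\Gamma_i') = \Gamma,\Gamma'$ and $\bigcup_i(\Gamma_i+\Gamma_i') = \Gamma_1+\Gamma_1' = \Gamma+\Gamma'$ hold immediately.

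For the inductive step, suppose the claim holds for $n$ pairs, and apply the hypothesis to $\{(\Gamma_i,\Gamma_i')\}_{i=1}^{n}$ to obtain a single pair $(\Gamma,\Gamma')$ with $\bigcup_{i=1}^{n}(\Gamma_i,\Gamma_i') = \Gamma,\Gamma'$ and $\bigcup_{i=1}^{n}(\Gamma_i+\Gamma_i') = \Gamma+\Gamma'$. On the concatenation side I would then compute $\bigcup_{i=1}^{n+1}(\Gamma_i,\Gamma_i') = (\Gamma,\Gamma') + (\Gamma_{n+1},\Gamma_{n+1}')$ and invoke Lemma~\ref{lemma:shuffle}$(3)$ with $\Gamma_1:=\Gamma$, $\Gamma_2:=\Gamma'$, $\Gamma_3:=\Gamma_{n+1}$, $\Gamma_4:=\Gamma_{n+1}'$, which yields the single concatenation $\hat\Gamma,\hat\Gamma'$ where $\hat\Gamma = \Gamma+\incl{\Gamma_{n+1}}{\Gamma}+\incl{\Gamma_{n+1}'}{\Gamma}$ and $\hat\Gamma' = \Gamma'+\excl{\Gamma_{n+1}}{\Gamma}+\excl{\Gamma_{n+1}'}{\Gamma}$. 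This fixes the candidate witness for $n+1$.

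It remains to verify that this \emph{same} $(\hat\Gamma,\hat\Gamma')$ also satisfies the addition equality, and this is the step where the synchronization is really doing work. Using the induction hypothesis, $\bigcup_{i=1}^{n+1}(\Gamma_i+\Gamma_i') = (\Gamma+\Gamma') + (\Gamma_{n+1}+\Gamma_{n+1}')$, so I must show $\hat\Gamma + \hat\Gamma' = \Gamma+\Gamma'+\Gamma_{n+1}+\Gamma_{n+1}'$. Expanding $\hat\Gamma + \hat\Gamma'$ and reordering summands by the commutativity and associativity of context $+$ (which lift from the commutative-monoid structure of $(\mathcal{R},\oplus,\bot)$ established in Lemma~\ref{proof:semiring}), the terms regroup as $\Gamma+\Gamma'+\bigl(\incl{\Gamma_{n+1}}{\Gamma}+\excl{\Gamma_{n+1}}{\Gamma}\bigr)+\bigl(\incl{\Gamma_{n+1}'}{\Gamma}+\excl{\Gamma_{n+1}'}{\Gamma}\bigr)$; then Lemma~\ref{lemma:restriction} (Context collapse) reassembles each bracket into $\Gamma_{n+1}$ and $\Gamma_{n+1}'$ respectively, closing the equation.

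The main obstacle is precisely this last reconciliation: Lemma~\ref{lemma:shuffle}$(3)$ dictates the partition $(\hat\Gamma,\hat\Gamma')$ entirely from the comma-concatenation side, so there is no freedom to choose it to fit the $+$-side, and one has to \emph{check} that the collapse identity $\incl{\Gamma_{n+1}}{\Gamma}+\excl{\Gamma_{n+1}}{\Gamma} = \Gamma_{n+1}$ together with commutativity of $+$ exactly recovers the additive equation. Care is needed that the well-definedness side conditions of $+$ (disjointness of the linear assumptions, matching of shared versioned assumptions) are preserved, but these come for free because the partitions $\incl{\cdot}{\cdot}$ and $\excl{\cdot}{\cdot}$ split $\Gamma_{n+1}$ (and $\Gamma_{n+1}'$) along $\mathrm{dom}(\Gamma)$ without introducing new variables.
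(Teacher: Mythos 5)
Your proof is correct, and there is in fact nothing in the paper to compare it against: the paper states Lemma~\ref{lemma:shufflecomposition} without proof, remarking only that these context lemmas ``follow straightforwardly from the definitions.'' Your induction on the number of pairs --- letting Lemma~\ref{lemma:shuffle}~(3) dictate the witness $(\hat\Gamma,\hat\Gamma')$ from the comma-concatenation side and then verifying the $+$-side via commutativity/associativity of context addition together with the collapse identity $\incl{\Gamma_{n+1}}{\Gamma}+\excl{\Gamma_{n+1}}{\Gamma}=\Gamma_{n+1}$ of Lemma~\ref{lemma:restriction} --- is exactly the elaboration that remark is gesturing at, and you correctly identify the only real content of the lemma, namely that a single pair must witness both equalities simultaneously.
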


\begin{lemma}[Distribution of version resouce over context addition]
\label{lemma:distributivelaw}
For a typing context $\Gamma$ and resources $r_i \in R$:
\begin{align*}
(r_1 \cdot \Gamma) + (r_2\cdot \Gamma) &= (r_1 \oplus r_2)\cdot\Gamma\\
\bigcup_i(r_i \cdot \Gamma) &= (\sum_i r_i)\cdot\Gamma
\end{align*}
\end{lemma}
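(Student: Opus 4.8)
The plan is to prove the first identity by structural induction on the context $\Gamma$, which by the side condition of Definition~\ref{def:multiply} contains only versioned assumptions, and then to derive the second ($n$-ary) identity by a straightforward induction on the number of resources using the first as the inductive engine. The crucial algebraic fact both identities rest on is the right distributivity of $\otimes$ over $\oplus$ in the version resource semiring, namely $(r_1 \otimes s) \oplus (r_2 \otimes s) = (r_1 \oplus r_2) \otimes s$, which is already established in Lemma~\ref{proof:semiring}.

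For the first identity, the base case $\Gamma = \emptyset$ is immediate since both sides reduce to $\emptyset$ by the first clause of Definition~\ref{def:multiply}. For the inductive step, take $\Gamma = \Gamma', x:\verctype{A}{s}$. Unfolding the multiplication on each summand gives $r_1 \cdot \Gamma = (r_1 \cdot \Gamma'), x:\verctype{A}{r_1 \otimes s}$ and likewise for $r_2$. Because the two contexts share exactly the variable $x$ (they are built from the same $\Gamma$), the concatenation must fall into the last clause of Definition~\ref{def:contextconcat}, the one that merges a common versioned assumption by taking $\oplus$ of the resources. Applying the induction hypothesis to the prefix and the semiring distributivity to the resource annotation on $x$ yields $(r_1 \oplus r_2) \cdot \Gamma'$ paired with $x:\verctype{A}{(r_1 \oplus r_2) \otimes s}$, which is exactly $(r_1 \oplus r_2) \cdot \Gamma$ by Definition~\ref{def:multiply}.

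The second identity then follows by induction on $n$. The case $n = 1$ is trivial. For the step, split off the last term, writing $\bigcup_{i=1}^{n+1}(r_i \cdot \Gamma) = \bigl(\bigcup_{i=1}^{n}(r_i \cdot \Gamma)\bigr) + (r_{n+1}\cdot\Gamma)$, apply the induction hypothesis to rewrite the union as $(\sum_{i=1}^{n} r_i)\cdot\Gamma$, and then invoke the already-proved binary identity to combine it with $r_{n+1}\cdot\Gamma$ into $\bigl((\sum_{i=1}^{n} r_i)\oplus r_{n+1}\bigr)\cdot\Gamma = (\sum_{i=1}^{n+1} r_i)\cdot\Gamma$.

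I expect no genuine obstacle here: the argument is entirely structural and the only nontrivial ingredient is the distributive law, which is in hand. The one point demanding care is confirming that, throughout the induction, context concatenation always selects the merging clause for the shared variable rather than the side-conditioned append clauses. This is guaranteed precisely because both operands are multiples of the \emph{same} $\Gamma$ and hence have identical domains, so the domain-disjointness preconditions of the append clauses never apply.
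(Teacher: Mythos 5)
Your proof is correct: the structural induction on $\Gamma$ with the semiring's right distributivity $(r_1\otimes s)\oplus(r_2\otimes s)=(r_1\oplus r_2)\otimes s$ handling the head assumption, followed by induction on $n$ for the $n$-ary form, is exactly the "straightforward from the definitions" argument the paper intends, and your observation that the merging clause of context concatenation always applies (because both operands are multiples of the same $\Gamma$ and thus have identical domains and variable order) is the one point worth making explicit. The paper itself states this lemma without proof, so your write-up supplies precisely the missing detail and contains no gap.
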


\begin{lemma}[Disjoint context collapse]
\label{lemma:collapse}
Given typing contexts $\Gamma_1$, $\Delta$, and $\Gamma_2$ such that $\Gamma_1$ and $\Gamma_2$ are disjoint, then we can conclude the following.
\begin{align*}
(\Gamma_1+\Delta+\Gamma_2) = (\Gamma_1+\incl{\Delta}{\Gamma_1}),\excl{\Delta}{(\Gamma_1,\Gamma_2)},(\Gamma_2+\incl{\Delta}{\Gamma_2})
\end{align*}
\end{lemma}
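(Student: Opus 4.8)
The plan is to reduce the claim to the already-established collapse and shuffle lemmas, exploiting that $\Gamma_1$ and $\Gamma_2$ are disjoint. The central observation is that, since $\mathrm{dom}(\Gamma_1) \cap \mathrm{dom}(\Gamma_2) = \emptyset$, every assumption of $\Delta$ falls into exactly one of three groups: its variable occurs in $\Gamma_1$, or in $\Gamma_2$, or in neither. These groups are captured syntactically by $\incl{\Delta}{\Gamma_1}$, $\incl{\Delta}{\Gamma_2}$, and $\excl{\Delta}{(\Gamma_1,\Gamma_2)}$, and disjointness guarantees the three are pairwise domain-disjoint and jointly exhaust $\Delta$.

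First I would split $\Delta$ itself. Applying Lemma~\ref{lemma:restriction} (Context collapse) with respect to $\Gamma_1$ gives $\Delta = \incl{\Delta}{\Gamma_1} + \excl{\Delta}{\Gamma_1}$, and applying it again to $\excl{\Delta}{\Gamma_1}$ with respect to $\Gamma_2$ yields $\excl{\Delta}{\Gamma_1} = \incl{\Delta}{\Gamma_2} + \excl{\Delta}{(\Gamma_1,\Gamma_2)}$. Here the disjointness of $\Gamma_1$ and $\Gamma_2$ is what lets me rewrite the nested partitions: $\incl{(\excl{\Delta}{\Gamma_1})}{\Gamma_2} = \incl{\Delta}{\Gamma_2}$ because membership in $\Gamma_2$ already entails non-membership in $\Gamma_1$, and $\excl{(\excl{\Delta}{\Gamma_1})}{\Gamma_2} = \excl{\Delta}{(\Gamma_1,\Gamma_2)}$ by definition. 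This step requires a small auxiliary fact about the partition operators, provable directly from Definition~\ref{def:restriction}.

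Next I would regroup. Writing $\Delta_1 = \incl{\Delta}{\Gamma_1}$, $\Delta_2 = \incl{\Delta}{\Gamma_2}$, and $\Delta_0 = \excl{\Delta}{(\Gamma_1,\Gamma_2)}$, I substitute the decomposition $\Delta = \Delta_1 + \Delta_2 + \Delta_0$ into $\Gamma_1 + \Delta + \Gamma_2$ and use the associativity and commutativity of context concatenation $+$ inherited from the commutativity and associativity of $\oplus$ in the version resource semiring to move $\Delta_1$ next to $\Gamma_1$ and $\Delta_2$ next to $\Gamma_2$, obtaining $(\Gamma_1 + \Delta_1) + \Delta_0 + (\Gamma_2 + \Delta_2)$. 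By construction $\mathrm{dom}(\Delta_1) \subseteq \mathrm{dom}(\Gamma_1)$, $\mathrm{dom}(\Delta_2) \subseteq \mathrm{dom}(\Gamma_2)$, and $\mathrm{dom}(\Delta_0)$ is disjoint from both, so the three composite contexts $\Gamma_1 + \Delta_1$, $\Delta_0$, and $\Gamma_2 + \Delta_2$ have pairwise disjoint domains; hence each remaining $+$ collapses to plain concatenation $,$, giving $(\Gamma_1 + \Delta_1), \Delta_0, (\Gamma_2 + \Delta_2)$. The precise left-to-right order of the statement is then recovered through the reorderings tracked by Lemma~\ref{lemma:shuffle} (Context shuffle).

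I expect the main obstacle to be the bookkeeping of this reassociation: context concatenation $+$ is associative and commutative only up to the reordering tracked by the shuffle lemmas, so the argument must either invoke Lemma~\ref{lemma:shuffle} explicitly at each rearrangement or instead proceed by induction on the length of $\Delta$ with a three-way case analysis on whether the trailing assumption lies in $\Gamma_1$, in $\Gamma_2$, or in neither. The inductive route is more mechanical but forces re-derivation of the nested-partition identities at each step, whereas the collapse-and-regroup route isolates those identities once. Either way, the delicate point is confirming that the grade of each shared variable is combined exactly once, via $\oplus$, which is precisely where disjointness of $\Gamma_1$ and $\Gamma_2$ is essential: it prevents a single assumption of $\Delta$ from being merged into both $\Gamma_1$ and $\Gamma_2$.
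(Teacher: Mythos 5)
Your argument is correct. The paper supplies no proof of this lemma---it is one of the context-property corollaries asserted to ``follow straightforwardly from the definitions'' of the partition operators in Definition~\ref{def:restriction}---and your three-way split of $\Delta$ into $\incl{\Delta}{\Gamma_1}$, $\incl{\Delta}{\Gamma_2}$, and $\excl{\Delta}{(\Gamma_1,\Gamma_2)}$ (with disjointness of $\Gamma_1$ and $\Gamma_2$ justifying the nested-partition identities), followed by regrouping under the up-to-reordering associativity and commutativity of $+$ and the observation that the three resulting blocks have pairwise disjoint domains, is exactly the straightforward elaboration the authors intend and matches how the lemma is used in the \textsc{sub} case of Lemma~\ref{lemma:substitution2}.
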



\subsection{Substituions Properties}
\label{appendix:lemsubstitution}

\begin{lemma}[Well-typed linear substitution]
\label{lemma:substitution1}
\begin{align*}
    \left.
    \begin{aligned}
          \Delta \vdash t': A\\
          \Gamma,x:A,\Gamma' \vdash t:B
    \end{aligned}
    \right\}
    \hspace{1em}\Longrightarrow\hspace{1em}
    \Gamma + \Delta + \Gamma' \vdash [t'/x]t:B
\end{align*}
\end{lemma}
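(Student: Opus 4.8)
The plan is to prove the well-typed linear substitution lemma by induction on the typing derivation of $\Gamma, x:A, \Gamma' \vdash t : B$. The linear assumption $x:A$ (an ungraded, non-versioned assumption) appears exactly once in the context, reflecting that $x$ is used linearly in $t$; the goal is to show that replacing $x$ by a term $t'$ of the same type $A$, typed in its own context $\Delta$, yields a well-typed term whose context merges $\Delta$ into the position where $x$ lived, via context concatenation $+$. I would proceed by case analysis on the final typing rule used to derive $\Gamma, x:A, \Gamma' \vdash t : B$.

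First I would dispatch the base cases. For $(\textsc{var})$, the context is a single linear assumption, so either $t = x$ — in which case $[t'/x]t = t'$, $B = A$, and the conclusion $\emptyset + \Delta + \emptyset = \Delta \vdash t' : A$ follows immediately from the hypothesis on $t'$ — or $t$ is a different variable, which cannot occur since a linear context $x:A$ has no room for another free variable. The $(\textsc{int})$ case is vacuous because its context is empty and thus cannot contain $x:A$. Then for each inductive rule I would split the context $\Gamma, x:A, \Gamma'$ according to how the rule partitions assumptions, apply the induction hypothesis to the premise that actually binds $x$, and reassemble using the context-algebra lemmas established earlier in the excerpt. The key rules to handle are $(\textsc{app})$ and $(\textsc{let})$, where the context is split by $+$: here the linear variable $x$ must land in exactly one of the two subcontexts (since linear variables are not duplicated by $+$), and I would use the context-shuffle lemma (Lemma~\ref{lemma:shuffle}) to locate $x$ on the correct side and then rearrange the resulting concatenation so that $\Delta$ is inserted in the right place. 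For the structural rules $(\textsc{weak})$, $(\textsc{der})$, and $(\textsc{sub})$, $x$ could appear among the versioned assumptions being manipulated, so I would check that substitution commutes with grading the context; but since $x:A$ is a \emph{linear} (ungraded) assumption, it is untouched by these operations, and the induction hypothesis applies directly to the premise. The $(\textsc{abs})$ and $(\textsc{pr})$ cases require tracking how pattern-bound or promoted variables interact with $x$; by the Barendregt convention I may assume the bound names are fresh and disjoint from $x$ and from $\mathrm{dom}(\Delta)$, so the substitution pushes through the binder without capture, and context multiplication $r \cdot [\Gamma]$ in $(\textsc{pr})$ does not affect the linear slot.

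The main obstacle I expect is the bookkeeping in the context-splitting rules, specifically ensuring that after applying the induction hypothesis the reassembled context is \emph{literally} $\Gamma + \Delta + \Gamma'$ and not merely equal up to reordering. This is precisely what the shuffle lemmas (Lemmas~\ref{lemma:shuffle} and \ref{lemma:collapse}) are for: in $(\textsc{app})$, say $t = \app{t_1}{t_2}$ with $\Gamma, x:A, \Gamma' = \Gamma_1 + \Gamma_2$ and $x$ occurring in (the subcontext typing) $t_1$, I would use $(\textsc{shuffle})$ part (1) to write the split in the form $(\Gamma_1^{\,\prime}, x:A, \Gamma_1^{\,\prime\prime}) + \Gamma_2$, apply the induction hypothesis to $t_1$ to insert $\Delta$ in the $x$-slot, and then invoke the disjoint-collapse lemma to verify the outer sum is exactly $\Gamma + \Delta + \Gamma'$. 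The subtlety is that versioned assumptions shared between $\Gamma_1$ and $\Gamma_2$ are combined by $\oplus$, so I must confirm that inserting $\Delta$ (which contributes the free variables of $t'$) does not disturb those combined resources — which holds because $\Delta$'s contribution is confined to the linear position and the disjointness/shuffle lemmas guarantee the rest of the context composition is preserved. Once these algebraic identities are lined up, each case closes routinely.
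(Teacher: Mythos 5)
Your proposal follows essentially the same route as the paper's proof: induction on the derivation of $\Gamma,x:A,\Gamma'\vdash t:B$, locating the linear assumption in exactly one summand of the context splits for (\textsc{app})/(\textsc{let}) via Lemma~\ref{lemma:shuffle}, and reassembling with Lemma~\ref{lemma:restriction}. The only small imprecision is your remark about (\textsc{pr}): that case (like (\textsc{ver}) and (\textsc{veri})) is in fact vacuous, since the context $r\cdot[\Gamma]$ contains only versioned assumptions and thus cannot contain the linear $x:A$ at all.
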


\begin{proof}
This proof is given by induction on the structure of $\Gamma,x:A,\Gamma' \vdash t:B$ (assumption 2).
Consider the cases for the last rule used in the typing derivation of assumption 2.

\begin{itemize}
\item Case (\textsc{int})
\begin{center}
    \begin{minipage}{.27\linewidth}
        \infrule[int]{
             \\
        }{
            \emptyset \vdash n : \textsf{Int}
        }
    \end{minipage}
\end{center}
In this case, the above typing context is empty ($= \emptyset$), so this case holds trivially.\\

\item Case (\textsc{var})
\begin{center}
    \begin{minipage}{.50\linewidth}
        \infrule[var]{
            \vdash B
        }{
            y:B \vdash y:B
        }
    \end{minipage}
\end{center}
We are given
\begin{gather*}
\Gamma=\Gamma'=\emptyset,\quad
x=t=y,\quad
A = B
~.
\end{gather*}
Now the conclusion of the lemma is
\begin{align*}
\Delta \vdash [t'/y]y:B
~.
\end{align*}
Since $[t'/y]y=t'$ from the definition of substitution, the conclusion of the lemma is assumption 1 itself.\\

\item Case (\textsc{abs})
\begin{center}
    \begin{minipage}{.75\linewidth}
        \infrule[abs]{
            - \vdash p : B_1 \rhd \Delta'
            \andalso
            \Gamma, x:A, \Gamma', \Delta' \vdash t : B_2
        }{
            \Gamma, x:A, \Gamma' \vdash \lam{p}{t} : \ftype{B_1}{B_2}
        }
    \end{minipage}
\end{center}
In this case, by applying the induction hypothesis to the second premise, we know the following:
\begin{align*}
    \Gamma + \Delta + (\Gamma', \Delta') \vdash [t'/x]t : B_2
\end{align*}
where $y:B$ is disjoint with $\Gamma$, $\Delta$, and $\Gamma'$.
Thus, $\Gamma + \Delta + (\Gamma', \Delta') = (\Gamma + \Delta + \Gamma'), \Delta'$ from Lemma \ref{lemma:shuffle} (2), the typing derivation above is equal to the following:
\begin{align*}
    (\Gamma + \Delta + \Gamma'), \Delta' \vdash [t'/x]t : B_2
\end{align*}
We then reapply (\textsc{abs}) to obtain the following:
\begin{center}
    \begin{minipage}{.75\linewidth}
        \infrule[abs]{
             - \vdash p : B_1 \rhd \Delta'
            \andalso
            (\Gamma + \Delta + \Gamma'), \Delta' \vdash [t'/x]t : B_2
        }{
            \Gamma + \Delta + \Gamma' \vdash \lam{p}{[t'/x]t} : \ftype{B_1}{B_2}
        }
    \end{minipage}
\end{center}
By the definition of substitution $\lam{p}{[t'/x]t}=[t'/x](\lam{p}{t})$, and we obtain the conclusion of the lemma.
\\

\item Case (\textsc{let})
\begin{center}
    \begin{minipage}{.75\linewidth}
        \infrule[let]{
             \Gamma_1 \vdash t_1 : \vertype{r}{A}
             \andalso
             \Gamma_2, x:\verctype{A}{r} \vdash t_2 : B
        }{
             \Gamma_1 + \Gamma_2 \vdash \clet{x}{t_1}{t_2} : B
        }
    \end{minipage}
\end{center}
This case is similar to the case (\textsc{app}).
\\

\item Case (\textsc{app})
\begin{center}
    \begin{minipage}{.75\linewidth}
        \infrule[app]{
             \Gamma_1 \vdash t_1 : \ftype{B_1}{B_2}
             \andalso
             \Gamma_2 \vdash t_2 : B_1
        }{
             \Gamma_1 + \Gamma_2 \vdash \app{t_1}{t_2} : B_2
        }
    \end{minipage}
\end{center}
We are given
\begin{gather*}
\Gamma, x:A, \Gamma' = \Gamma_1 + \Gamma_2,\quad
t = \app{t_1}{t_2},\quad
B = B_2
~.
\end{gather*}
By the definition of the context addition $+$, the linear assumption $x:A$ is contained in only one of $\Gamma_1$ or $\Gamma_2$.
\begin{itemize}
\item Suppose $(x:A) \in \Gamma_1$ and $(x:A) \notin \Gamma_2$.\\
Let $\Gamma_1'$ and $\Gamma_1''$ be typing contexts such that they satisfy $\Gamma_1 = (\Gamma_1', x:A, \Gamma_1'')$.
The last typing derivation of ($\textsc{app}$) is rewritten as follows.
\begin{center}
    \begin{minipage}{.60\linewidth}
        \infrule[app]{
             \Gamma_1', x:A, \Gamma_1'' \vdash t_1 : \ftype{B_1}{B_2}
            \\
             \Gamma_2 \vdash t_2 : B_1
        }{
             (\Gamma_1', x:A, \Gamma_1'' ) + \Gamma_2 \vdash \app{t_1}{t_2} : B_2
        }
    \end{minipage}
\end{center}
Now, we compare the typing contexts between the lemma and the above conclusion as follows:
\begin{align*}
(\Gamma, x:A, \Gamma')
    &= (\Gamma_1 + \Gamma_2)\\
    &= (\Gamma_1', x:A, \Gamma_1'' ) + \Gamma_2\tag{$\because$ $\Gamma_1 = (\Gamma_1', x:A, \Gamma_1'')$}\\
    &= (\Gamma_1' + \incl{\Gamma_2}{\Gamma_1'}),x:A,(\Gamma_1'' + \excl{\Gamma_2}{\Gamma_1'})\tag{$\because$ Lemma \ref{lemma:shuffle} (1)}
\end{align*}
By the commutativity of "$,$", we can take $\Gamma$ and $\Gamma'$ arbitrarily so that they satisfy the above equation. So here we know $\Gamma = (\Gamma_1' + \incl{\Gamma_2}{\Gamma_1'})$ and $\Gamma' = (\Gamma_1'' + \excl{\Gamma_2}{\Gamma_1'})$.\par
We then apply the induction hypothesis to each of the two premises and reapply (\textsc{app}) as follows:
\begin{center}
    \begin{minipage}{.75\linewidth}
        \infrule[app]{
             \Gamma_1' + \Delta + \Gamma_1'' \vdash [t'/x]t_1 : \ftype{B_1}{B_2}
            \\
             \Gamma_2 \vdash t_2 : B_1
        }{
             (\Gamma_1' + \Delta + \Gamma_1'' ) + \Gamma_2 \vdash \app{([t'/x]t_1)}{t_2} : B_2
        }
    \end{minipage}
\end{center}
Since $\app{([t'/x]t_1)}{t_2}=[t'/x](\app{t_1}{t_2})$ if $x\notin FV(t_2)$, the conclusion of the above derivation is equivalent to the conclusion of the lemma except for the typing contexts.\par
Finally, we must show that $(\Gamma + \Delta + \Gamma') = ((\Gamma_1 + \Delta + \Gamma_1'') + \Gamma_2)$.
This holds from the following reasoning:
\begin{align*}
(\Gamma + \Delta + \Gamma')
    &= (\Gamma_1' + \incl{\Gamma_2}{\Gamma_1'}) + \Delta + (\Gamma_1'' + \excl{\Gamma_2}{\Gamma_1'})\tag{$\because$ $\Gamma = (\Gamma_1' + \incl{\Gamma_2}{\Gamma_1'})$ and $\Gamma' = (\Gamma_1'' + \excl{\Gamma_2}{\Gamma_1'})$}\\
    &= \Gamma_1' + \incl{\Gamma_2}{\Gamma_1'} + \Delta + \Gamma_1'' + \excl{\Gamma_2}{\Gamma_1'}\tag{$\because$ $+$ associativity}\\
    &= \Gamma_1' + \Delta + \Gamma''_1 + \incl{\Gamma_2}{\Gamma_1'} + \excl{\Gamma_2}{\Gamma_1'}\tag{$\because$ $+$ commutativity}\\
    &= (\Gamma_1' + \Delta + \Gamma''_1) + (\incl{\Gamma_2}{\Gamma_1'} + \excl{\Gamma_2}{\Gamma_1'})\tag{$\because$ $+$ associativity}\\
    &= (\Gamma_1' + \Delta + \Gamma''_1) + \Gamma_2\tag{$\because$ Lemma \ref{lemma:restriction}}
\end{align*}
Thus, we obtain the conclusion of the lemma.

\item Suppose $(x:A) \notin \Gamma_1$ and $(x:A) \in \Gamma_2$\\
Let $\Gamma_2'$ and $\Gamma_2''$ be typing contexts such that they satisfy  $\Gamma_2 = (\Gamma_2', x:A, \Gamma_2'')$.
The last typing derivation of ($\textsc{app}$) is rewritten as follows.
\begin{center}
    \begin{minipage}{.75\linewidth}
        \infrule[app]{
             \Gamma_1 \vdash t_1 : \ftype{B_1}{B_2}
             \andalso
             \Gamma_2', x:A, \Gamma_2'' \vdash t_2 : B_1
        }{
             \Gamma_1 + (\Gamma_2', x:A, \Gamma_2'') \vdash \app{t_1}{t_2} : B_2
        }
    \end{minipage}
\end{center}
This case is similar to the case $(x:A)\in \Gamma_1$, but using \ref{lemma:shuffle} (2) instead of \ref{lemma:shuffle} (1).\\
\end{itemize}

\item Case (\textsc{weak})
\begin{center}
    \begin{minipage}{.55\linewidth}
        \infrule[weak]{
            \Gamma_1, x:A, \Gamma_2 \vdash t : B
            \andalso
            \vdash \Delta'
        }{
            (\Gamma_1, x:A, \Gamma_2) + \verctype{\Delta'}{0} \vdash t : B
        }
    \end{minipage}
\end{center}
In this case, the linear assumption $x:A$ is not contained in versioned context $\verctype{\Delta'}{0}$.
We then compare the typing contexts between the conclusion of the lemma and that of (\textsc{weak}) as follows:
\begin{align*}
    (\Gamma, x:A, \Gamma')
    &= (\Gamma_1, x:A, \Gamma_2) + \verctype{\Delta'}{0}\\
    &= (\Gamma_1 + \incl{(\verctype{\Delta'}{0})}{\Gamma_1}),x:A,(\Gamma_2 + \excl{(\verctype{\Delta'}{0})}{\Gamma_1})\tag{$\because$ Lemma \ref{lemma:shuffle} (1)}
\end{align*}
By the commutativity of "$,$", we can take $\Gamma$ and $\Gamma'$ arbitrarily so that they satisfy the above equation. So here we obtain $\Gamma=\Gamma_1 + \incl{(\verctype{\Delta'}{0})}{\Gamma_1}$ and $\Gamma'=\Gamma_2 + \excl{(\verctype{\Delta'}{0})}{\Gamma_1}$.
We then apply the induction hypothesis to each of the premise and reapply (\textsc{weak}) as follows:
\begin{center}
    \begin{minipage}{.65\linewidth}
        \infrule[weak]{
            \Gamma_1 + \Delta + \Gamma_2 \vdash [t'/x]t : B
            \andalso
            \vdash \Delta'
        }{
            (\Gamma_1 + \Delta + \Gamma_2) + \verctype{\Delta'}{0} \vdash [t'/x]t : B
        }
    \end{minipage}
\end{center}
Since $\app{([t'/x]t_1)}{([t'/x]t_2)}=[t'/x](\app{t_1}{t_2})$, the conclusion of the above derivation is equivalent to the conclusion of the lemma except for typing contexts.\par
Finally, we must show that $(\Gamma_1 + \Delta + \Gamma_2) + \verctype{\Delta'}{0} = \Gamma + \Delta + \Gamma'$.
This holds from the following reasoning:
\begin{align*}
    (\Gamma + \Delta + \Gamma')
    &= (\Gamma_1 + \incl{(\verctype{\Delta'}{0})}{\Gamma_1})+\Delta+(\Gamma_2 + \excl{(\verctype{\Delta'}{0})}{\Gamma_1})\tag{$\because$ $\Gamma=\Gamma_1 + \incl{(\verctype{\Delta'}{0})}{\Gamma_1}$ and $\Gamma'=\Gamma_2 + \excl{(\verctype{\Delta'}{0})}{\Gamma_1}$}\\
    &= \Gamma_1 + \incl{(\verctype{\Delta'}{0})}{\Gamma_1} + \Delta + \Gamma_2 + \excl{(\verctype{\Delta'}{0})}{\Gamma_1} \tag{$\because$ $+$ associativity}\\
    &= \Gamma_1 + \Delta + \Gamma_2 + \incl{(\verctype{\Delta'}{0})}{\Gamma_1} + \excl{(\verctype{\Delta'}{0})}{\Gamma_1} \tag{$\because$ $+$ commutativity}\\
    &= (\Gamma_1 + \Delta + \Gamma_2) + (\incl{(\verctype{\Delta'}{0})}{\Gamma_1} + \excl{(\verctype{\Delta'}{0})}{\Gamma_1}) \tag{$\because$ $+$ associativity}\\
    &= (\Gamma_1 + \Delta + \Gamma_2) + \verctype{\Delta'}{0}\tag{$\because$ Lemma \ref{lemma:restriction}}
\end{align*}
Thus, we obtain the conclusion of the lemma.
\\

\item Case (\textsc{der})
\begin{center}
    \begin{minipage}{.55\linewidth}
        \infrule[der]{
            \Gamma, x:A ,\Gamma'', y:B_1 \vdash t : B_2
        }{
            \Gamma, x:A, \Gamma'', y:\verctype{B_1}{1} \vdash t : B_2
        }
    \end{minipage}
\end{center}
In this case, a linear assumption $x:A$ cannot be a versioned assumption $y:\verctype{B_1}{1}$.
Applying the induction hypothesis to the premise, we obtain the following:
\begin{align*}
    \Gamma + \Delta + (\Gamma'', y:B_1) \vdash [t'/x]t : B_2
\end{align*}
Note that $\Gamma + \Delta + (\Gamma'', y:B_1)= (\Gamma + \Delta + \Gamma''), y:B_1$ holds because $y:B_1$ is a linear assumption and is disjoint with $\Gamma$, $\Delta$, and $\Gamma''$.
Thus, the above judgement is equivalent to the following:
\begin{align*}
    (\Gamma + \Delta + \Gamma''), y:B_1 \vdash [t'/x]t : B_2
\end{align*}
We then reapply (\textsc{der}) to obtain the following:
\begin{center}
    \begin{minipage}{.65\linewidth}
        \infrule[der]{
            (\Gamma + \Delta + \Gamma''), y:B_1 \vdash [t'/x]t : B_2
        }{
            (\Gamma + \Delta + \Gamma''), y:\verctype{B_1}{1} \vdash [t'/x]t : B_2
        }
    \end{minipage}
\end{center}
Finally, since $y:\verctype{B_1}{1}$ is disjoint with $\Gamma+\Delta+\Gamma''$, $((\Gamma + \Delta + \Gamma''), y:\verctype{B_1}{1}) = \Gamma + \Delta + (\Gamma'', y:\verctype{B_1}{1})$ holds. Thus, the conclusion of the above derivation is equivalent to the following:
\begin{align*}
    \Gamma + \Delta + (\Gamma'', y:\verctype{B_1}{1}) \vdash [t'/x]t : B_2
\end{align*}
Thus, we obtain the conclusion of the lemma.
\\

\item Case (\textsc{pr})
\begin{center}
    \begin{minipage}{.4\linewidth}
        \infrule[pr]{
            \verctype{\Gamma}{} \vdash t : B
            \andalso
            \vdash r
        }{
            r\cdot\verctype{\Gamma}{} \vdash [t] : \vertype{r}{B} 
        }
    \end{minipage}
\end{center}
This case holds trivially, because the typing context $[\Gamma]$ contains only versioned assumptions and does not contain any linear assumptions.
\\

\item Case (\textsc{ver})
\begin{center}
    \begin{minipage}{.65\linewidth}
        \infrule[ver]{
            \verctype{\Gamma_i}{} \vdash t_i : A
            \andalso
            \vdash \{\overline{l_i}\}
        }{
            \bigcup_i(\{l_i\}\cdot [\Gamma_i]) \vdash \nvval{\overline{l_i=t_i}} : \vertype{\{\overline{l_i}\}}{A}
        }
    \end{minipage}
\end{center}
This case holds trivially, because the typing context of the conclusion contains only versioned assumptions (by $[\Gamma_i]$ in the premise) and does not contain any linear assumptions.
\\

\item Case (\textsc{veri})
\begin{center}
    \begin{minipage}{.65\linewidth}
        \infrule[veri]{
            \verctype{\Gamma_i}{} \vdash t_i : A
            \andalso
            \vdash \{\overline{l_i}\}
            \andalso
            l_k \in \{\overline{l_i}\}
        }{
            \bigcup_i(\{l_i\}\cdot [\Gamma_i]) \vdash \ivval{\overline{l_i=t_i}}{l_k} : A
        }
    \end{minipage}
\end{center}
This case holds trivially, because the typing context of the conclusion contains only versioned assumptions (by $[\Gamma_i]$ in the premise) and does not contain any linear assumptions.
\\

\item Case (\textsc{extr})
\begin{center}
    \begin{minipage}{.45\linewidth}
        \infrule[extr]{
            \Gamma \vdash t : \vertype{r}{A}
            \andalso
            l \in r
        }{
            \Gamma \vdash t.l : A
        }
    \end{minipage}
\end{center}
In this case, we apply the induction hypothesis to the premise and then reapply (\textsc{extr}), we obtain the conclusion of the lemma.\\

\item Case (\textsc{sub})
\begin{center}
    \begin{minipage}{.50\linewidth}
        \infrule[\textsc{sub}]{
            \Gamma,y:\verctype{B'}{r}, \Gamma' \vdash t : B
            \andalso
            r \sqsubseteq s
            \andalso
            \vdash s
        }{
            \Gamma,y:\verctype{B'}{s}, \Gamma' \vdash t : B
        }
    \end{minipage}
\end{center}
In this case, a linear assumption $x:A$ cannot be a versioned assumption $y:\verctype{B_1}{s}$, and only one of $(x:A)\in\Gamma$ or $(x:A)\in\Gamma'$ holds.
In either case, applying the induction hypothesis to the premise and reappling (\textsc{sub}), we obtain the conclusion of the lemma.

\end{itemize}
\end{proof}


\begin{lemma}[Well-typed versioned substitution]
\label{lemma:substitution2}
\begin{align*}
    \left.
    \begin{aligned}\relax
          [\Delta] \vdash t':A\\
          \Gamma,x:\verctype{A}{r},\Gamma' \vdash t:B
    \end{aligned}
    \right\}
    \hspace{1em}\Longrightarrow\hspace{1em}
    \Gamma + r\cdot\Delta + \Gamma' \vdash [t'/x]t:B
\end{align*}
\end{lemma}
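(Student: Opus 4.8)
The plan is to prove the lemma by induction on the derivation of the second assumption $\Gamma, x:\verctype{A}{r}, \Gamma' \vdash t:B$, case-splitting on the last typing rule and mirroring the structure of Lemma \ref{lemma:substitution1}, but now tracking the grade $r$ and the scaled context $r\cdot\Delta$. The crucial structural difference from the linear case is that a graded assumption may be split across a context concatenation: whereas a linear $x:A$ sits in exactly one summand of $\Gamma_1+\Gamma_2$, the versioned $x:\verctype{A}{r}$ can be distributed as $x:\verctype{A}{r_1}$ and $x:\verctype{A}{r_2}$ with $r_1\oplus r_2 = r$ by Definition \ref{def:contextconcat}. Consequently, the cases that were vacuously trivial in Lemma \ref{lemma:substitution1} — promotion (\textsc{pr}) and versioned records (\textsc{ver}), (\textsc{veri}) — here become the substantive cases, since those rules produce contexts consisting solely of versioned assumptions and hence can legitimately contain $x$.

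First I would dispatch the degenerate cases. (\textsc{int}) and (\textsc{var}) are vacuous, because their contexts (empty, or a single linear assumption $y:B$) cannot contain a versioned assumption $x:\verctype{A}{r}$. The genuine consumption of $x$ occurs through (\textsc{der}): when dereliction acts on $x$ itself we have $r=1$, so $r\cdot\Delta=\Delta$ by the identity law $1\otimes s=s$, the premise types $x$ linearly, and I finish by invoking the linear substitution lemma (Lemma \ref{lemma:substitution1}), reading $[\Delta]\vdash t':A$ as a special case of an arbitrary context typing $t'$. When (\textsc{der}), (\textsc{sub}), or (\textsc{weak}) act on a variable other than $x$, I apply the induction hypothesis to the premise and reapply the rule, using the shuffle lemmas (Lemma \ref{lemma:shuffle}) to relocate $x$ within the additive split and the collapse lemma (Lemma \ref{lemma:restriction}) to reassemble the partition, exactly as in the linear proof. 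For (\textsc{weak}) adding $x$ with grade $0=\bot$, the substitution is vacuous ($[t'/x]t=t$) and the goal context $r\cdot\Delta = \bot\cdot\Delta = \verctype{\Delta}{0}$ is recovered by a single application of (\textsc{weak}), whose side condition $\vdash\Delta$ follows from the well-formedness implicit in $[\Delta]\vdash t':A$.

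Next come the additive cases (\textsc{app}) and (\textsc{let}). Here the input context splits as $\Gamma_1+\Gamma_2$ and $x$ may occur in both with grades $r_1,r_2$ satisfying $r_1\oplus r_2 = r$. Applying the induction hypothesis to each premise yields derivations whose contexts carry $r_1\cdot\Delta$ and $r_2\cdot\Delta$; recombining with $+$ and invoking the distribution law $(r_1\cdot\Delta)+(r_2\cdot\Delta)=(r_1\oplus r_2)\cdot\Delta = r\cdot\Delta$ from Lemma \ref{lemma:distributivelaw} produces the required scaled context, with the residual bookkeeping handled as in Lemma \ref{lemma:substitution1}. The (\textsc{sub}) case acting on $x$ raises the grade from some $r'$ with $r'\sqsubseteq r$ to $r$; applying the induction hypothesis at $r'$ gives $r'\cdot\Delta$, and monotonicity of $\otimes$ (Lemma \ref{proof:semiring}) yields $r'\otimes s\sqsubseteq r\otimes s$ assumption-wise, so repeated applications of (\textsc{sub}) lift each versioned assumption of $\Delta$ up to the grade demanded by $r$.

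Finally, the scaling cases (\textsc{pr}), (\textsc{ver}), (\textsc{veri}), which I expect to be the main obstacle. In (\textsc{pr}) the conclusion context is $r'\cdot[\Gamma'']$, so $x:\verctype{A}{r}$ descends from $x:\verctype{A}{r_0}$ in $[\Gamma'']$ with $r = r'\otimes r_0$. Applying the induction hypothesis at grade $r_0$ substitutes inside the promotion body, using the syntactic identity $[t'/x]\pr{t}=\pr{[t'/x]t}$, and yields a context containing $r_0\cdot\Delta$; reapplying (\textsc{pr}) multiplies through by $r'$, and the associativity of context scaling $r'\cdot(r_0\cdot\Delta)=(r'\otimes r_0)\cdot\Delta = r\cdot\Delta$ (a consequence of associativity of $\otimes$ together with Definition \ref{def:multiply}) delivers the goal. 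The (\textsc{ver}) and (\textsc{veri}) cases generalize this to a family of branches: $x$ occurs in each $[\Gamma_i]$ with grade $r_i$, its overall grade being $r=\sum_i \{l_i\}\otimes r_i$, and after applying the induction hypothesis branchwise I reassemble using both the associativity of scaling and the generalized distribution law $\bigcup_i (s_i\cdot\Delta)=(\sum_i s_i)\cdot\Delta$ of Lemma \ref{lemma:distributivelaw}. The delicacy lies in simultaneously scaling, splitting, and summing the single context $\Delta$ across all branches while keeping the residual contexts $\Gamma,\Gamma'$ aligned via the shuffle and collapse lemmas; since all the needed semiring identities are already available, the difficulty here is organizational rather than conceptual.
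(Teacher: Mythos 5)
Your proposal is correct and follows essentially the same route as the paper's proof: induction on the typing derivation of the graded judgement, with the same case structure, the same reduction of the dereliction-on-$x$ case to the linear substitution lemma, the same three-way split for the additive rules using the shuffle, collapse, and distribution lemmas, and the same treatment of the scaling rules (\textsc{pr}), (\textsc{ver}), (\textsc{veri}) via associativity and distributivity of $\otimes$ over context addition. The only differences are presentational (e.g., you justify the (\textsc{pr}) case by associativity of scaling where the paper routes through an absorption identity, and you spell out the grade-lifting in the (\textsc{sub})-on-$x$ case slightly more explicitly), neither of which changes the argument.
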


\begin{proof}
This proof is given by induction on structure of $\Gamma,x:\verctype{A}{r},\Gamma' \vdash t:B$ (assumption 2).
Consider the cases for the last rule used in the typing derivation of assumption 2.

\begin{itemize}
\item Case (\textsc{int})
\begin{center}
    \begin{minipage}{.22\linewidth}
        \infrule[int]{
             \\
        }{
            \emptyset \vdash n : \textsf{Int}
        }
    \end{minipage}
\end{center}
This case holds trivially because the typing context of (\textsc{int}) is empty ($=\emptyset$).\\

\item Case (\textsc{var})
\begin{center}
    \begin{minipage}{.35\linewidth}
        \infrule[var]{
            \vdash B
        }{
            y:B \vdash y:B
        }
    \end{minipage}
\end{center}
In this case, $x:\verctype{A}{r}$ is a versioned assumption and $y:B$ is a linear assumption, so $x\neq y$ holds, and yet the typing context besides $y:B$ is empty.
Thus, there are no versioned variables to be substituted, so this case holds trivially.\\

\item Case (\textsc{abs})
\begin{center}
    \begin{minipage}{.8\linewidth}
        \infrule[abs]{
            - \vdash p : B_1 \rhd \Delta'
            \andalso
            \Gamma, x:\verctype{A}{r}, \Gamma', \Delta' \vdash t : B_2
        }{
            \Gamma, x:\verctype{A}{r}, \Gamma' \vdash \lam{p}{t} : \ftype{B_1}{B_2}
        }
    \end{minipage}
\end{center}
In this case, we know the following by applying induction hypothesis to the partial derivation of (\textsc{abs}):
\begin{align*}
    \Gamma + r\cdot\Delta + (\Gamma', \Delta') \vdash [t'/x]t : B_2
\end{align*}
where $\Delta'$ ($\mathrm{dom}(\Delta') = \{y\}$) is disjoint with $\Gamma$, $\Delta$, and $\Gamma'$.
Thus, $\Gamma + r\cdot\Delta + (\Gamma', \Delta') = (\Gamma + r\cdot\Delta + \Gamma'), \Delta'$ from Lemma \ref{lemma:shuffle} (2), the typing derivation above is equal to the following:
\begin{align*}
    (\Gamma + r\cdot\Delta + \Gamma'), \Delta' \vdash [t'/x]t : B_2
\end{align*}
We then reapply (\textsc{abs}) to obtain the following:
\begin{center}
    \begin{minipage}{.80\linewidth}
        \infrule[abs]{
             - \vdash p : B_1 \rhd \Delta'
            \andalso
            (\Gamma + r\cdot\Delta + \Gamma'), \Delta' \vdash [t'/x]t : B_2
        }{
            \Gamma + r\cdot\Delta + \Gamma' \vdash \lam{p}{[t'/x]t} : \ftype{B_1}{B_2}
        }
    \end{minipage}
\end{center}
Since $\lam{p}{[t'/x]t}=[t'/x](\lam{p}{t})$ from the definition of substitution, we obtain the conclusion of the lemma.\\

\item Case (\textsc{app})
\begin{center}
    \begin{minipage}{.75\linewidth}
        \infrule[app]{
            \Gamma_1 \vdash t_1 : \ftype{B_1}{B_2}
            \andalso
            \Gamma_2 \vdash t_2 : B_1
        }{
             \Gamma_1 + \Gamma_2 \vdash \app{t_1}{t_2} : B_2
        }
    \end{minipage}
\end{center}

We are given
\begin{gather*}
\Gamma, x:\verctype{A}{r}, \Gamma' = \Gamma_1 + \Gamma_2,\quad
t = \app{t_1}{t_2},\quad
B = B_2
~.
\end{gather*}
By the definition of the context addition $+$, the linear assumption $x:A$ is contained in either or both of the typing context $\Gamma_1$ or $\Gamma_2$.
\begin{itemize}
\item Suppose $(x:\verctype{A}{r}) \in \Gamma_1$ and $x \notin \mathrm{dom}(\Gamma_2)$\\
Let $\Gamma_1'$ and $\Gamma_1''$ be typing contexts such that they satisfy $\Gamma_1 = (\Gamma_1', x:\verctype{A}{r}, \Gamma_1'')$.
The last derivation of (\textsc{app}) is rewritten as follows:
\begin{center}
    \begin{minipage}{.9\linewidth}
        \infrule[app]{
            \Gamma_1', x:\verctype{A}{r}, \Gamma_1'' \vdash t_1 \ftype{B_1}{B_2}
            \andalso
            \Gamma_2 \vdash t_2 : B_1
        }{
             (\Gamma_1', x:\verctype{A}{r}, \Gamma_1'' ) + \Gamma_2 \vdash \app{t_1}{t_2} : B_2
        }
    \end{minipage}
\end{center}
We compare the typing contexts between the conclusion of the lemma and that of the above derivation to obtain the following:
\begin{align*}
(\Gamma, x:\verctype{A}{r}, \Gamma')
    &= (\Gamma_1', x:\verctype{A}{r}, \Gamma_1'' ) + \Gamma_2\\
    &= (\Gamma_1' + \incl{\Gamma_2}{\Gamma_1'}),x:\verctype{A}{r},(\Gamma_1'' + \excl{\Gamma_2}{\Gamma_1'})\tag{$\because$ Lemma \ref{lemma:shuffle} (1)}
\end{align*}
By the commutativity of "$,$", we can take $\Gamma$ and $\Gamma'$ arbitrarily so that they satisfy the above equation. So here we know $\Gamma = (\Gamma_1' + \incl{\Gamma_2}{\Gamma_1'})$ and $\Gamma' = (\Gamma_1'' + \excl{\Gamma_2}{\Gamma_1'})$.\par
We then apply the induction hypothesis to each of the two premises of the last derivation and reapply (\textsc{app}) as follows:
\begin{center}
    \begin{minipage}{.85\linewidth}
        \infrule[app]{
             \Gamma_1' + r\cdot\Delta + \Gamma_1'' \vdash [t'/x]t_1 : \ftype{B_1}{B_2}
            \\
             \Gamma_2 \vdash [t'/x]t_2 : B_1
        }{
             (\Gamma_1' + r\cdot\Delta + \Gamma_1'' ) + \Gamma_2 \vdash \app{([t'/x]t_1)}{([t'/x]t_2)} : B_2
        }
    \end{minipage}
\end{center}
Since $\app{([t'/x]t_1)}{([t'/x]t_2)}=[t'/x](\app{t_1}{t_2})$, the conclusion of the above derivation
is equivalent to the conclusion of the lemma except for the typing contexts.
Finally, we must show that $(\Gamma + r\cdot\Delta + \Gamma') = ((\Gamma_1 + r\cdot\Delta + \Gamma_1'') + \Gamma_2)$.
This holds from the following reasoning:
\begin{align*}
(\Gamma + r\cdot\Delta + \Gamma')
    &= (\Gamma_1' + \incl{\Gamma_2}{\Gamma_1'}) + r\cdot\Delta + (\Gamma_1'' + \excl{\Gamma_2}{\Gamma_1'})\tag{$\because$ $\Gamma = (\Gamma_1' + \incl{\Gamma_2}{\Gamma_1'})$ \& $\Gamma' = (\Gamma_1'' + \excl{\Gamma_2}{\Gamma_1'})$}\\
    &= \Gamma_1' + \incl{\Gamma_2}{\Gamma_1'} + r\cdot\Delta + \Gamma_1'' + \excl{\Gamma_2}{\Gamma_1'}\tag{$\because$ $+$ associativity}\\
    &= \Gamma_1' + r\cdot\Delta + \Gamma''_1 + \incl{\Gamma_2}{\Gamma_1'} + \excl{\Gamma_2}{\Gamma_1'}\tag{$\because$ $+$ commutativity}\\
    &= (\Gamma_1' + r\cdot\Delta + \Gamma''_1) + (\incl{\Gamma_2}{\Gamma_1'} + \excl{\Gamma_2}{\Gamma_1'})\tag{$\because$ $+$ associativity}\\
    &= (\Gamma_1' + r\cdot\Delta + \Gamma''_1) + \Gamma_2\tag{$\because$ Lemma \ref{lemma:restriction}}
\end{align*}
Thus, we obtain the conclusion of the lemma.

\item Suppose $x \notin \mathrm{dom}(\Gamma_1)$ and $(x:\verctype{A}{r}) \in \Gamma_2$\\
Let $\Gamma_2'$ and $\Gamma_2''$ be typing contexts such that they satisfy $\Gamma_2 = (\Gamma_2', x:\verctype{A}{r}, \Gamma_2'')$.
The last typing derivation of ($\textsc{app}$) is rewritten as follows:
\begin{center}
    \begin{minipage}{.8\linewidth}
        \infrule[app]{
            \Gamma_1 \vdash t_1 : \ftype{B_1}{B_2}
            \andalso
            \Gamma_2', x:\verctype{A}{r}, \Gamma_2'' \vdash t_2 : B_1
        }{
             \Gamma_1 + (\Gamma_2', x:\verctype{A}{r}, \Gamma_2'') \vdash \app{t_1}{t_2} : B_2
        }
    \end{minipage}
\end{center}
This case is similar to the case $(x:\verctype{A}{r})\in \Gamma_1$ and $x \notin \mathrm{dom}(\Gamma_2)$, but using \ref{lemma:shuffle} (2) instead of \ref{lemma:shuffle} (1).\\

\item Suppose $(x:\verctype{A}{r_1}) \in \Gamma_1$ and $(x:\verctype{A}{r_2}) \in \Gamma_2$ where $r=r_1\oplus r_2$.\\
Let $\Gamma_1'$, $\Gamma_1''$, $\Gamma_2'$, and $\Gamma_2''$ be typing contexts such that they satisfy $\Gamma_1 = (\Gamma_1', x:\verctype{A}{r_1}, \Gamma_1'')$ and $\Gamma_2 = (\Gamma_2', x:\verctype{A}{r_2}, \Gamma_2'')$.
The last derivation of (\textsc{app}) is rewritten as follows:
\begin{center}
    \begin{minipage}{.80\linewidth}
        \infrule[app]{
             \Gamma_1', x:\verctype{A}{r_1}, \Gamma_1'' \vdash t_1 : \ftype{B_1}{B_2}
            \\
             \Gamma_2', x:\verctype{A}{r_2}, \Gamma_2'' \vdash t_2 : B_1
        }{
             (\Gamma_1', x:\verctype{A}{r_1}, \Gamma_1'') + (\Gamma_2', x:\verctype{A}{r_2}, \Gamma_2'') \vdash \app{t_1}{t_2} : B_2
        }
    \end{minipage}
\end{center}
Now, we compare the typing contexts between the lemma and the above conclusion as follows:
\begin{align*}
(\Gamma, x:\verctype{A}{r}, \Gamma')
    &= (\Gamma_1', x:\verctype{A}{r_1}, \Gamma_1'') + (\Gamma_2', x:\verctype{A}{r_2}, \Gamma_2'')\\
    &= (\Gamma_1', \Gamma_1'', x:\verctype{A}{r_1}) + (\Gamma_2', \Gamma_2'', x:\verctype{A}{r_2})\tag{$\because$ $,$ commutativity}\\
    &= ((\Gamma_1', \Gamma_1'') + (\Gamma_2', \Gamma_2'')), x:\verctype{A}{r_1\oplus r_2}\tag{$\because$ $+$ definiton}\\
    &= ((\Gamma_1'+\incl{\Gamma'_2}{\Gamma'_1}+\incl{\Gamma''_2}{\Gamma'_1}), (\Gamma''_1+\excl{\Gamma'_2}{\Gamma'_1}+\excl{\Gamma''_2}{\Gamma'_1})),x:\verctype{A}{r_1\oplus r_2}\tag{$\because$ Lemma \ref{lemma:shuffle} (3)}\\
    &= (\Gamma_1'+\incl{\Gamma'_2}{\Gamma'_1}+\incl{\Gamma''_2}{\Gamma'_1}), (\Gamma''_1+\excl{\Gamma'_2}{\Gamma'_1}+\excl{\Gamma''_2}{\Gamma'_1}), x:\verctype{A}{r_1\oplus r_2}\tag{$\because$ $,$ associativity}\\
    &= (\Gamma_1'+\incl{\Gamma'_2}{\Gamma'_1}+\incl{\Gamma''_2}{\Gamma'_1}),x:\verctype{A}{r_1\oplus r_2}, (\Gamma''_1+\excl{\Gamma'_2}{\Gamma'_1}+\excl{\Gamma''_2}{\Gamma'_1})\tag{$\because$ $,$ commutativity}
\end{align*}
By the commutativity of "$,$", we can take $\Gamma$ and $\Gamma'$ arbitrarily so that they satisfy the above equation. So here we know $\Gamma = (\Gamma_1'+\incl{\Gamma'_2}{\Gamma'_1}+\incl{\Gamma''_2}{\Gamma'_1})$ and $\Gamma' = (\Gamma''_1+\excl{\Gamma'_2}{\Gamma'_1}+\excl{\Gamma''_2}{\Gamma'_1})$.\par
We then apply the induction hypothesis to each of the two premises of the last derivation and reapply (\textsc{app}) as follows:
\begin{center}
    \begin{minipage}{\linewidth}
        \infrule[app]{
             \Gamma_1' + r_1\cdot\Delta + \Gamma_1'' \vdash [t'/x]t_1 : \ftype{B_1}{B_2}
            \\
             \Gamma_2' + r_2\cdot\Delta + \Gamma_2'' \vdash [t'/x]t_2 : B_1
        }{
             (\Gamma_1' + r_1\cdot\Delta + \Gamma_1'') + (\Gamma_2' + r_2\cdot\Delta + \Gamma_2'') \vdash \app{([t'/x]t_1)}{([t'/x]t_2)} : B_2
        }
    \end{minipage}
\end{center}
Since $\app{([t'/x]t_1)}{([t'/x]t_2)}=[t'/x](\app{t_1}{t_2})$, the conclusion of the above derivation is equivalent to the conclusion of the lemma except for the typing contexts.
Finally, we must show that $\Gamma + r\cdot\Delta + \Gamma' = (\Gamma_1' + r_1\cdot\Delta + \Gamma_1'') + (\Gamma_2' + r_2\cdot\Delta + \Gamma_2'')$.
\begin{align*}
(\Gamma + r\cdot\Delta + \Gamma')
    &= (\Gamma_1'+\incl{\Gamma'_2}{\Gamma'_1}+\incl{\Gamma''_2}{\Gamma'_1}) + (r_1\oplus r_2)\cdot\Delta +  (\Gamma''_1+\excl{\Gamma'_2}{\Gamma'_1}+\excl{\Gamma''_2}{\Gamma'_1})\tag{$\because$ $r=r_1\oplus r_2$ \& $\Gamma = (\Gamma_1'+\incl{\Gamma'_2}{\Gamma'_1}+\incl{\Gamma''_2}{\Gamma'_1})$ \& $\Gamma' = (\Gamma''_1+\excl{\Gamma'_2}{\Gamma'_1}+\excl{\Gamma''_2}{\Gamma'_1})$}\\
    &= \Gamma_1' + (r_1\oplus r_2)\cdot\Delta + \Gamma''_1 + (\incl{\Gamma'_2}{\Gamma'_1}+\excl{\Gamma'_2}{\Gamma'_1}) + (\incl{\Gamma''_2}{\Gamma'_1}+\excl{\Gamma''_2}{\Gamma'_1}) \tag{$\because$ $+$ associativity \& commutativity}\\
    &= \Gamma_1' + (r_1\oplus r_2)\cdot\Delta + \Gamma''_1 + \Gamma'_2 + \Gamma''_2\tag{$\because$ Lemma \ref{lemma:restriction}}\\
    &= \Gamma_1' + r_1\cdot\Delta + r_2\cdot\Delta + \Gamma''_1 + \Gamma'_2 + \Gamma''_2\tag{$\because$ Lemma \ref{lemma:distributivelaw}}\\
    &= (\Gamma_1' + r_1\cdot\Delta + \Gamma''_1) + (\Gamma'_2 + r_2\cdot\Delta + \Gamma''_2)\tag{$\because$ $+$ associativity and commutativity}
\end{align*}
Thus, we obtain the conclusion of the lemma.
\\
\end{itemize}

\item Case (\textsc{weak})
\begin{center}
    \begin{minipage}{.4\linewidth}
        \infrule[weak]{
            \Gamma'' \vdash t : B
            \andalso
            \vdash \Delta'
        }{
            \Gamma'' + \verctype{\Delta'}{0} \vdash t : B
        }
    \end{minipage}
\end{center}

In this case, we know $(\Gamma, x:\verctype{A}{r}, \Gamma') = \Gamma'' + \verctype{\Delta'}{0}$.
There are two cases where the versioned assumption $x:\verctype{A}{r}$ is contained in $\verctype{\Delta'}{0}$ and not included.

\begin{itemize}
\item Suppose $(x:\verctype{A}{r})\in\verctype{\Delta'}{0}$\\
We know $r=0$.
Let $\Delta_1$ and $\Delta_2$ be typing context such that $\Delta' = (\Delta_1, x:\verctype{A}{0}, \Delta_2)$.
The last derivation is rewritten as follows:
\begin{center}
    \begin{minipage}{.55\linewidth}
        \infrule[weak]{
            \Gamma'' \vdash t : B
            \andalso
            \vdash \Delta_1+\Delta+\Delta_2
        }{
            \Gamma'' + [\Delta_1, x:\verctype{A}{0}, \Delta_2]_0 \vdash t : B
        }
    \end{minipage}
\end{center}

We compare the typing contexts between the conclusion of the lemma and that of the above derivation to obtain the following:
\begin{align*}
(\Gamma, x:\verctype{A}{0}, \Gamma')
    &= \Gamma'' + [\Delta_1, x:\verctype{A}{0}, \Delta_2]_0\tag{$\because$ $\Delta' = (\Delta_1, x:\verctype{A}{0}, \Delta_2)$}\\
    &= \Gamma'' + (\verctype{\Delta_1}{0}, x:\verctype{A}{0}, \verctype{\Delta_2}{0}) \tag{$\because$ definiton of $[\Gamma]_0$}\\
    &= (\incl{\Gamma''}{\verctype{\Delta_2}{0}} + \verctype{\Delta_1}{0}), x:\verctype{A}{0}, (\excl{\Gamma''}{\verctype{\Delta_2}{0}} + \verctype{\Delta_2}{0})\tag{$\because$ Lemma \ref{lemma:shuffle} (2)}
\end{align*}
By the commutativity of "$,$", we can take $\Gamma$ and $\Gamma'$ arbitrarily so that they satisfy the above equation. So here we know $\Gamma = (\incl{\Gamma''}{\verctype{\Delta_2}{0}} + \verctype{\Delta_1}{0})$ and $\Gamma' = (\excl{\Gamma''}{\verctype{\Delta_2}{0}} + \verctype{\Delta_2}{0})$.\par
We then apply the induction hypothesis to the premise of the last derivation and reapply (\textsc{weak}) as follows:
\begin{center}
    \begin{minipage}{.6\linewidth}
        \infrule[weak]{
            \Gamma'' \vdash [t'/x]t : B
            \andalso
            \vdash \Delta_1+\Delta+\Delta_2
        }{
            \Gamma'' + \verctype{\Delta_1+\Delta+\Delta_2}{0} \vdash [t'/x]t : B
        }
    \end{minipage}
\end{center}
where we choose $\Delta_1+\Delta+\Delta_2$ as the newly added typing context.
Since $x$ is unused by $t$, thus note that $[t'/x]t = t$, the conclusion of the above derivation is equivalent to the conclusion of the lemma except for typing contexts.\par
Finally, we must show that $(\Gamma + r\cdot\Delta + \Gamma') = \Gamma'' + \verctype{\Delta_1 + \Delta + \Delta_2}{0}$.
\begin{align*}
(\Gamma + r\cdot\Delta + \Gamma')
    &= (\incl{\Gamma''}{\verctype{\Delta_2}{0}} + \verctype{\Delta_1}{0}) + \verctype{\Delta}{0} + (\excl{\Gamma''}{\verctype{\Delta_2}{0}} + \verctype{\Delta_2}{0}) \tag{$\because$ $r=0$ \& $\Gamma = (\incl{\Gamma''}{\verctype{\Delta_2}{0}} + \verctype{\Delta_1}{0})$ \& $\Gamma' = (\excl{\Gamma''}{\verctype{\Delta_2}{0}} + \verctype{\Delta_2}{0})$}\\
    &= (\incl{\Gamma''}{\verctype{\Delta_2}{0}} + \excl{\Gamma''}{\verctype{\Delta_2}{0}}) + (\verctype{\Delta_1}{0} + \verctype{\Delta}{0} + \verctype{\Delta_2}{0}) \tag{$\because$ $+$ associativity and commutativity}\\
    &= \Gamma'' + (\verctype{\Delta_1}{0} + \verctype{\Delta}{0} + \verctype{\Delta_2}{0}) \tag{$\because$ Lemma \ref{lemma:restriction}}\\
    &= \Gamma'' + \verctype{\Delta_1 + \Delta + \Delta_2}{0} \tag{$\because$ definition of $[\Gamma]_0$}
\end{align*}
Thus, we obtain the conclusion of the lemma.

\item Suppose $(x:\verctype{A}{r})\notin\verctype{\Delta'}{0}$\\
Let $\Gamma_1$ and $\Gamma_2$ be typing context such that $\Gamma'' = (\Gamma_1, x:\verctype{A}{r}, \Gamma_2)$.
The last typing derivation of $(\textsc{weak})$ is rewritten as follows:
\begin{center}
    \begin{minipage}{.65\linewidth}
        \infrule[weak]{
            (\Gamma_1, x:\verctype{A}{r}, \Gamma_2) \vdash t : B
            \andalso
            \vdash \Delta'
        }{
            (\Gamma_1, x:\verctype{A}{r}, \Gamma_2) + \verctype{\Delta'}{0} \vdash t : B
        }
    \end{minipage}
\end{center}

We then compare the typing context between the conclusion of the lemma and that of the that of above derivation as follows:
\begin{align*}
(\Gamma, x:\verctype{A}{r}, \Gamma')
    &= (\Gamma_1, x:\verctype{A}{r}, \Gamma_2) + \verctype{\Delta'}{0}\\
    &= (\Gamma_1 + \incl{(\verctype{\Delta'}{0})}{\Gamma_1}),x:\verctype{A}{r},(\Gamma_2 + \excl{(\verctype{\Delta'}{0})}{\Gamma_1})\tag{$\because$ Lemma \ref{lemma:shuffle} (1)}
\end{align*}
By the commutativity of "$,$", we can take $\Gamma$ and $\Gamma'$ arbitrarily so that they satisfy the above equation. So here we know $\Gamma = (\Gamma_1 + \incl{(\verctype{\Delta'}{0})}{\Gamma_1})$ and $\Gamma' = (\Gamma_2 + \excl{(\verctype{\Delta'}{0})}{\Gamma_1})$.
We then apply the induction hypothesis to the premise of the last derivation and reapply (\textsc{weak}) as follows:
\begin{center}
    \begin{minipage}{.7\linewidth}
        \infrule[weak]{
            \Gamma_1 + r\cdot\Delta + \Gamma_2 \vdash [t'/x]t : B
            \andalso
            \vdash \Delta'
        }{
            (\Gamma_1 + r\cdot\Delta + \Gamma_2) + \verctype{\Delta'}{0} \vdash [t'/x]t : B
        }
    \end{minipage}
\end{center}
The conclusion of the above derivation is equivalent to the conclusion of the lemma except for the typing contexts.
Finally, we must show that $(\Gamma + r\cdot\Delta + \Gamma') = (\Gamma_1 + r\cdot\Delta + \Gamma_2) + \verctype{\Delta'}{0}$.
\begin{align*}
(\Gamma + r\cdot\Delta + \Gamma')
    &= (\Gamma_1 + \incl{(\verctype{\Delta'}{0})}{\Gamma_1}) + r\cdot\Delta + (\Gamma_2 + \excl{(\verctype{\Delta'}{0})}{\Gamma_1}) \tag{$\because$ $\Gamma = (\Gamma_1 + \incl{(\verctype{\Delta'}{0})}{\Gamma_1})$ \& $\Gamma' = (\Gamma_2 + \excl{(\verctype{\Delta'}{0})}{\Gamma_1})$}\\
    &= (\Gamma_1 + r\cdot\Delta + \Gamma_2) + (\incl{\verctype{\Delta'}{0}}{\Gamma_1} + \excl{\verctype{\Delta'}{0}}{\Gamma_1}) \tag{$\because$ $+$ associativity and commutativity}\\
    &= (\Gamma_1 + r\cdot\Delta + \Gamma_2) + \verctype{\Delta'}{0}\tag{$\because$ Lemma \ref{lemma:restriction}}
\end{align*}
Thus, we obtain the conclusion of the lemma.\\
\end{itemize}

\item Case (\textsc{der})
\begin{center}
    \begin{minipage}{.38\linewidth}
        \infrule[der]{
            \Gamma'', y:B_1 \vdash t : B_2
        }{
            \Gamma'', y:\verctype{B_1}{1} \vdash t : B_2
        }
    \end{minipage}
\end{center}
In this case, we know $(\Gamma'', y:\verctype{B_1}{1}) = (\Gamma, x:\verctype{A}{r}, \Gamma')$.
There are two cases in which the versioned assumption $x:\verctype{A}{r}$ is equivalent to $y:\verctype{B_1}{1}$ and not equivalent to.
\begin{itemize}
\item Suppose $x:\verctype{A}{r} = y:\verctype{B_1}{1}$\\
We know $x=y$, $A=B_1$, $r=1$, $\Gamma = \Gamma''$, and $\Gamma' = \emptyset$.
The last derivation is rewritten as follows:
\begin{center}
    \begin{minipage}{.38\linewidth}
        \infrule[der]{
            \Gamma'', x:A \vdash t : B_2
        }{
            \Gamma'', x:\verctype{A}{1} \vdash t : B_2
        }
    \end{minipage}
\end{center}
We then apply Lemma \ref{lemma:substitution1} to the premise to obtain the following:
\begin{align*}
    \Gamma'' + \Delta \vdash [t'/x]t : B_2
\end{align*}

Note that $\Delta$ is a versioned assumption by the assumption 1 and thus $\Gamma'' + \Delta = \Gamma'' + r\cdot\Delta$ where $r=1$, we obtain the conclusion of the lemma.

\item Suppose $x:\verctype{A}{r} \neq y:\verctype{B_1}{1}$\\
Let $\Gamma_1$ and $\Gamma_2$ be typing contexts such that $\Gamma'' = (\Gamma_1, x:\verctype{A}{r}, \Gamma_1')$.
The last derivation is rewritten as follows:
\begin{center}
    \begin{minipage}{.60\linewidth}
        \infrule[der]{
            (\Gamma_1, x:\verctype{A}{r}, \Gamma_1'), y:B_1 \vdash t : B_2
        }{
            (\Gamma_1, x:\verctype{A}{r}, \Gamma_1'), y:\verctype{B_1}{1} \vdash t : B_2
        }
    \end{minipage}
\end{center}
We then apply the induction hypothesis to the premise of the last derivation and reapply (\textsc{der}) to obtain the following:
\begin{center}
    \begin{minipage}{.70\linewidth}
        \infrule[der]{
            (\Gamma + r\cdot\Delta + \Gamma''), y:B_1 \vdash [t'/x]t : B_2
        }{
            (\Gamma + r\cdot\Delta + \Gamma''), y:\verctype{B_1}{1} \vdash [t'/x]t : B_2
        }
    \end{minipage}
\end{center}
Since $y:\verctype{B_1}{1}$ is desjoint with $\Gamma+r\cdot\Delta+\Gamma''$ and thus $((\Gamma + r\cdot\Delta + \Gamma''), y:\verctype{B_1}{1}) = \Gamma + r\cdot\Delta + (\Gamma'', y:\verctype{B_1}{1})$, we obtain the conclusion of the lemma.\\
\end{itemize}

\item Case (\textsc{pr})
\begin{center}
    \begin{minipage}{.40\linewidth}
        \infrule[pr]{
             [\Gamma_1] \vdash t : B
             \andalso
             \vdash r'
        }{
            r'\cdot[\Gamma_1] \vdash [t] : \vertype{r'}{B}
        }
    \end{minipage}
\end{center}
Let $r''$ be a version resouce and $\Gamma'_1$ and $\Gamma''_1$ be typing contexts such that $r'' \sqsubseteq r'$ and $[\Gamma_1] = [\Gamma'_1, x:\verctype{A}{r''}, \Gamma''_1]$.
The last derivation is rewritten as follows:
\begin{center}
    \begin{minipage}{.55\linewidth}
        \infrule[pr]{
             [\Gamma'_1, x:\verctype{A}{r''}, \Gamma''_1] \vdash t : B
        }{
            r'\cdot[\Gamma'_1, x:\verctype{A}{r''}, \Gamma''_1] \vdash [t] : \vertype{r'}{B} 
        }
    \end{minipage}
\end{center}
We then compare the conclusion of the lemma and the above conclusion.
\begin{align*}
(\Gamma, x:\verctype{A}{r}, \Gamma') &= r' \cdot [\Gamma_1]\\
&= r' \cdot [\Gamma'_1, x:\verctype{A}{r''}, \Gamma''_1]\tag{$\because$ $[\Gamma_1] = [\Gamma'_1, x:\verctype{A}{r''}, \Gamma''_1]$}\\
&= r'\cdot[\Gamma'_1], \ x:\verctype{A}{r''\otimes r'}, \ r'\cdot[\Gamma''_1]\tag{$\because$ $\cdot$ definition}\\
&= r'\cdot[\Gamma'_1], \ x:\verctype{A}{r'}, \ r'\cdot[\Gamma''_1] \tag{$\because$ $r''\sqsubseteq r'$}
\end{align*}
By the commutativity of "$,$", we can take $\Gamma$ and $\Gamma'$ arbitrarily so that they satisfy the above equation. So here we know $\Gamma = (r'\cdot[\Gamma'_1])$ and $\Gamma' = (r'\cdot[\Gamma''_1])$.\par
We then apply the induction hypothesis to the premise of the last derivation and reapply (\textsc{pr}) to obtain the following:
\begin{center}
    \begin{minipage}{.6\linewidth}
        \infrule[pr]{
             [\Gamma'_1 + r''\cdot\Delta + \Gamma''_1] \vdash [t'/x]t : B
             \andalso
             \vdash r'
        }{
            r'\cdot[\Gamma'_1 + r''\cdot\Delta + \Gamma''_1] \vdash [[t'/x]t] : \vertype{r'}{B} 
        }
    \end{minipage}
\end{center}
where we use $[\Gamma'_1, x:\verctype{A}{r''}, \Gamma''_1] = [\Gamma'_1], x:\verctype{A}{r''}, [\Gamma''_1]$ and $[\Gamma'_1 + r''\cdot\Delta + \Gamma''_1] = [\Gamma'_1] + r''\cdot\Delta + [\Gamma''_1]$ before applying (\textsc{pr}).\par
Since $[[t'/x]t] = [t'/x][t]$ by the definiton of substitution, the above conclusion is equivalent to the conclusion of the lemma except for the typing contexts.
Finally, we must show that $(\Gamma + r'\cdot\Delta + \Gamma') = r'\cdot[\Gamma'_1 + r''\cdot\Delta + \Gamma''_1]$ by the following reasoning:
\begin{align*}
(\Gamma + r'\cdot\Delta + \Gamma')
&= r'\cdot[\Gamma'_1] + r'\cdot\Delta + r'\cdot[\Gamma'_1]\tag{$\because$ $\Gamma = (r'\cdot[\Gamma'_1])$ \& $\Gamma' = (r'\cdot[\Gamma''_1])$}\\
&= r'\cdot[\Gamma'_1] + (r'\otimes r'')\cdot\Delta + r'\cdot[\Gamma'_1] \tag{$\because$ $r''\sqsubseteq r'$}\\
&= r'\cdot[\Gamma'_1] + r'\cdot ( r''\cdot\Delta) + r'\cdot[\Gamma'_1] \tag{$\because$ $\otimes$ associativity}\\
&= r'\cdot[\Gamma'_1 + r''\cdot\Delta + \Gamma'_1] \tag{$\because$ $\cdot$ distributive law over $+$}
\end{align*}
Thus, we obtain the conclusion of the lemma.
\\

\item Case (\textsc{ver})
\begin{center}
    \begin{minipage}{.60\linewidth}
        \infrule[ver]{
            [\Gamma_i] \vdash t_i : B
            \andalso
            \vdash \{\overline{l_i}\}
        }{
            \bigcup_i(\{l_i\}\cdot [\Gamma_i]) \vdash \nvval{\overline{l_i=t_i}} : \vertype{\{\overline{l_i}\}}{B}
        }
    \end{minipage}
\end{center}
We compare the typing contexts between the lemma and the above conclusion as follows:
\begin{align*}
(\Gamma, x:\verctype{A}{r}, \Gamma') &= \bigcup_{i} \ \left(\{l_i\}\cdot[\Gamma_i]\right)\\
&= \bigcup_{i\in I_x} \ \left(\{l_i\}\cdot[\Gamma'_i, \ x:\verctype{A}{r_i}, \ \Gamma''_i]\right) + \bigcup_{i\in J_x} \ \left(\{l_i\}\cdot[\Gamma'_i, \ \Gamma''_i]\right)  \tag{$\because$ $I_x=\{i\,|\,x\in\mathrm{dom}(\Gamma_i)\}$ and $J_x=\{i\,|\,x\notin\mathrm{dom}(\Gamma_i)\}$}
\end{align*}
We then reorganise the typing context $\bigcup_{i\in I_x} \ \left(\{l_i\}\cdot[\Gamma'_i, \ x:\verctype{A}{r_i}, \ \Gamma''_i]\right)$ as follows:
\begin{align*}
&\bigcup_{i\in I_x} \ \left(\{l_i\}\cdot[\Gamma'_i, \ x:\verctype{A}{r_i}, \ \Gamma''_i]\right)\\
= &\bigcup_{i\in I_x} \ \left(\{l_i\}\cdot[x:\verctype{A}{r_i}, \ \Gamma'_i, \ \Gamma''_i]\right) \tag{$\because$ $,$ associativity}\\
= &\bigcup_{i\in I_x} \ \left(\{l_i\}\cdot(x:\verctype{A}{r_i}), \ \{l_i\}\cdot[\Gamma'_i], \ \{l_i\}\cdot[\Gamma''_i]\right) \tag{$\because$ $\cdot$ distributive law}\\
= &\bigcup_{i\in I_x} \ \left(\{l_i\}\cdot(x:\verctype{A}{r_i})\right), \ \bigcup_{i\in I_x} \ \left(\{l_i\}\cdot[\Gamma'_i], \{l_i\}\cdot[\Gamma''_i]\right) \tag{$\because$ Sum of each disjoint sub context}\\
= &\bigcup_{i\in I_x} \ \left(x:\verctype{A}{\{l_i\}\otimes r_i}\right), \ \bigcup_{i\in I_x} \ \left(\{l_i\}\cdot[\Gamma'_i], \{l_i\}\cdot[\Gamma''_i]\right) \tag{$\because$ $\cdot$ definition}\\
= &\ x:\verctype{A}{\sum_{i\in I_x}\{l_i\}\otimes r_i}, \ \ \bigcup_{i\in I_x} \ \left(\{l_i\}\cdot[\Gamma'_i], \{l_i\}\cdot[\Gamma''_i]\right) \tag{$\because$ $\bigcup$ and $+$ definition}
\end{align*}
Thus, we obtain the following:
\begin{align*}
&(\Gamma, x:\verctype{A}{r}, \Gamma')\\
= &\left(x:\verctype{A}{\sum_{i\in I_x}\{l_i\}\otimes r_i}, \ \ \bigcup_{i\in I_x} \ \left(\{l_i\}\cdot[\Gamma'_i], \{l_i\}\cdot[\Gamma''_i]\right)\right) + \bigcup_{i\in J_x} \ \left(\{l_i\}\cdot[\Gamma'_i, \ \Gamma''_i]\right)\\
= &\ x:\verctype{A}{\sum_{i\in I_x}\{l_i\}\otimes r_i}, \ \ \bigcup_{i} \ \left(\{l_i\}\cdot[\Gamma'_i], \{l_i\}\cdot[\Gamma''_i]\right) \tag{$\because$ $\bigcup_{i\in J_x} \ \left(\{l_i\}\cdot[\Gamma'_i, \ \Gamma''_i]\right)$ are disjoint with $x:\verctype{A}{\sum_{i\in I_x}\{l_i\}\otimes r_i}$}
\end{align*}
Therefore, By Lemma \ref{lemma:shufflecomposition}, there exists typing contexts $\Gamma'_{\overline{i}}$ and $\Gamma''_{\overline{i}}$ such that:
\begin{align*}
\Gamma'_{\overline{i}}, \Gamma''_{\overline{i}} &= \bigcup_{i} \ \left(\{l_i\}\cdot[\Gamma'_i], \{l_i\}\cdot[\Gamma''_i]\right)\\
\Gamma'_{\overline{i}} + \Gamma''_{\overline{i}} &= \bigcup_{i} \ \left(\{l_i\}\cdot[\Gamma'_i] + \{l_i\}\cdot[\Gamma''_i]\right)
\end{align*}
Thus, we obtain the following:
\begin{align*}
(\Gamma, x:\verctype{A}{r}, \Gamma')
&= x:\verctype{A}{\sum_{i\in I_x}\{l_i\}\otimes r_i}, \ \Gamma'_{\overline{i}}, \ \Gamma''_{\overline{i}}\\
&= \Gamma'_{\overline{i}}, \ x:\verctype{A}{\sum_{i\in I_x}\{l_i\}\otimes r_i}, \ \Gamma''_{\overline{i}} \tag{$\because$ $,$ commutativity}
\end{align*}
By the commutativity of "$,$", we can take $\Gamma$ and $\Gamma'$ arbitrarily so that they satisfy the above equation. So here we know $\Gamma = \Gamma'_{\overline{i}}$, $\Gamma' = \Gamma''_{\overline{i}}$, and $r=\sum_{i\in I_x}(\{l_i\}\otimes r_i)$.
We then apply the induction hypothesis to the premise whose typing context contains $x$.
Here, we define a typing context $\Delta_i$ as follows:
\begin{align*}
\Delta_i =
\left\{
\begin{aligned}
\Delta \hspace{1em} (i\in I_x)\\
\emptyset \hspace{1em} (i\in J_x)\\
\end{aligned}
\right.
\end{align*}
By using $\Delta_i$, we reapply (\textsc{ver}) as follows:
\begin{center}
    \begin{minipage}{.8\linewidth}
        \infrule[ver]{
            [\Gamma'_i + r_i\cdot\Delta_i + \Gamma''_i] \vdash [t'/x]t_i : B
            \andalso
            \vdash \{\overline{l_i}\}
        }{
            \bigcup_i(\{l_i\}\cdot [\Gamma'_i + r_i\cdot\Delta_i + \Gamma''_i]) \vdash \nvval{\overline{l_i=[t'/x]t_i}} : \vertype{\{\overline{l_i}\}}{B}
        }
    \end{minipage}
\end{center}
Since $\{\overline{l_i=[t'/x]t_i}\,|\,l_k\} = [t'/x]\{\overline{l_i=t_i}\,|\,l_k\}$ by the definition of substitution, the above conclusion is equivalent to the conclusion of the lemma except for typing contexts.
Finally, we must show that $(\Gamma + r\cdot\Delta + \Gamma') = \bigcup_i(\{l_i\}\cdot [\Gamma'_i + r_i\cdot\Delta_i + \Gamma''_i])$.
\begin{align*}
(\Gamma + r\cdot\Delta + \Gamma')
&= \Gamma'_{\overline{i}} + r\cdot\Delta + \Gamma''_{\overline{i}}\tag{$\because$ $\Gamma = \Gamma'_{\overline{i}}$ \& $\Gamma' = \Gamma''_{\overline{i}}$}\\
&= r\cdot\Delta + (\Gamma'_{\overline{i}} + \Gamma''_{\overline{i}}) \tag{$+$ associativity \& commutativity}\\
&= r\cdot\Delta + \bigcup_{i} \ \left(\{l_i\}\cdot[\Gamma'_i] + \{l_i\}\cdot[\Gamma''_i]\right) \tag{$\because$ $\Gamma'_{\overline{i}} + \Gamma''_{\overline{i}} = \bigcup_{i} \ \left(\{l_i\}\cdot[\Gamma'_i] + \{l_i\}\cdot[\Gamma''_i]\right)$}\\
&= (\sum_{i\in I_x}(\{l_i\}\otimes r_i))\cdot\Delta + \bigcup_{i} \ \left(\{l_i\}\cdot[\Gamma'_i] + \{l_i\}\cdot[\Gamma''_i]\right) \tag{$\because$ $r=\sum_{i\in I_x}(\{l_i\}\otimes r_i)$}\\
&= \bigcup_{i\in I_x}\left(\{l_i\}\cdot\left(r_i\cdot\Delta\right)\right) + \bigcup_{i} \ \left(\{l_i\}\cdot[\Gamma'_i] + \{l_i\}\cdot[\Gamma''_i]\right) \tag{$\because$ $\bigcup$ definition}\\
&= \bigcup_{i}\left(\{l_i\}\cdot\left(r_i\cdot\Delta_i\right)\right) + \bigcup_{i} \ \left(\{l_i\}\cdot[\Gamma'_i] + \{l_i\}\cdot[\Gamma''_i]\right) \tag{$\because$ $\Delta_i$ definition}\\
&= \bigcup_i\left(\{l_i\}\cdot\left(r_i\cdot\Delta_i\right) + \{l_i\}\cdot[\Gamma'_i] + \{l_i\}\cdot[\Gamma''_i]\right) \tag{$\because$ $+$ commutativity \& associativity}\\
&= \bigcup_{i} \left(\{l_i\}\cdot\left(\left(r_i\cdot\Delta_i\right) + [\Gamma'_i] + [\Gamma''_i]\right)\right) \tag{$\because$ districutive law}\\
&= \bigcup_{i} \left(\{l_i\}\cdot\left([\Gamma'_i] + \left(r_i\cdot\Delta_i\right) + [\Gamma''_i]\right)\right) \tag{$\because$ $+$ commutativity}\\
&= \bigcup_{i} \left(\{l_i\}\cdot[\Gamma'_i + r_i\cdot\Delta_i + \Gamma''_i]\right) \tag{$\because$ $[\cdot]$ definition}
\end{align*}
Thus, we obtain the conclusion of the lemma.
\\

\item Case (\textsc{veri})
\begin{center}
    \begin{minipage}{.55\linewidth}
        \infrule[veri]{
            \verctype{\Gamma_i}{} \vdash t_i : B
            \andalso
            \vdash \{\overline{l_i}\}
            \andalso
            l_k \in \{\overline{l_i}\}
        }{
            \bigcup_i(\{l_i\}\cdot [\Gamma_i]) \vdash \ivval{\overline{l_i=t_i}}{l_k} : B
        }
    \end{minipage}
\end{center}
This case is similar to the case of (\textsc{ver}).
\\

\item Case (\textsc{extr})
\begin{center}
    \begin{minipage}{.45\linewidth}
        \infrule[extr]{
            \Gamma \vdash t : \vertype{r}{A}
            \andalso
            l \in r
        }{
            \Gamma \vdash t.l : A
        }
    \end{minipage}
\end{center}
In this case, we apply the induction hypothesis to the premise and then reapply (\textsc{extr}), we obtain the conclusion of the lemma.
\\

\item Case (\textsc{sub})
\begin{center}
    \begin{minipage}{.55\linewidth}
        \infrule[\textsc{sub}]{
            \Gamma_1,y:\verctype{B'}{r_1}, \Gamma_2 \vdash t : B
            \andalso
            r_1 \sqsubseteq r_2
            \andalso
            \vdash r_2
        }{
            \Gamma_1,y:\verctype{B'}{r_2}, \Gamma_2 \vdash t : B
        }
    \end{minipage}
\end{center}
In this case, we know $(\Gamma, x:\verctype{A}{r}, \Gamma') = (\Gamma, y:\verctype{B'}{r_2}, \Gamma')$.
There are three cases where the versioned assumption $x:\verctype{A}{r}$ is included in $\Gamma_1$, included in $\Gamma_2$, or equal to $y:\verctype{B'}{r_2}$.
\begin{itemize}
\item Suppose $(x:\verctype{A}{r})\in\Gamma_1$.\\
Let $\Gamma'_1$ and $\Gamma''_1$ be typing contexts such that $\Gamma_1 = (\Gamma'_1, x:\verctype{A}{r}, \Gamma''_1)$.
The last derivation is rewritten as follows:
\begin{center}
    \begin{minipage}{.75\linewidth}
        \infrule[\textsc{sub}]{
            \Gamma'_1, x:\verctype{A}{r}, \Gamma''_1,y:\verctype{B'}{r_1}, \Gamma_2 \vdash t : B
            \andalso
            r_1 \sqsubseteq r_2
            \andalso
            \vdash r_2
        }{
            \Gamma'_1, x:\verctype{A}{r}, \Gamma''_1,y:\verctype{B'}{r_2}, \Gamma_2 \vdash t : B
        }
    \end{minipage}
\end{center}
We then apply the induction hypothesis to the premise of the last derivation to obtain the following:
\begin{align}
\label{eq:subset1}
\Gamma'_1 + r\cdot\Delta + (\Gamma''_1, y:\verctype{B'}{r_1}, \Gamma_2) \vdash [t'/x]t : B
\end{align}
The typing context of the above conclusion can be transformed as follows:
\begin{align*}
&\Gamma'_1 + r\cdot\Delta + (\Gamma''_1, y:\verctype{B'}{r_1}, \Gamma_2)\\
=\ &(\Gamma'_1+\incl{(r\cdot\Delta)}{\Gamma'_1}),\excl{(r\cdot\Delta)}{(\Gamma'_1,(\Gamma''_1, y:\verctype{B'}{r_1}, \Gamma_2))},\\
&\hspace{5em} \left((\Gamma''_1, y:\verctype{B'}{r_1}, \Gamma_2)+\incl{(r\cdot\Delta)}{(\Gamma''_1, y:\verctype{B'}{r_1}, \Gamma_2)}\right) \tag{$\because$ Lemma \ref{lemma:collapse}}\\
=\ &(\Gamma'_1+\incl{(r\cdot\Delta)}{\Gamma'_1}),\excl{(r\cdot\Delta)}{(\Gamma'_1,(\Gamma''_1, y:\verctype{B'}{r_1}, \Gamma_2))},\\
&\hspace{5em} \left((\Gamma''_1, y:\verctype{B'}{r_1}, \Gamma_2) + \left(\incl{(r\cdot\Delta)}{\Gamma''_1}, \incl{(r\cdot\Delta)}{(y:\verctype{B'}{r_1})}, \incl{(r\cdot\Delta)}{\Gamma_2})\right)\right) \tag{$\because$ \ref{def:restriction}}\\
=\ &(\Gamma'_1+\incl{(r\cdot\Delta)}{\Gamma'_1}),\excl{(r\cdot\Delta)}{(\Gamma'_1,(\Gamma''_1, y:\verctype{B'}{r_1}, \Gamma_2))},\\
&\hspace{5em} \left(\Gamma''_1+\incl{(r\cdot\Delta)}{\Gamma''_1}\right), \left(y:\verctype{B'}{r_1}+ \incl{(r\cdot\Delta)}{(y:\verctype{B'}{r_1})}\right), \left(\Gamma_2 + \incl{(r\cdot\Delta)}{\Gamma_2}\right) \tag{$\because$ Lemma \ref{lemma:shuffle}}\\
=\ &\Gamma_3, y:\verctype{B'}{r_1\oplus r_3}, \Gamma'_3
\end{align*}
The last equational transformation holds by the following equation \ref{eq:subset2}.\par
Let $\Gamma_3$ and $\Gamma'_3$ be typing contexts that satisfy the following:
\begin{align*}
\Gamma_3 &= (\Gamma'_1+\incl{(r\cdot\Delta)}{\Gamma'_1}),\excl{(r\cdot\Delta)}{(\Gamma'_1,(\Gamma''_1, y:\verctype{B'}{r_1}, \Gamma_2))}, \left(\Gamma''_1+\incl{(r\cdot\Delta)}{\Gamma''_1}\right)\\
\Gamma'_3 &= \left(\Gamma_2 + \incl{(r\cdot\Delta)}{\Gamma_2}\right)
\end{align*}

For $\incl{(r\cdot\Delta)}{(y:\verctype{B'}{r_1})}$,  Let $r_3$ and $r_3'$ be typing contexts such that $r_3=r\otimes r'_3$ and stisfy the following:
\begin{align*}
\incl{(r\cdot\Delta)}{(y:\verctype{B'}{r_1})} =
\left\{
\begin{aligned}
    & r\cdot(y:\verctype{B'}{r'_3}) = y:\verctype{B'}{r\otimes r'_3} = y:\verctype{B'}{r_3}  & (y \in \mathrm{dom}(\Delta))\\
    & \emptyset & (y \notin \mathrm{dom}(\Delta))
\end{aligned}
\right.
\end{align*}
Thus, we obtain the following equation.
\begin{align}
\label{eq:subset2}
y:\verctype{B'}{r_1} + \incl{(r\cdot\Delta)}{(y:\verctype{B'}{r_1})} = 
\left\{
\begin{aligned}
    & y:\verctype{B'}{r_1\oplus r_3} & (y \in \mathrm{dom}(\Delta))\\
    & y:\verctype{B'}{r_1\oplus r_3} = y:\verctype{B'}{r_1} & (y \notin \mathrm{dom}(\Delta))
\end{aligned}
\right.
\end{align}
Applying all of the above transformations and reapplying (\textsc{sub}) to the expression \ref{eq:subset1}, we obtain the following:
\begin{center}
    \begin{minipage}{\linewidth}
        \infrule[\textsc{sub}]{
            \Gamma_3, y:\verctype{B'}{r_1\oplus r_3}, \Gamma'_3 \vdash [t'/x]t : B
            \andalso
            (r_1 \oplus r_3) \sqsubseteq (r_2 \oplus r_3)
            \andalso
            \vdash r_2 \oplus r_3
        }{
            \Gamma_3, y:\verctype{B'}{r_2\oplus r_3}, \Gamma'_3 \vdash [t'/x]t : B
        }
    \end{minipage}
\end{center}
The conclusion of the above derivation is equivalent to the conclusion of the lemma except for the typing contexts. \par
Finally, we must show that $\Gamma'_1 + r\cdot\Delta + (\Gamma''_1, y:\verctype{B'}{r_2}, \Gamma_2) = (\Gamma_3, y:\verctype{B'}{r_2\oplus r_3}, \Gamma'_3)$.
\begin{align*}
&\Gamma'_1 + r\cdot\Delta + (\Gamma''_1, y:\verctype{B'}{r_2}, \Gamma_2)\\
=\ &(\Gamma'_1+\incl{(r\cdot\Delta)}{\Gamma'_1}),\excl{(r\cdot\Delta)}{(\Gamma'_1,(\Gamma''_1, y:\verctype{B'}{r_2}, \Gamma_2))},\\
&\hspace{5em} \left((\Gamma''_1, y:\verctype{B'}{r_2}, \Gamma_2)+\incl{(r\cdot\Delta)}{(\Gamma''_1, y:\verctype{B'}{r_2}, \Gamma_2)}\right) \tag{$\because$ Lemma \ref{lemma:collapse}}\\
=\ &(\Gamma'_1+\incl{(r\cdot\Delta)}{\Gamma'_1}),\excl{(r\cdot\Delta)}{(\Gamma'_1,(\Gamma''_1, y:\verctype{B'}{r_2}, \Gamma_2))},\\
&\hspace{5em} \left((\Gamma''_1, y:\verctype{B'}{r_2}, \Gamma_2) + \left(\incl{(r\cdot\Delta)}{\Gamma''_1}, \incl{(r\cdot\Delta)}{(y:\verctype{B'}{r_2})}, \incl{(r\cdot\Delta)}{\Gamma_2})\right)\right) \tag{$\because$ \ref{def:restriction}}\\
=\ &(\Gamma'_1+\incl{(r\cdot\Delta)}{\Gamma'_1}),\excl{(r\cdot\Delta)}{(\Gamma'_1,(\Gamma''_1, y:\verctype{B'}{r_2}, \Gamma_2))},\\
&\hspace{5em} \left(\Gamma''_1+\incl{(r\cdot\Delta)}{\Gamma''_1}\right), \left(y:\verctype{B'}{r_2}+ \incl{(r\cdot\Delta)}{(y:\verctype{B'}{r_2})}\right), \left(\Gamma_2 + \incl{(r\cdot\Delta)}{\Gamma_2}\right) \tag{$\because$ Lemma \ref{lemma:shuffle}}\\
=\ &\Gamma_3, y:\verctype{B'}{r_2\oplus r_3}, \Gamma'_3
\end{align*}
The last transformation is based on the following equation that can be derived from the definition \ref{def:restriction}.
\begin{align*}
\excl{(r\cdot\Delta)}{(\Gamma'_1,(\Gamma''_1, y:\verctype{B'}{r_1}, \Gamma_2))} &= \excl{(r\cdot\Delta)}{(\Gamma'_1,(\Gamma''_1, y:\verctype{B'}{r_2}, \Gamma_2))}\\
\incl{(r\cdot\Delta)}{(y:\verctype{B'}{r_1})} &= \incl{(r\cdot\Delta)}{(y:\verctype{B'}{r_2})}
\end{align*}
Thus, we obtain the conclusion of the lemma.

\item Suppose $(x:\verctype{A}{r})\in\Gamma_2$.\\
This case is similar to the case of $(x:\verctype{A}{r})\in\Gamma_1$.

\item Suppose $(x:\verctype{A}{r}) = y:\verctype{B'}{r_2}$.\\
The last derivation is rewritten as follows:
\begin{center}
    \begin{minipage}{.55\linewidth}
        \infrule[\textsc{sub}]{
            \Gamma_1,x:\verctype{A}{r'}, \Gamma_2 \vdash t : B
            \andalso
            r' \sqsubseteq r
            \andalso
            \vdash r
        }{
            \Gamma_1,x:\verctype{A}{r}, \Gamma_2 \vdash t : B
        }
    \end{minipage}
\end{center}
We apply the induction hypothesis to the premise and then reapply (\textsc{sub}), we obtain the conclusion of the lemma.
\end{itemize}

\end{itemize}
\end{proof}


\subsection{Type Safety}
\label{appendix:typesafety}
\begin{lemma}[Inversion lemma]
\label{lemma:typedvalue}
Let $v$ be a value such that $\Gamma \vdash v:A$. The followings hold for a type $A$.
\begin{itemize}
  \item $A=\inttype \Longrightarrow $ $v = n$ for some integer constant $n$.
  \item $A=\vertype{r}{B} \Longrightarrow $ $v = \pr{t'}$ for some term $t'$, or $v=\nvval{\overline{l_i=t_i}}$ for some terms $t_i$ and some labels $l_i\in r$.
  \item $A=\apptype{B}{B'} \Longrightarrow  $ $v=\lam{p}{t}$ for some pattern $p$ and term $t$.
\end{itemize}
\end{lemma}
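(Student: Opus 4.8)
The plan is to prove the three implications simultaneously by reading off the top-level type constructor from the syntactic shape of $v$. Since $v$ is a value, the grammar forces $v$ to be one of exactly four forms: an integer literal $n$, a $\lambda$-abstraction $\lam{p}{t}$, a promotion $\pr{t}$, or a versioned record $\nvval{\overline{l_i=t_i}}$. The key observation is that the typing rules split into two groups: the \emph{syntax-directed} introduction rules (\textsc{int}), (\textsc{abs}), (\textsc{pr}), (\textsc{ver}), each of which fixes both the shape of the subject and the top-level constructor of its type, and the \emph{structural} rules (\textsc{weak}), (\textsc{der}), (\textsc{sub}), which leave the subject term untouched and only modify the context (adding a $0$-graded assumption, derelicting a linear assumption, or weakening a resource annotation via $r \sqsubseteq s$). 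Crucially, none of the structural rules alters the conclusion type, so each preserves the top-level type constructor. The remaining rules (\textsc{var}), (\textsc{let}), (\textsc{app}), (\textsc{extr}), (\textsc{veri}) all have non-value subjects and are therefore excluded at the outset.

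First I would fix a value form and induct on the derivation of $\Gamma \vdash v : A$. For $v = n$ the only applicable introduction rule is (\textsc{int}), giving $A = \inttype$; for each structural rule the induction hypothesis applies unchanged, since the subject is still $n$ and the conclusion type is unchanged, so $A = \inttype$ in every case. Identically, $v = \lam{p}{t}$ forces (\textsc{abs}) and hence a function type $\ftype{B_1}{B_2}$; a promotion $\pr{t}$ forces (\textsc{pr}) and hence $\vertype{r}{B}$; and a versioned record $\nvval{\overline{l_i=t_i}}$ forces (\textsc{ver}) and hence $\vertype{\{\overline{l_i}\}}{B}$. In the last case the conclusion resource is exactly $\{\overline{l_i}\}$ (the structural rules touch only context resources, never the conclusion), which immediately supplies the side condition $l_i \in r$ demanded by the second clause. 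In every branch the structural rules are absorbed by the induction hypothesis precisely because they neither change the subject nor the conclusion type.

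Having established that each value form determines a unique top-level type constructor --- $\inttype$ for literals, $\ftype{\cdot}{\cdot}$ for abstractions, and $\vertype{\cdot}{\cdot}$ for both promotions and versioned records --- the three stated implications follow by contraposition: since $\inttype$, function types, and versioned types are pairwise distinct as syntax, fixing $A$ excludes the value forms associated with the other two constructors. For instance, if $A = \inttype$ then $v$ can be neither an abstraction, a promotion, nor a record, so $v = n$; the function-type and versioned-type cases are symmetric. The main obstacle I anticipate is not conceptual but bookkeeping: one must verify carefully that (\textsc{weak}), (\textsc{der}), and (\textsc{sub}) are the \emph{only} non-syntax-directed rules and that each preserves the conclusion type, so that stacking them cannot smuggle a value into a type of the wrong shape. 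Once that invariant is confirmed, the induction is routine.
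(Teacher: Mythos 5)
The paper states this lemma without proof, treating it as routine, so there is no official argument to compare against; your proposal supplies the standard canonical-forms argument, and it is sound. The two facts you lean on do hold in \corelang{}: first, the value grammar $v ::= \lam{p}{t} \mid n \mid \pr{t} \mid \nvval{\overline{l_i=t_i}}$ pairs each value form with exactly one syntax-directed introduction rule --- (\textsc{int}), (\textsc{abs}), (\textsc{pr}), (\textsc{ver}) --- whose conclusion fixes the top-level type constructor ($\inttype$, $\ftype{\cdot}{\cdot}$, $\vertype{r}{\cdot}$, $\vertype{\{\overline{l_i}\}}{\cdot}$ respectively), while the remaining rules (\textsc{var}), (\textsc{app}), (\textsc{let}), (\textsc{extr}), (\textsc{veri}) have non-value subjects; second, the only non-syntax-directed rules (\textsc{weak}), (\textsc{der}), (\textsc{sub}) rewrite assumptions in the context but leave both the subject term and the conclusion type literally unchanged, so the induction absorbs them. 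Your observation that in the (\textsc{ver}) case the conclusion resource is exactly $\{\overline{l_i}\}$ correctly discharges the side condition $l_i \in r$ of the second clause (indeed it gives the stronger fact $r = \{\overline{l_i}\}$). The final contraposition step --- pairwise distinctness of the three type constructors excludes the value forms belonging to the other two --- is valid, so the argument is complete.
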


\begin{lemma}[Type safety for default version overwriting]
\label{lemma:overwriting}

For any version label $l$:
\begin{align*}
    \Gamma \vdash t : A
    \hspace{1em}\Longrightarrow\hspace{1em}
    \Gamma \vdash t@l : A
\end{align*}
\end{lemma}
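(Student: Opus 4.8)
The plan is to prove the statement by induction on the derivation of $\Gamma \vdash t : A$, with a case analysis on the last typing rule applied. Induction on the derivation (rather than on the structure of $t$) is required because the declarative system of $\lambda_{VL}$ includes the non-syntax-directed rules $(\textsc{weak})$, $(\textsc{der})$, and $(\textsc{sub})$. Since $t@l$ is defined by structural recursion on $t$, each syntax-directed rule will correspond to exactly one clause of the definition of $@l$, and the bulk of the cases turn out to be routine.

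First I would dispatch the cases in which overwriting leaves the term untouched. For $(\textsc{int})$ and $(\textsc{var})$ we have $n@l = n$ and $x@l = x$, so the original derivation already witnesses the goal. The same is true for $(\textsc{pr})$ and $(\textsc{ver})$, since $\pr{t}@l = \pr{t}$ and $\nvval{\overline{l_i=t_i}}@l = \nvval{\overline{l_i=t_i}}$ by definition; overwriting deliberately does not descend into promotions or versioned records. For $(\textsc{extr})$, the subject of the extraction is syntactically a versioned value $u$, so $u@l = u$ and hence $(u.l')@l = (u@l).l' = u.l'$ is unchanged; this case is immediate as well.

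Next come the congruence cases $(\textsc{abs})$, $(\textsc{app})$, and $(\textsc{let})$, where overwriting distributes over immediate subterms, e.g. $(\app{t_1}{t_2})@l = \app{(t_1@l)}{(t_2@l)}$ and $(\clet{x}{t_1}{t_2})@l = \clet{x}{(t_1@l)}{(t_2@l)}$. In each case I would apply the induction hypothesis to the typing derivations of the subterms and reapply the same rule; the pattern-typing premise of $(\textsc{abs})$, the typing contexts, and the result type are all unchanged, so the reconstruction goes through verbatim. The structural rules $(\textsc{weak})$, $(\textsc{der})$, and $(\textsc{sub})$ leave the term itself unchanged while adjusting the context: here I would apply the induction hypothesis to the single premise and reapply the rule.

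The only case that genuinely exploits the overwriting operation is $(\textsc{veri})$, for the intermediate term $\ivval{\overline{l_i=t_i}}{l_k}$. By definition, $\ivval{\overline{l_i=t_i}}{l_k}@l'$ is $\ivval{\overline{l_i=t_i}}{l'}$ when $l' \in \{\overline{l_i}\}$ and is left unchanged otherwise; the latter subcase is trivial. For the former, the key observation is that the conclusion of $(\textsc{veri})$ is independent of which label is selected: its premises require every branch $t_i$ to have the same type $A$ and only that the chosen label belong to $\{\overline{l_i}\}$, and the derived type is exactly that common $A$. Since $l'$ is again a member of $\{\overline{l_i}\}$, reusing the very same premise derivations with $l_k$ replaced by $l'$ yields $\bigcup_i(\{l_i\}\cdot[\Gamma_i]) \vdash \ivval{\overline{l_i=t_i}}{l'} : A$, as required. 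I expect no real obstacle here; the apparent subtlety is only in recognizing that $(\textsc{veri})$ assigns a branch-independent type, after which the case closes without even invoking the induction hypothesis, because overwriting never rewrites the branches $t_i$ themselves.
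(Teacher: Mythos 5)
Your proposal is correct and follows essentially the same route as the paper's proof: induction on the typing derivation, with the overwriting-invariant cases ($\textsc{int}$, $\textsc{var}$, $\textsc{pr}$, $\textsc{ver}$) dismissed outright, the congruence and structural cases closed by the induction hypothesis plus reapplication of the rule, and $\textsc{veri}$ handled by the observation that the rule's conclusion does not depend on which label in $\{\overline{l_i}\}$ is selected. The only (harmless) divergence is in $\textsc{extr}$, where you short-circuit via the grammar restriction that the subject is a versioned value $u$ with $u@l = u$, whereas the paper applies the induction hypothesis to the subject and reapplies the rule --- both arguments go through.
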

\begin{proof}
The proof is given by induction on the typing derivation of $\Gamma \vdash t : A$.
Consider the cases for the last rule used in the typing derivation of assumption.
\\

\begin{itemize}
\item Case (\textsc{int})
\begin{center}
    \begin{minipage}{.25\linewidth}
        \infrule[int]{
            \\
        }{
            \emptyset \vdash n:\textsf{Int}
        }
    \end{minipage}
\end{center}
This case holds trivially because $n@l \equiv n$ for any labels $l$.
\\

\item Case (\textsc{var})
\begin{center}
    \begin{minipage}{.35\linewidth}
        \infrule[var]{
            \vdash A
        }{
            x:A \vdash x:A
        }
    \end{minipage}
\end{center}
This case holds trivially because $x@l = x$ for any labels $l$.
\\

\item Case (\textsc{abs})
\begin{center}
    \begin{minipage}{.55\linewidth}
        \infrule[abs]{
            - \vdash p : B_1 \rhd \Delta'
            \andalso
            \Gamma, \Delta' \,\vdash\, t_1 : A_2
        }{
            \Gamma \,\vdash\, \lam{p}{t_1} : \ftype{A_1}{A_2}
        }
    \end{minipage}
\end{center}
By induction hypothesis, there exists a term $t_1@l$ such that:
\begin{align*}
\Gamma, \Delta' \,\vdash\, t_1@l : A_2
\end{align*}
We then reapply (\textsc{abs}) to obtain the following:
\begin{center}
    \begin{minipage}{.65\linewidth}
        \infrule[abs]{
            - \vdash p : B_1 \rhd \Delta'
            \andalso
            \Gamma, \Delta' \,\vdash\, t_1@l : A_2
        }{
            \Gamma \,\vdash\, \lam{p}{(t_1@l)} : \ftype{A_1}{A_2}
        }
    \end{minipage}
\end{center}
Thus, note that $(\lam{p}{t_1})@l \equiv \lam{p}{(t_1@l)}$, we obtain the conclusion of the lemma.
\\

\item Case (\textsc{app})
\begin{center}
    \begin{minipage}{.65\linewidth}
        \infrule[app]{
            \Gamma_1 \vdash t_1 : \ftype{B}{A}
            \andalso
            \Gamma_2 \vdash t_2 : B
        }{
            \Gamma_1 + \Gamma_2 \vdash \app{t_1}{t_2} : A
        }
    \end{minipage}
\end{center}
By induction hypothesis, there exists terms $t_1@l$ and $t_2@l$ such that:
\begin{align*}
\Gamma_1 &\vdash t_1@l : \ftype{B}{A}\\
\Gamma_2 &\vdash t_2@l : B
\end{align*}
We then reapply (\textsc{app}) to obtain the following:
\begin{center}
    \begin{minipage}{.7\linewidth}
        \infrule[app]{
            \Gamma_1 \vdash t_1@l : \ftype{B}{A}
            \andalso
            \Gamma_2 \vdash t_2@l : B
        }{
            \Gamma_1 + \Gamma_2 \vdash \app{(t_1@l)}{(t_2@l)} : A
        }
    \end{minipage}
\end{center}
Thus, note that $(\app{t_1}{t_2})@l \equiv \app{(t_1@l)}{(t_2@l)}$, we obtain the conclusion of the lemma.
\\

\item Case (\textsc{let})
\begin{center}
    \begin{minipage}{.70\linewidth}
        \infrule[let]{
            \Gamma_1 \,\vdash\, t_1 : \vertype{r}{A}
            \andalso
            \Gamma_2, x:\verctype{A}{r} \,\vdash\, t_2 : B
        }{
            \Gamma_1 + \Gamma_2 \,\vdash\, \clet{x}{t_1}{t_2} : B
        }
    \end{minipage}
\end{center}
By induction hypothesis, there exists terms $t_1@l$ and $2@l$ such that:
\begin{align*}
\Gamma_1 \,&\vdash\, t_1@l : \vertype{r}{A}\\
\Gamma_2, x:\verctype{A}{r} \,&\vdash\, t_2@l : B
\end{align*}
We then reapply (\textsc{let}) to obtain the following:
\begin{center}
    \begin{minipage}{.75\linewidth}
        \infrule[let]{
            \Gamma_1 \,\vdash\, t_1@l : \vertype{r}{A}
            \andalso
            \Gamma_2, x:\verctype{A}{r} \,\vdash\, t_2@l : B
        }{
            \Gamma_1 + \Gamma_2 \,\vdash\, \clet{x}{(t_1@l)}{(t_2@l)} : B
        }
    \end{minipage}
\end{center}
Thus, note that $(\clet{x}{t_1}{t_2})@l \equiv \clet{x}{(t_1@l)}{(t_2@l)}$, we obtain the conclusion of the lemma.
\\

\item Case (\textsc{weak})
\begin{center}
    \begin{minipage}{.45\linewidth}
        \infrule[weak]{
            \Gamma_1 \vdash t : A
            \andalso
            \vdash \Delta'
        }{
            \Gamma_1 + \verctype{\Delta'}{0} \vdash t : A
        }
    \end{minipage}
\end{center}
By induction hypothesis, we know the following:
\begin{align*}
\Gamma_1 \vdash t@l : A
\end{align*}
We then reapply (\textsc{weak}) to obtain the following:
\begin{center}
    \begin{minipage}{.45\linewidth}
        \infrule[weak]{
            \Gamma_1 \vdash t@l : A
        }{
            \Gamma_1 + \verctype{\Delta'}{0} \vdash t@l : A
        }
    \end{minipage}
\end{center}
Thus, we obtain the conclusion of the lemma.
\\

\item Case (\textsc{der})
\begin{center}
    \begin{minipage}{.45\linewidth}
        \infrule[der]{
            \Gamma_1, x:B \vdash t : A
        }{
            \Gamma_1, x:\verctype{B}{1} \vdash t : A
        }
    \end{minipage}
\end{center}
By induction hypothesis, there exists terms $t@l$ such that:
\begin{align*}
\Gamma_1, x:B \vdash t@l : A
\end{align*}
We then reapply (\textsc{der}) to obtain the following:
\begin{center}
    \begin{minipage}{.45\linewidth}
        \infrule[der]{
            \Gamma_1, x:B \vdash t@l : A
        }{
            \Gamma_1, x:\verctype{B}{1} \vdash t@l : A
        }
    \end{minipage}
\end{center}
Thus, we obtain the conclusion of the lemma.
\\

\item Case (\textsc{pr})
\begin{center}
    \begin{minipage}{.45\linewidth}
        \infrule[pr]{
            [\Gamma] \vdash t : B
            \andalso
            \vdash r
        }{
            r\cdot\verctype{\Gamma}{} \vdash \pr{t} : \vertype{r}{B} 
        }
    \end{minipage}
\end{center}
This case holds trivially  because $\pr{t}@l \equiv \pr{t}$ for any labels $l$.
\\

\item Case (\textsc{ver})
\begin{center}
    \begin{minipage}{.65\linewidth}
        \infrule[ver]{
            \verctype{\Gamma_i}{} \vdash t_i : A
            \andalso
            \vdash \{\overline{l_i}\}
        }{
            \bigcup_i(\{l_i\}\cdot [\Gamma_i]) \vdash \nvval{\overline{l_i=t_i}} : \vertype{\{\overline{l_i}\}}{A}
        }
    \end{minipage}
\end{center}
This case holds trivially because $\nvval{\overline{l_i=t_i}}@l\equiv \nvval{\overline{l_i=t_i}}$ for any labels $l$.
\\

\item Case (\textsc{veri})
\begin{center}
    \begin{minipage}{.55\linewidth}
        \infrule[veri]{
            \verctype{\Gamma_i}{} \vdash t_i : A
            \andalso
            \vdash \{\overline{l_i}\}
            \andalso
            l_k \in \{\overline{l_i}\}
        }{
            \bigcup_i(\{l_i\}\cdot [\Gamma_i]) \vdash \ivval{\overline{l_i=t_i}}{l_k} : A
        }
    \end{minipage}
\end{center}
In this case, there are two possibilities for the one step evaluation of $t$.
\begin{itemize}
\item Suppose $l \in \{\overline{l_i}\}$.\\
We can apply the default version overwriting as follows:
\begin{center}
    \begin{minipage}{.40\linewidth}
        \infrule[]{
            l \in \{\overline{l_i}\}
        }{
            \ivval{\overline{l_i=t_i}}{l_k}@l \,\equiv\, \ivval{\overline{l_i=t_i}}{l}
        }
    \end{minipage}
\end{center}
In this case, we can derive the type of $\ivval{\overline{l_i=t_i}}{l}$ as follows:
\begin{center}
    \begin{minipage}{.65\linewidth}
        \infrule[veri]{
            \verctype{\Gamma_i}{} \vdash t_i : A
        }{
            \bigcup_i(\{l_i\}\cdot [\Gamma_i]) \vdash \ivval{\overline{l_i=t_i}}{l} : A
        }
    \end{minipage}
\end{center}
Thus, we obtain the conclusion of the lemma.
\item Suppose $l \notin \{\overline{l_i}\}$.\\
We can apply the default version overwriting as follows:
\begin{center}
    \begin{minipage}{.50\linewidth}
        \infrule[]{
            l \notin \{\overline{l_i}\}
        }{
            \ivval{\overline{l_i=t_i}}{l_k}@l \,\equiv\, \ivval{\overline{l_i=t_i}}{l_k}
        }
    \end{minipage}
\end{center}
This case holds trivially because $\ivval{\overline{l_i=t_i}}{l_k}@l = \ivval{\overline{l_i=t_i}}{l_k}$.\\
\end{itemize}

\item Case (\textsc{extr})
\begin{center}
    \begin{minipage}{.50\linewidth}
        \infrule[extr]{
            \Gamma \vdash t_1 : \vertype{r}{A}
            \andalso
            l_k \in r
        }{
            \Gamma \vdash t_1.l_k : A
        }
    \end{minipage}
\end{center}
By induction hypothesis, there exists a term $t_1@l$ such that:
\begin{align*}
\Gamma \vdash t_1@l : \vertype{r}{A}
\end{align*}
We then reapply (\textsc{extr}) to obtain the following:
\begin{center}
    \begin{minipage}{.55\linewidth}
        \infrule[extr]{
            \Gamma \vdash t_1@l : \vertype{r}{A}
            \andalso
            l_k \in r
        }{
            \Gamma \vdash (t_1@l).l_k : A
        }
    \end{minipage}
\end{center}
Thus, note that $(t_1.l_k)@l \equiv (t_1@l).l_k$, we obtain the conclusion of the lemma.
\\

\item Case (\textsc{sub})
\begin{center}
    \begin{minipage}{.55\linewidth}
        \infrule[\textsc{sub}]{
            \Gamma_1, x:\verctype{B}{r}, \Gamma_2 \vdash t : A
            \andalso
            r \sqsubseteq s
            \andalso
            \vdash s
        }{
            \Gamma_1, x:\verctype{B}{s}, \Gamma_2 \vdash t : A
        }
    \end{minipage}
\end{center}
By induction hypothesis, there exists a term $t@l$ such that:
\begin{align*}
\Gamma_1, x:\verctype{B}{r}, \Gamma_2 \vdash t@l : A
\end{align*}
We then reapply (\textsc{sub}) to obtain the following:
\begin{center}
    \begin{minipage}{.55\linewidth}
        \infrule[\textsc{sub}]{
            \Gamma_1, x:\verctype{B}{r}, \Gamma_2 \vdash t@l : A
            \andalso
            r \sqsubseteq s
            \andalso
            \vdash s
        }{
            \Gamma_1, x:\verctype{B}{s}, \Gamma_2 \vdash t@l : A
        }
    \end{minipage}
\end{center}
Thus, we obtain the conclusion of the lemma.

\end{itemize}
\end{proof}

\begin{lemma}[Type-safe extraction for versioned values]
\label{lemma:extraction}
\begin{align*}
    [\Gamma] \vdash u : \vertype{r}{A}
    \hspace{1em}\Longrightarrow\hspace{1em}
    \forall l_k\in r.\ \exists t'.
    \left\{
    \begin{aligned}
        &u.l_k   \longrightarrow     t' & & \ (progress)\\
        &[\Gamma] \vdash t' : A &  & \ (preservation)
    \end{aligned}
    \right.
\end{align*}
\end{lemma}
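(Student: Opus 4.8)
The plan is to prove both conclusions simultaneously by induction on the derivation of $[\Gamma]\vdash u:\vertype{r}{A}$, reducing the argument to the two introduction rules $(\textsc{pr})$ and $(\textsc{ver})$. The structural rules $(\textsc{weak})$, $(\textsc{der})$, and $(\textsc{sub})$ are routine: since the redex $u.l_k$ and its contractum do not depend on the resources or the extra assumptions recorded in the context, I would apply the induction hypothesis to the premise and then reapply the same structural rule to the resulting typing of $t'$. Alternatively, one may first invoke the Inversion Lemma (Lemma~\ref{lemma:typedvalue}) on the value $u$ at type $\vertype{r}{A}$ to learn that $u$ is either a promotion $\pr{t}$ or a versioned record $\nvval{\overline{l_i=t_i}}$, and treat the structural rules through auxiliary generation lemmas. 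Note that the hypothesis $l_k\in r$ forces $r\neq\bot$, i.e.\ $r$ is a genuine, non-empty set of labels; I will use this below to justify every resource enlargement.

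For the promotion case, $(\textsc{pr})$ supplies a subderivation $[\Gamma_0]\vdash t:A$ with $[\Gamma]=r\cdot[\Gamma_0]$ and result type $\vertype{r}{A}$. Progress holds by $(\textsc{E-ex1})$, which contracts $\pr{t}.l_k\leadsto t@l_k$, so I take $t'=t@l_k$; lifting this to $u.l_k\longrightarrow t'$ is immediate with the empty evaluation context. For preservation, Lemma~\ref{lemma:overwriting} applied to $[\Gamma_0]\vdash t:A$ yields $[\Gamma_0]\vdash t@l_k:A$, and I then enlarge each versioned assumption $x:\verctype{B}{s}$ of $[\Gamma_0]$ to $x:\verctype{B}{r\otimes s}$ by repeated use of $(\textsc{sub})$, whose side condition $s\sqsubseteq r\otimes s$ holds for $r\neq\bot$ by the semiring order (Lemma~\ref{proof:semiring}). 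This produces exactly $r\cdot[\Gamma_0]=[\Gamma]\vdash t@l_k:A$. For the record case, $(\textsc{ver})$ gives premises $[\Gamma_i]\vdash t_i:A$ with $r=\{\overline{l_i}\}$ and $[\Gamma]=\bigcup_i(\{l_i\}\cdot[\Gamma_i])$; since $l_k\in r$ we have $l_k=l_j$ for some $j$, so $(\textsc{E-ex2})$ contracts $\nvval{\overline{l_i=t_i}}.l_k\leadsto t_k@l_k$, giving progress with $t'=t_k@l_k$, and Lemma~\ref{lemma:overwriting} again yields $[\Gamma_k]\vdash t_k@l_k:A$.

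The main obstacle is the preservation step in the record case, namely bridging the gap between the context $[\Gamma_k]$ typing the extracted branch and the summed, scaled context $[\Gamma]=\bigcup_i(\{l_i\}\cdot[\Gamma_i])$. I would carry this out in two moves. First, scale $[\Gamma_k]$ by $\{l_k\}$: each assumption $x:\verctype{B}{s_k}$ becomes $x:\verctype{B}{\{l_k\}\otimes s_k}$ via $(\textsc{sub})$, valid because $s_k\sqsubseteq\{l_k\}\otimes s_k$ (again using $\{l_k\}\neq\bot$). Second, absorb the remaining summands $\{l_i\}\cdot[\Gamma_i]$ for $i\neq j$: for a variable shared with $\Gamma_k$, its resource in $[\Gamma]$ is a join of all contributing summands, which dominates the single summand $\{l_k\}\otimes s_k$ by monotonicity of $\oplus$, so one further $(\textsc{sub})$ suffices; for a variable occurring only in some $\Gamma_i$ with $i\neq j$, I introduce it at resource $0=\bot$ by $(\textsc{weak})$ and then enlarge it to its actual resource by $(\textsc{sub})$, which is sound since $\bot$ is least. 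Assembling these yields $[\Gamma]\vdash t_k@l_k:A$, closing the case. The bookkeeping here is exactly where the semiring monotonicity laws of Lemma~\ref{proof:semiring} and the context-scaling definition (Definition~\ref{def:multiply}) do the real work.
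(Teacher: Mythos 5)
Your proposal is correct and follows essentially the same route as the paper's proof: invert the value $u$ into the two forms $\pr{t}$ and $\nvval{\overline{l_i=t_i}}$, obtain progress from $(\textsc{E-ex1})$/$(\textsc{E-ex2})$, apply the default-version-overwriting lemma (Lemma~\ref{lemma:overwriting}) to type the contractum in the inner context, and then rebuild $[\Gamma]$ by repeated $(\textsc{sub})$ (using $s\sqsubseteq r\otimes s$ and $\oplus$-monotonicity) together with $(\textsc{weak})$ for variables missing from the extracted branch. Your write-up is, if anything, slightly more explicit than the paper's about why each resource enlargement is licensed (noting that $l_k\in r$ forces $r\neq\bot$), but the decomposition and key lemmas are the same.
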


\begin{proof}
By inversion lemma (\ref{lemma:typedvalue}), $u$ has either a form $\pr{t''}$ or $\nvval{\overline{l_i=t_i}}$.

\begin{itemize}
\item Suppose $u=\pr{t''}$.\\
We can apply (\textsc{E-ex1}) as follows:
\begin{center}
    \begin{minipage}{.40\linewidth}
        \infrule[E-ex1]{
            \\
        }{
            \pr{t''}.l_k \leadsto t''@l_k
        }
    \end{minipage}
\end{center}
Also, we get the following derivation for $v$.
\begin{center}
\begin{prooftree}
\AxiomC{$ [\Gamma'] \vdash t'' : A$}
\AxiomC{$ \vdash r$}
\RightLabel{(\textsc{pr})}
\BinaryInfC{$ r\cdot [\Gamma'] \vdash \pr{t''} : \vertype{r}{A}$}
\RightLabel{(\textsc{weak}) or (\textsc{sub})}
\UnaryInfC{$\vdots$}
\RightLabel{(\textsc{weak}) or (\textsc{sub})}
\UnaryInfC{$ [\Gamma] \vdash \pr{t''} : \vertype{r}{A}$}
\end{prooftree}
\end{center}
By Lemma \ref{lemma:overwriting}, we know the following:
\begin{align*}
[\Gamma'] \vdash t''@l_k : A
\end{align*}
Finally, we can rearrange the typing context as follows:
\begin{center}
\begin{prooftree}
\AxiomC{$ [\Gamma'] \vdash t''@l_k : A$}
\RightLabel{(\textsc{weak}) or (\textsc{sub})}
\UnaryInfC{$\vdots$}
\RightLabel{(\textsc{weak}) or (\textsc{sub})}
\UnaryInfC{$ [\Gamma] \vdash t''@l_k : A$}
\end{prooftree}
\end{center}
Here, we follow the same manner as for the derivation of $\pr{t''}$ (which may use (\textsc{weak}) and (\textsc{sub})) to get $[\Gamma]$ from $r\cdot [\Gamma']$.

Thus, we obtain the conclusion of the lemma.

\item Suppose $u=\nvval{\overline{l_i=t_i}}$.\\
We can apply (\textsc{E-ex2}) as follows:
\begin{center}
    \begin{minipage}{.50\linewidth}
        \infrule[E-ex2]{
            \\
        }{
            \nvval{\overline{l_i=t_i}}.l_k \leadsto t_k@l_k
        }
    \end{minipage}
\end{center}
Also, we get the following derivation for $v$.
\begin{center}
\begin{prooftree}
\AxiomC{$ [\Gamma'_i] \vdash t_i : A$}
\AxiomC{$\vdash \{\overline{l_i}\}$}
\RightLabel{(\textsc{ver})}
\BinaryInfC{$ \bigcup_i(\{l_i\}\cdot[\Gamma'_i]) \vdash \nvval{\overline{l_i=t_i}} : \vertype{\{\overline{l_i}\}}{A}$}
\RightLabel{(\textsc{weak}) or (\textsc{sub})}
\UnaryInfC{$\vdots$}
\RightLabel{(\textsc{weak}) or (\textsc{sub})}
\UnaryInfC{$ [\Gamma] \vdash \nvval{\overline{l_i=t_i}} : \vertype{\{\overline{l_i}\}}{A}$}
\end{prooftree}
\end{center}
By Lemma \ref{lemma:overwriting}, we know the following:
\begin{align*}
[\Gamma'_k] \vdash t_k@l_k : A
\end{align*}
Finally, we can rearrange the typing context as follows:
\begin{center}
\begin{prooftree}
\AxiomC{$ [\Gamma'_k] \vdash t_k@l_k : A$}
\AxiomC{$ r_{kj} \sqsubseteq r_{kj}\otimes \{l_k\}$}
\RightLabel{(\textsc{sub}) $*\,|\Gamma_k'|$}
\BinaryInfC{$ \underbrace{\{l_k\}\cdot[\Gamma'_k] \vdash t_k@l_k : A}_{P}$}
\end{prooftree}
\end{center}

\begin{center}
\begin{prooftree}
\AxiomC{$P$}
\AxiomC{$ r_{kj}\otimes \{l_k\} \sqsubseteq \textstyle{\sum_{i}}(r_{ij}\otimes \{l_i\})$}
\RightLabel{(\textsc{sub}) $*\,|\Gamma_k'|$}
\BinaryInfC{$ \textstyle{\bigcup_i}(\{l_i\}\cdot[\Gamma'_i]) \vdash t_k@l_k : A$}
\RightLabel{(\textsc{weak}) or (\textsc{sub})}
\UnaryInfC{$\vdots$}
\RightLabel{(\textsc{weak}) or (\textsc{sub})}
\UnaryInfC{$ [\Gamma] \vdash t_k@l_k : A$}
\end{prooftree}
\end{center}
Here in the multiple application of (\textsc{sub}), the second premise compares the resources of j-th versioned assumption between the first premise and conclusion.
Also, we follow the same manner as for the derivation of $\nvval{\overline{l_i=t_i}}$ (which may use (\textsc{weak}) and (\textsc{sub})) to get $[\Gamma]$ from $\textstyle{\bigcup_i}(\{l_i\}\cdot[\Gamma'_i])$.

Thus, we obtain the conclusion of the lemma.

\end{itemize}
\end{proof}


\begin{theorem}[Type preservation for reductions]
\label{lemma:preservationreduction}
\begin{align*}
    \left.
    \begin{aligned}
        &\Gamma \vdash t : A\\
        &t \leadsto t'
    \end{aligned}
    \right\}
    \hspace{1em}\Longrightarrow\hspace{1em}
    \Gamma \vdash t' : A
\end{align*}
\end{theorem}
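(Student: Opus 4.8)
The plan is to prove the statement by case analysis on the last rule of the reduction derivation $t \leadsto t'$, inverting the typing derivation $\Gamma \vdash t : A$ in each case and reassembling a derivation of $\Gamma \vdash t' : A$. A fact I would establish first and reuse throughout is that whenever a versioned value $u$ (a promotion $\pr{s}$ or a record $\nvval{\overline{l_i=t_i}}$) is typable, its context is necessarily all-versioned, i.e. of the form $[\Gamma]$: the introduction rules (\textsc{pr}), (\textsc{ver}), (\textsc{veri}) conclude with $r\cdot[\Gamma']$ or $\bigcup_i(\{l_i\}\cdot[\Gamma_i])$, and (\textsc{weak}), (\textsc{der}), (\textsc{sub}) all preserve all-versionedness. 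Together with the inversion lemma (Lemma \ref{lemma:typedvalue}), this lets me read off both the shape of $u$ and the shape of its context in the cases that manipulate versioned values.

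The two $\beta$-like rules are routine. For (\textsc{E-abs1}) I invert (\textsc{app}) and then (\textsc{abs}) through the linear pattern rule (\textsc{pVar}) to obtain $\Gamma_2 \vdash t':A_1$ and $\Gamma_1,x:A_1 \vdash t:A_2$, and Lemma \ref{lemma:substitution1} yields $\Gamma_1+\Gamma_2 \vdash [t'/x]t : A_2$. For (\textsc{E-abs2}) the same inversion, but through the promoted-pattern rule (\textsc{p}$\square$), forces the argument type to be $\vertype{r}{B}$ and the bound context to be $x:\verctype{B}{r}$, so the premises of (\textsc{app}) coincide exactly with those of (\textsc{let}); I simply reapply (\textsc{let}) to type $\clet{x}{t'}{t}$. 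The extraction rules (\textsc{E-ex1}) and (\textsc{E-ex2}) are discharged directly by Lemma \ref{lemma:extraction}: inverting (\textsc{extr}) gives $[\Gamma] \vdash u : \vertype{r}{A}$ with $l \in r$, whose preservation conclusion gives $[\Gamma] \vdash t' : A$ for the corresponding reduct. Finally (\textsc{E-veri}) follows the second clause of the extraction lemma's proof: invert (\textsc{veri}) to recover $[\Gamma_k] \vdash t_k : A$, apply Lemma \ref{lemma:overwriting} to get $[\Gamma_k] \vdash t_k@l_k : A$, and climb back to the conclusion context $\bigcup_i(\{l_i\}\cdot[\Gamma_i])$ by a chain of (\textsc{sub}) steps that enlarge the resources (first to $\{l_k\}\cdot[\Gamma_k]$, then to the full union).

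The main obstacle is (\textsc{E-clet}), $\clet{x}{u}{t} \leadsto (u\rhd\pr{x})t$. Inverting (\textsc{let}) gives $\Gamma_1 \vdash u : \vertype{r}{A'}$ and $\Gamma_2, x:\verctype{A'}{r} \vdash t : A$, and by the inversion observation $u$ is either $\pr{t''}$ or $\nvval{\overline{l_i=t_i}}$, so the reduct unfolds, via the substitution definition, to $[t''/x]t$ or $[\ivval{\overline{l_i=t_i}}{l_k}/x]t$ respectively. The promotion subcase is clean: (\textsc{pr}) assigns $u$ the context $r\cdot[\Delta]$ with $[\Delta]\vdash t'' : A'$, so $\Gamma_1 = r\cdot\Delta$ is exactly the multiplicand expected by Lemma \ref{lemma:substitution2}, and the lemma delivers $\Gamma_2 + r\cdot\Delta \vdash [t''/x]t : A$, i.e. $\Gamma_1 + \Gamma_2 \vdash t' : A$.

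The record subcase is where the real difficulty lies, and I expect it to require more than a direct appeal to Lemma \ref{lemma:substitution2}. Here (\textsc{ver}) gives $\Gamma_1 = \bigcup_i(\{l_i\}\cdot[\Gamma_i])$ with $r = \{\overline{l_i}\}$, while the intermediate term $\ivval{\overline{l_i=t_i}}{l_k}$ is typed by (\textsc{veri}) under the \emph{same} context $\bigcup_i(\{l_i\}\cdot[\Gamma_i])$. Feeding this into Lemma \ref{lemma:substitution2} rescales the whole union by $r=\{\overline{l_i}\}$, which strictly enlarges the resource of any variable absent from some component $\Gamma_i$; since (\textsc{sub}) can only enlarge resources, this surplus cannot be trimmed back to the target $\bigcup_i(\{l_i\}\cdot[\Gamma_i])$. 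My plan is therefore to prove a dedicated substitution principle for the intermediate record form, analogous to Lemma \ref{lemma:substitution2} but scaling each component context $[\Gamma_i]$ by its own label $\{l_i\}$, so that the result lands on $\Gamma_2 + \bigcup_i(\{l_i\}\cdot[\Gamma_i])$ exactly. Its proof would go by induction on $t$, mirroring the (\textsc{ver}) case of Lemma \ref{lemma:substitution2} and using the context shuffle and distributivity lemmas (Lemmas \ref{lemma:shuffle} and \ref{lemma:distributivelaw}) to realign resources label-by-label; this is the step I anticipate to be the crux of the whole argument, the remaining cases being essentially mechanical.
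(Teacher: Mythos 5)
Your proposal is sound and, for most cases, lands on exactly the lemmas the paper uses (Lemma~\ref{lemma:substitution1} for \textsc{E-abs1}, direct reassembly via \textsc{let} for \textsc{E-abs2}, Lemma~\ref{lemma:extraction} for the extraction rules, Lemma~\ref{lemma:overwriting} plus \textsc{sub}-climbing for \textsc{E-veri}, and Lemma~\ref{lemma:substitution2} for the promotion subcase of \textsc{E-clet}). You also correctly diagnose the crux: naively feeding the record subcase of \textsc{E-clet} into Lemma~\ref{lemma:substitution2} rescales the whole context $\bigcup_i(\{l_i\}\cdot[\Gamma_i])$ by $r=\{\overline{l_i}\}$, overshooting the target, and \textsc{sub} cannot shrink resources back. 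Where you diverge from the paper is in how this is resolved. You propose a bespoke substitution lemma that scales each $[\Gamma_i]$ by its own $\{l_i\}$; that lemma is provable (its proof would essentially replay the \textsc{ver} case of Lemma~\ref{lemma:substitution2}) but it is extra machinery. The paper instead observes that the substituted term is not the record itself but the intermediate form $\ivval{\overline{l_i=t_i}}{l_k}$, which rule \textsc{veri} types at the \emph{unboxed} type $A$ under the very same context $\bigcup_i(\{l_i\}\cdot[\Gamma_i])$ that types the record; it then inverts the derivation of the \textsc{let} body down through the \textsc{der}/\textsc{sub} steps to recover the underlying \emph{linear} assumption $x:A$, and applies the linear substitution Lemma~\ref{lemma:substitution1}, which inserts the context purely additively with no rescaling at all. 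This is shorter but leans on the derivation of $\Gamma_2, x:\verctype{A}{r}\vdash t_2:B$ having that particular \textsc{der}/\textsc{sub} shape; your dedicated lemma avoids that assumption at the cost of one more induction. A final organizational difference: the paper inducts on the typing derivation (which is what lets it thread the reduction through interleaved \textsc{weak}/\textsc{der}/\textsc{sub} steps), whereas you case on the reduction rule and invert typing modulo those structural rules; both work, but your version should state the ``inversion modulo structural rules'' step explicitly, since it is doing real work in every case.
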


\begin{proof}
The proof is given by induction on the typing derivation of $t$.
Consider the cases for the last rule used in the typing derivation of the first assumption.
\\

\begin{itemize}
\item Case (\textsc{int})
\begin{center}
    \begin{minipage}{.25\linewidth}
        \infrule[int]{
            \\
        }{
            \emptyset \vdash n:\textsf{Int}
        }
    \end{minipage}
\end{center}
This case holds trivially because there are no reduction rules that can be applied to $n$.
\\

\item Case (\textsc{var})
\begin{center}
    \begin{minipage}{.35\linewidth}
        \infrule[var]{
            \vdash A
        }{
            x:A \vdash x:A
        }
    \end{minipage}
\end{center}
This case holds trivially because there are no reduction rules that can be applied to $x$.
\\

\item Case (\textsc{abs})
\begin{center}
    \begin{minipage}{.65\linewidth}
        \infrule[abs]{
            - \vdash p : B_1 \rhd \Delta'
            \andalso
            \Gamma, \Delta' \,\vdash\, t_1 : A_2
        }{
            \Gamma \,\vdash\, \lam{p}{t_1} : \ftype{A_1}{A_2}
        }
    \end{minipage}
\end{center}
This case holds trivially because there are no reduction rules that can be applied to $\lam{p}{t_1}$.
\\

\item Case (\textsc{app})
\begin{center}
    \begin{minipage}{.65\linewidth}
        \infrule[app]{
            \Gamma_1 \vdash t_1 : \ftype{B}{A}
            \andalso
            \Gamma_2 \vdash t_2 : B
        }{
            \Gamma_1 + \Gamma_2 \vdash \app{t_1}{t_2} : A
        }
    \end{minipage}
\end{center}

We perform case analysis for the ruduction rule applied last.
\begin{itemize}
\item Case (\textsc{E-abs1})
\begin{center}
        \begin{minipage}{.5\linewidth}
            \infrule[E-abs1]{
                \\
            }{
                \underbrace{\app{(\lam{x}{t'_1})}{t_2}}_{t} \leadsto \app{(t_2 \rhd x)}{t'_1}
            }
        \end{minipage}
\end{center}
where $t_1=\lam{x}{t'_1}$ for a term $t_1'$.
Then we can apply ($\rhd_{\textrm{var}}$) to obtain the following:
\begin{center}
    \begin{minipage}{.45\linewidth}
        \infrule[$\rhd_{\textrm{var}}$]{
            \\
        }{
            \app{(t_2 \rhd x)}{t'_1} = [t_2 / x]t'_1
        }
    \end{minipage}
\end{center}
In this case, we know the typing derivation of $t$ has the following form:
\begin{center}
\begin{prooftree}
\AxiomC{$ \Gamma'_1, x:B \vdash t'_1 : A$}
\RightLabel{(\textsc{abs})}
\UnaryInfC{$ \Gamma'_1 \vdash \lam{x}{t'_1} : \ftype{B}{A} $}
\RightLabel{(\textsc{weak}), (\textsc{der}), or (\textsc{sub})}
\UnaryInfC{$ \vdots $}
\RightLabel{(\textsc{weak}), (\textsc{der}), or (\textsc{sub})}
\UnaryInfC{$\Gamma_1 \vdash \lam{x}{t'_1} : \ftype{B}{A}$}
\AxiomC{$ \Gamma_2 \vdash t_2 : B$}
\RightLabel{(\textsc{app})}
\BinaryInfC{$ \Gamma_1 + \Gamma_2 \vdash \app{(\lam{x}{t'_1})}{t_2} : A$}
\end{prooftree}
\end{center}
By Lemma \ref{lemma:substitution1}, we know the following:
\begin{align*}
    \left.
    \begin{aligned}
          \Gamma_2 & \vdash t_2:B \\
          \Gamma'_1, x:B & \vdash t'_1:A
    \end{aligned}
    \right\}
    \hspace{1em}\Longrightarrow\hspace{1em}
    \Gamma'_1 + \Gamma_2 \vdash [t_2/x]t'_1:A
\end{align*}
Finally, we can rearrange the typing context as follows:
\begin{center}
\begin{prooftree}
\AxiomC{$ \Gamma'_1 + \Gamma_2 \vdash [t_2/x]t'_1:A $}
\RightLabel{(\textsc{weak}), (\textsc{der}), or (\textsc{sub})}
\UnaryInfC{$ \vdots $}
\RightLabel{(\textsc{weak}), (\textsc{der}), or (\textsc{sub})}
\UnaryInfC{$\Gamma_1 + \Gamma_2 \vdash [t_2/x]t'_1:A$}
\end{prooftree}
\end{center}
Here, there exists a derive tree to get $\Gamma_1+\Gamma_2$ from $\Gamma'_1+\Gamma_2$ as for the derivation of $\lam{x}{t'_1}$ which may use (\textsc{weak}), (\textsc{der}) and (\textsc{sub}).

By choosing $t'=[t_2/x]t'_1$, we obtain the conclusion of the theorem.\\

\item Case (\textsc{E-abs2})
\begin{center}
        \begin{minipage}{.75\linewidth}
            \infrule[E-abs2]{
                \\
            }{
                \underbrace{\app{(\lam{\pr{x}}{t'_1})}{t_2}}_{t} \leadsto \underbrace{\clet{x}{t_2}{t'_1}}_{t'}
            }
        \end{minipage}
\end{center}
In this case, we know the typing derivation of $t$ has the following form:
\begin{center}
\begin{prooftree}
\AxiomC{$ $}
\RightLabel{(\mbox{[}\textsc{pVar}\mbox{]})}
\UnaryInfC{$ r \vdash x : B \rhd x:\verctype{B}{r}$}
\RightLabel{(\textsc{p}$_\square$)}
\UnaryInfC{$ - \vdash \pr{x} : \vertype{r}{B} \rhd x:\verctype{B}{r}$}
\AxiomC{$ \Gamma_1, x:\verctype{B}{r} \vdash t_1' : A$}
\RightLabel{(\textsc{abs})}
\BinaryInfC{$ \underbrace{ \Gamma_1 \vdash  \lam{\pr{x}}{t'_1} : \ftype{\vertype{r}{B}}{A} }_{P}$}
\end{prooftree}
\begin{prooftree}
\AxiomC{$P$}
\AxiomC{$ \Gamma_2 \vdash t_2 : \vertype{r}{B}$}
\RightLabel{(\textsc{app})}
\BinaryInfC{$ \Gamma_1+\Gamma_2 \vdash \app{(\lam{\pr{x}}{t'_1})}{t_2} : A$}
\end{prooftree}
\end{center}
Therefore, we can construct the derivation tree for $t'$ as follows.
\begin{center}
\begin{prooftree}
\AxiomC{$ \Gamma_2 \vdash t_2 : \vertype{r}{B}$}
\AxiomC{$ \Gamma_1, x:\verctype{B}{r} \vdash t_1' : A$}
\RightLabel{(\textsc{app})}
\BinaryInfC{$ \Gamma_1+\Gamma_2 \vdash \clet{x}{t_2}{t'_1} : A$}
\end{prooftree}
\end{center}
Hence, we have the conclusion of the theorem.\\
\end{itemize}

\item Case (\textsc{let})
\begin{center}
    \begin{minipage}{.70\linewidth}
        \infrule[let]{
            \Gamma_1 \,\vdash\, t_1 : \vertype{r}{A}
            \andalso
            \Gamma_2, x:\verctype{A}{r} \,\vdash\, t_2 : B
        }{
            \Gamma_1 + \Gamma_2 \,\vdash\, \clet{x}{t_1}{t_2} : B
        }
    \end{minipage}
\end{center}
The only reduction rule we can apply is (\textsc{E-clet}) with two substitution rules, depending on whether $t_1$ has the form $[t'_1]$ or $\nvval{\overline{l_i=t''_i}}$.
\begin{itemize}
\item Suppose $t_1=[t_1']$.\\
We can apply (\textsc{E-clet}) to obtain the following.
\begin{center}
    \begin{minipage}{.65\linewidth}
        \infrule[E-clet]{
            \\
        }{
            \underbrace{\clet{x}{[t_1']}{t_2}}_{t} \leadsto ([t_1'] \rhd \pr{x})t_2
        }
    \end{minipage}
\end{center}
Thus, we can apply (\textsc{$\rhd_\square$}) and (\textsc{$\rhd_{\textnormal{var}}$}) to obtain the following.
\begin{center}
\begin{prooftree}
\AxiomC{$ $}
\RightLabel{($\rhd_{\textnormal{var}}$)}
\UnaryInfC{$ (t_1' \rhd x)t_2 = [t_1' / x] t_2$}
\RightLabel{($\rhd_{\square}$)}
\UnaryInfC{$ ([t_1'] \rhd \pr{x})t_2 = [t_1' / x] t_2$}
\end{prooftree}
\end{center}
In this case, we know the typing derivation of $t$ has the following form:
\begin{center}
\begin{prooftree}
\AxiomC{$ [\Gamma_1'] \vdash t'_1 : A$}
\AxiomC{$ \vdash r$}
\RightLabel{(\textsc{pr})}
\BinaryInfC{$ r\cdot [\Gamma_1'] \,\vdash\, [t_1'] : \vertype{r}{A} $}
\RightLabel{(\textsc{weak}) or (\textsc{sub})}
\UnaryInfC{$ \vdots $}
\RightLabel{(\textsc{weak}) or (\textsc{sub})}
\UnaryInfC{$ \Gamma_1 \,\vdash\, [t_1'] : \vertype{r}{A}$}
\AxiomC{$ \Gamma_2, x:\verctype{A}{r} \,\vdash\, t_2 : B $}
\RightLabel{(\textsc{let})}
\BinaryInfC{$ \Gamma_1+\Gamma_2 \,\vdash\, \clet{x}{[t_1']}{t_2} : B $}
\end{prooftree}
\end{center}
By Lemma \ref{lemma:substitution2}, we know the following:
\begin{align*}
    \left.
    \begin{aligned}\relax
          [\Gamma_1'] &\,\vdash\, t'_1 : A\\
          \Gamma_2, x:\verctype{A}{r} &\,\vdash\, t_2 : B
    \end{aligned}
    \right\}
    \hspace{1em}\Longrightarrow\hspace{1em}
    \Gamma_2 + r\cdot[\Gamma_1'] \,\vdash\, [t'_1/x]t_2 : B
\end{align*}
Finally, we can rearrange the typing context as follows:
\begin{center}
\begin{prooftree}
\AxiomC{$ \Gamma_2 + r\cdot[\Gamma_1'] \,\vdash\, [t'_1/x]t_2 : B $}
\RightLabel{(\textsc{weak}) or (\textsc{sub})}
\UnaryInfC{$ \vdots $}
\RightLabel{(\textsc{weak}) or (\textsc{sub})}
\UnaryInfC{$ \Gamma_2+\Gamma_1 \,\vdash\, [t'_1/x]t_2 : B$}
\end{prooftree}
\end{center}
Here, there exists a derive tree to get $\Gamma_2+\Gamma_1$ from $\Gamma_2+r\cdot [\Gamma_1']$ as for the derivation of $\pr{t'_1}$ which may use (\textsc{weak}) and (\textsc{sub}).

Thus, by choosing $t' = [t'_1/x]t_2$, we obtain the conclusion of the theorem.

\item Suppose $t_1 = \nvval{\overline{l_i=t''_i}}$.\\
We can apply (\textsc{E-clet}) to obtain the following:
\begin{center}
        \begin{minipage}{.95\linewidth}
            \infrule[E-clet]{
                \\
            }{
                \underbrace{\clet{x}{\nvval{\overline{l_i=t''_i}}}{t_2}}_{t} \leadsto (\nvval{\overline{l_i=t''_i}} \rhd \pr{x})t_2
            }
        \end{minipage}
\end{center}
Thus, we can apply (\textsc{$\rhd_\textnormal{ver}$}) and (\textsc{$\rhd_{\textnormal{var}}$}) to obtain the following.
\begin{center}
\begin{prooftree}
\AxiomC{$ $}
\RightLabel{($\rhd_{\textnormal{var}}$)}
\UnaryInfC{$ (\ivval{\overline{l_i=t''_i}}{l_k} \rhd x)t_2 = [\ivval{\overline{l_i=t''_i}}{l_k} / x] t_2$}
\RightLabel{($\rhd_\textnormal{ver}$)}
\UnaryInfC{$ (\nvval{\overline{l_i=t''_i}} \rhd \pr{x})t_2 = [\ivval{\overline{l_i=t''_i}}{l_k} / x] t_2$}
\end{prooftree}
\end{center}
In this case, we know the typing derivation of $t$ has the following form:
\begin{center}
\begin{prooftree}
\AxiomC{$ [\Gamma_i'] \vdash t''_i : A$}
\AxiomC{$ \vdash \{\overline{l_i}\}$}
\RightLabel{(\textsc{ver})}
\BinaryInfC{$ \bigcup_i(\{l_i\}\cdot [\Gamma'_i]) \,\vdash\, \nvval{\overline{l_i=t''_i}} : \vertype{\{\overline{l_i}\}}{A}$}
\RightLabel{(\textsc{weak}) or (\textsc{sub})}
\UnaryInfC{$ \vdots $}
\RightLabel{(\textsc{weak}) or (\textsc{sub})}
\UnaryInfC{$ \underbrace{\Gamma_1 \,\vdash\, \nvval{\overline{l_i=t''_i}} : \vertype{\{\overline{l_i}\}}{A}}_{P}$}
\end{prooftree}
\end{center}
\begin{center}
\begin{prooftree}
\AxiomC{$ P$}
\AxiomC{$ \Gamma_2, x:A \,\vdash\, t_2 : B $}
\RightLabel{(\textsc{der})}
\UnaryInfC{$ \Gamma_2, x:\verctype{A}{1} \,\vdash\, t_2 : B $}
\RightLabel{(\textsc{sub})$*|\{\overline{l_i}\}|$}
\UnaryInfC{$ \Gamma_2, x:\verctype{A}{\{\overline{l_i}\}} \,\vdash\, t_2 : B $}
\RightLabel{(\textsc{let})}
\BinaryInfC{$ \Gamma_1+\Gamma_2 \,\vdash\, \clet{x}{\nvval{\overline{l_i=t''_i}}}{t_2} : B $}
\end{prooftree}
\end{center}
Then we can derive the type of $\ivval{\overline{l_i=t''_i}}{l_k}$ as follows:
\begin{center}
    \begin{minipage}{.6\linewidth}
        \infrule[veri]{
            [\Gamma'_i] \,\vdash\, t''_i : A
        }{
            \bigcup_i(\{l_i\}\cdot [\Gamma'_i]) \,\vdash\, \ivval{\overline{l_i=t''_i}}{l_k} : A
        }
    \end{minipage}
\end{center}
By Lemma \ref{lemma:substitution1}, we know the following:
\begin{align*}
    \left.
    \begin{aligned}
          \textstyle{\bigcup_i(\{l_i\}\cdot [\Gamma'_i])} &\vdash \ivval{\overline{l_i=t''_i}}{l_k} : A\\
          \Gamma_2, x:A &\vdash t_2 : B
    \end{aligned}
    \right\}
    \hspace{.1em}\Longrightarrow\hspace{.1em}
    \left.
    \begin{aligned}
        \Gamma_2 + &\textstyle{\bigcup_i(\{l_i\}\cdot [\Gamma'_i])}\\
        &\,\vdash\, [\ivval{\overline{l_i=t''_i}}{l_k}/x] t_2 : B
    \end{aligned}
    \right.
\end{align*}
Finally, we can rearrange the typing context as follows:
\begin{center}
\begin{prooftree}
\AxiomC{$ \Gamma_2 + \textstyle{\bigcup_i(\{l_i\}\cdot [\Gamma'_i])} \,\vdash\, [\ivval{\overline{l_i=t''_i}}{l_k}/x] t_2 : B $}
\RightLabel{(\textsc{weak}) or (\textsc{sub})}
\UnaryInfC{$ \vdots $}
\RightLabel{(\textsc{weak}) or (\textsc{sub})}
\UnaryInfC{$ \Gamma_2 + \Gamma_1 \,\vdash\, [\ivval{\overline{l_i=t''_i}}{l_k}/x] t_2 : B $}
\end{prooftree}
\end{center}
Here, there exists a derive tree to get $\Gamma_2+\Gamma_1$ from $\Gamma_2 + \textstyle{\bigcup_i(\{l_i\}\cdot [\Gamma'_i])}$ as for the derivation of $\nvval{\overline{l_i=t''_i}}$ which may use (\textsc{weak}) and (\textsc{sub}).

Thus, by choosing $t' = [\ivval{\overline{l_i=t''_i}}{l_k}/x]t_2$, we obtain the conclusion of the theorem.\\
\end{itemize}

\item Case (\textsc{weak})
\begin{center}
    \begin{minipage}{.45\linewidth}
        \infrule[weak]{
            \Gamma_1 \vdash t : A
            \andalso
            \vdash \Delta'
        }{
            \Gamma_1 + \verctype{\Delta'}{0} \vdash t : A
        }
    \end{minipage}
\end{center}
In this case, $t$ does not change between before and after the last derivation.
The induction hypothesis implies that there exists a term $t''$ such that:
\begin{align*}
        t\leadsto t''
        \ \land\ \Gamma_1 \vdash t'' : A \tag{ih}
\end{align*}
We then reapply (\textsc{weak}) to obtain the following:
\begin{center}
    \begin{minipage}{.38\linewidth}
        \infrule[weak]{
            \Gamma_1 \vdash t'' : A
            \andalso
            \vdash \Delta'
        }{
            \Gamma_1 + \verctype{\Delta'}{0} \vdash t'' : A
        }
    \end{minipage}
\end{center}
Thus, by choosing $t'=t''$, we obtain the conclusion of the theorem.
\\

\item Case (\textsc{der})
\begin{center}
    \begin{minipage}{.45\linewidth}
        \infrule[der]{
            \Gamma_1, x:B \vdash t : A
        }{
            \Gamma_1, x:\verctype{B}{1} \vdash t : A
        }
    \end{minipage}
\end{center}
In this case, $t$ does not change between before and after the last derivation.
The induction hypothesis implies that there exists a term $t''$ such that:
\begin{align*}
        t\leadsto t''
        \ \land\ \Gamma_1, x:B \vdash t'' : A \tag{ih}
\end{align*}
We then reapply (\textsc{der}) to obtain the following:
\begin{center}
    \begin{minipage}{.38\linewidth}
        \infrule[der]{
            \Gamma_1, x:B \vdash t'' : A
        }{
            \Gamma_1, x:\verctype{B}{1} \vdash t'' : A
        }
    \end{minipage}
\end{center}
Thus, by choosing $t'=t''$, we obtain the conclusion of the theorem.
\\

\item Case (\textsc{pr})
\begin{center}
    \begin{minipage}{.35\linewidth}
        \infrule[pr]{
            \verctype{\Gamma}{} \vdash t'' : B
            \andalso
            \vdash r
        }{
            r\cdot\verctype{\Gamma}{} \vdash \pr{t''} : \vertype{r}{B} 
        }
    \end{minipage}
\end{center}
This case holds trivially because there are no reduction rules that can be applied to $\pr{t''}$.
\\

\item Case (\textsc{ver})
\begin{center}
    \begin{minipage}{.65\linewidth}
        \infrule[ver]{
            \verctype{\Gamma_i}{} \vdash t_i : A
            \andalso
            \vdash \{\overline{l_i}\}
        }{
            \bigcup_i(\{l_i\}\cdot [\Gamma_i]) \vdash \nvval{\overline{l_i=t_i}} : \vertype{\{\overline{l_i}\}}{A}
        }
    \end{minipage}
\end{center}
This case holds trivially because there are no reduction rules that can be applied to $\nvval{\overline{l_i=t_i}}$.
\\

\item Case (\textsc{veri})
\begin{center}
    \begin{minipage}{.55\linewidth}
        \infrule[veri]{
            \verctype{\Gamma_i}{} \vdash t_i : A
            \andalso
            \vdash \{\overline{l_i}\}
            \andalso
            l_k \in \{\overline{l_i}\}
        }{
            \bigcup_i(\{l_i\}\cdot [\Gamma_i]) \vdash \ivval{\overline{l_i=t_i}}{l_k} : A
        }
    \end{minipage}
\end{center}
In this case, the only reduction rule we can apply is (\textsc{E-veri}).
\begin{center}
        \begin{minipage}{.50\linewidth}
            \infrule[E-veri]{
                \\
            }{
                \underbrace{\ivval{\overline{l_i=t_i}}{l_k}}_{t} \leadsto t_k@l_k
            }
        \end{minipage}
\end{center}
By Lemma \ref{lemma:overwriting}, we obtain the following:
\begin{align*}
    [\Gamma_k] \vdash t_k : A
    \hspace{1em}\Longrightarrow\hspace{1em}
    [\Gamma_k] \vdash t_k@l_k : A
\end{align*}
Finally, we can rearrange the typing context as follows:
\begin{center}
\begin{prooftree}
\AxiomC{$ [\Gamma_k] \vdash t_k@l_k : A $}
\RightLabel{(\textsc{weak}), (\textsc{der}) or (\textsc{sub})}
\UnaryInfC{$ \vdots $}
\RightLabel{(\textsc{weak}), (\textsc{der}) or (\textsc{sub})}
\UnaryInfC{$ \bigcup_i(\{l_i\}\cdot [\Gamma_i]) \vdash t_k@l_k : A $}
\end{prooftree}
\end{center}
Thus, by choosing $t' = t_k@l_k$, we obtain the conclusion of the theorem.
\\

\item Case (\textsc{extr})
\begin{center}
    \begin{minipage}{.50\linewidth}
        \infrule[extr]{
            \Gamma \vdash t_1 : \vertype{r}{A}
            \andalso
            l_k \in r
        }{
            \Gamma \vdash t_1.l_k : A
        }
    \end{minipage}
\end{center}
In this case, there are two reduction rules that we can apply to $t$, dependenig on whether $t_1$ has the form $[t'_1]$ or $\nvval{\overline{l_i=t''_i}}$.
\begin{itemize}
\item Suppose $t_1 = \pr{t'_1}$.\\
We know the typing derivation of $t$ has the following form:
\begin{prooftree}
    \AxiomC{$ [\Gamma'] \vdash t'_1 : A$}
    \AxiomC{$ \vdash r$}
    \RightLabel{(\textsc{pr})}
    \BinaryInfC{$ r\cdot[\Gamma'] \vdash \pr{t'_1} : \vertype{r}{A}$}
    \RightLabel{(\textsc{weak}) or (\textsc{sub})}
    \UnaryInfC{$ \vdots $}
    \RightLabel{(\textsc{weak}) or (\textsc{sub})}
    \UnaryInfC{$ \Gamma \vdash \pr{t'_1} : \vertype{r}{A}$}
    \AxiomC{$l_k \in r$}
    \RightLabel{(\textsc{extr})}
    \BinaryInfC{$ \Gamma \vdash \pr{t'_1}.l_k : A$}
\end{prooftree}
By Lemma \ref{lemma:extraction}, we know the following:
\begin{align*}
    r\cdot[\Gamma'] \vdash \pr{t'_1} : \vertype{r}{A}
    \hspace{1em}\Longrightarrow\hspace{1em}
    \exists t'.
    \left\{
    \begin{aligned}
        &[t'_1].l_k   \longrightarrow      t' \\
        &r\cdot[\Gamma'] \vdash t' : A
    \end{aligned}
    \right.
\end{align*}
Finally, we can rearrange the typing context as follows:
\begin{center}
\begin{prooftree}
\AxiomC{$ r\cdot[\Gamma'] \vdash t' : A$}
\RightLabel{(\textsc{weak}) or (\textsc{sub})}
\UnaryInfC{$ \vdots $}
\RightLabel{(\textsc{weak}) or (\textsc{sub})}
\UnaryInfC{$ \Gamma \vdash t' : A$}
\end{prooftree}
\end{center}
Here, we follow the same manner as for the derivation of $\pr{t'_1}$ (which may use (\textsc{weak}) and (\textsc{sub})) to get $\Gamma$ from $r\cdot[\Gamma']$.

Thus, we obtain the conclusion of the theorem.

\item Suppose $t_1 = \nvval{\overline{l_i=t_i}}$.\\
The last derivation is rewritten as follows:
\begin{prooftree}
    \AxiomC{$[\Gamma'_i] \vdash t_i : A$}
    \AxiomC{$\vdash \{\overline{l_i}\}$}
    \RightLabel{(\textsc{ver})}
    \BinaryInfC{$ \bigcup_i\{l_i\}\cdot [\Gamma'_i] \vdash \nvval{\overline{l_i=t_i}} : \vertype{\{\overline{l_i}\}}{A}$}
    \RightLabel{(\textsc{weak}) or (\textsc{sub})}
    \UnaryInfC{$ \vdots $}
    \RightLabel{(\textsc{weak}) or (\textsc{sub})}
    \UnaryInfC{$ \Gamma \vdash \nvval{\overline{l_i=t_i}} : \vertype{\{\overline{l_i}\}}{A}$}
    \AxiomC{$l_k \in \{\overline{l_i}\}$}
    \RightLabel{(\textsc{extr})}
    \BinaryInfC{$ \Gamma \vdash \nvval{\overline{l_i=t_i}}.l_k : A$}
\end{prooftree}
By Lemma \ref{lemma:extraction}, we know the following:
\begin{align*}
    \textstyle{\bigcup_i}\{l_i\}\cdot [\Gamma'_i] \vdash \nvval{\overline{l_i=t_i}} : \vertype{\{\overline{l_i}\}}{A}
    \hspace{1em}\Longrightarrow\hspace{1em}
    \exists t'.
    \left\{
    \begin{aligned}
        &\nvval{\overline{l_i=t_i}}.l_k   \longrightarrow  t' \\
        &\textstyle{\bigcup_i}\{l_i\}\cdot [\Gamma'_i] \vdash t' : A
    \end{aligned}
    \right.
\end{align*}
Finally, we can rearrange the typing context as follows:
\begin{center}
\begin{prooftree}
\AxiomC{$ \textstyle{\bigcup_i}\{l_i\}\cdot [\Gamma'_i] \vdash t' : A$}
\RightLabel{(\textsc{weak}) or (\textsc{sub})}
\UnaryInfC{$ \vdots $}
\RightLabel{(\textsc{weak}) or (\textsc{sub})}
\UnaryInfC{$ \Gamma \vdash t' : A$}
\end{prooftree}
\end{center}
Here, we follow the same manner as for the derivation of $\nvval{\overline{l_i=t_i}}$ (which may use (\textsc{weak}) and (\textsc{sub})) to get $\Gamma$ from $\bigcup_i\{l_i\}\cdot [\Gamma'_i]$.

Thus, we obtain the conclusion of the theorem.\\
\end{itemize}

\item Case (\textsc{sub})
\begin{center}
    \begin{minipage}{.65\linewidth}
        \infrule[\textsc{sub}]{
            \Gamma_1, x:\verctype{B}{r}, \Gamma_2 \vdash t : A
            \andalso
            r \sqsubseteq s
            \andalso
            \vdash s
        }{
            \Gamma_1, x:\verctype{B}{s}, \Gamma_2 \vdash t : A
        }
    \end{minipage}
\end{center}
In this case, $t$ does not change between before and after the last derivation.
The induction hypothesis implies that there exists a term $t''$ such that:
\begin{align*}
    t \leadsto t''
    \ \land\ 
    \Gamma_1, x:\verctype{B}{r}, \Gamma_2 \vdash t'' : A \tag{ih}
\end{align*}
We then reapply (\textsc{sub}) to obtain the following:
\begin{prooftree}
    \AxiomC{$\Gamma_1, x:\verctype{B}{r}, \Gamma_2 \vdash t'' : A$}
    \AxiomC{$ r \sqsubseteq s$}
    \AxiomC{$ \vdash s$}
    \RightLabel{(\textsc{sub})}
    \TrinaryInfC{$ \Gamma_1, x:\verctype{B}{s}, \Gamma_2 \vdash t'' : A$}
\end{prooftree}
Thus, by choosing $t' = t''$, we obtain the conclusion of the theorem.

\end{itemize}
\end{proof}


\begin{theorem}[Type preservation for evaluations]
\label{lemma:preservationevaluation}
\begin{align*}
    \left.
    \begin{aligned}
        &\Gamma \vdash t : A\\
        &t \longrightarrow t'
    \end{aligned}
    \right\}
    \hspace{1em}\Longrightarrow\hspace{1em}
    \Gamma \vdash t' : A
\end{align*}
\end{theorem}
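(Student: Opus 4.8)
The plan is to reduce this theorem to the already-established Theorem~\ref{lemma:preservationreduction} (type preservation for reductions) by an induction that follows the evaluation context. Recall that the judgement $t \longrightarrow t'$ is derived by the single rule that lifts a reduction $s \leadsto s'$ through an evaluation context $E$. Hence, whenever $t \longrightarrow t'$, there exist an evaluation context $E$ and terms $s, s'$ with $t \equiv E[s]$, $t' \equiv E[s']$, and $s \leadsto s'$. The goal is to show that typing is preserved as the redex $s$ is replaced by $s'$ at the hole of $E$, given that such a replacement preserves the type of the redex, which is exactly the content of Theorem~\ref{lemma:preservationreduction}.

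I would proceed by induction on the typing derivation $\Gamma \vdash t : A$, mirroring the structure of the reduction-preservation proof. The base observation is that whenever $E = [\cdot]$ we have $t \equiv s \leadsto s' \equiv t'$, so the conclusion follows immediately from Theorem~\ref{lemma:preservationreduction}; thus only the cases where the redex lies strictly inside $t$ require work. For the rules whose subject term admits no nontrivial evaluation context --- (int), (var), (abs), (pr), (ver), and (veri) --- the grammar of $E$ does not descend below the top-level constructor, so $E = [\cdot]$ is forced and the case is either vacuous or dispatched by the base observation. The interesting rules are those matching an evaluation-context frame: in (app) with $E = E'\,t_2$, the first premise types the subterm $t_1 \equiv E'[s]$, and since $E'[s] \longrightarrow E'[s']$ the induction hypothesis yields $\Gamma_1 \vdash E'[s'] : \ftype{B}{A}$, after which reapplying (app) reconstructs a typing of $\app{(E'[s'])}{t_2} \equiv t'$; the cases (let) with $E = \clet{x}{E'}{t_2}$ and (extr) with $E = E'.l$ are handled identically, applying the induction hypothesis to the evaluating premise and reapplying the same rule.

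The remaining cases are the structural rules (weak), (der), and (sub). Here $t$ keeps its shape, so the same decomposition $t \equiv E[s] \longrightarrow E[s'] \equiv t'$ applies to the premise; I would invoke the induction hypothesis on the premise (which types $t$ under the inner context) to obtain a typing of $t'$, and then reapply the structural rule verbatim, exactly as in the corresponding cases of Theorem~\ref{lemma:preservationreduction}.

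I expect the main obstacle to be organizational rather than mathematical: one must verify that each evaluation-context frame $E'\,t_2$, $\clet{x}{E'}{t_2}$, and $E'.l$ aligns with precisely one term constructor, so that the decomposition $t \equiv E[s]$ pins down which premise carries the evaluating subterm, and that no reduction can occur in a position the grammar of $E$ forbids (a $\lambda$-body, a promotion body, or the interior of a versioned record), which is what makes those cases collapse to $E = [\cdot]$. Once this bookkeeping is in place, the theorem is essentially a congruence argument over evaluation contexts layered on top of Theorem~\ref{lemma:preservationreduction}.
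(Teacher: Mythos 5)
Your proposal is correct and follows essentially the same route as the paper: induction on the typing derivation, dispatching the top-level-redex case ($E=[\cdot]$) to Theorem~\ref{lemma:preservationreduction}, using the induction hypothesis plus rule reapplication for the congruence frames $\app{E}{t}$, $\clet{x}{E}{t}$, and $E.l$, and passing through the structural rules (\textsc{weak}), (\textsc{der}), (\textsc{sub}) unchanged. The only cosmetic difference is that the paper handles the (\textsc{E-abs2}) subcase of (\textsc{app}) by rebuilding the derivation directly rather than citing the reduction-preservation theorem, which does not affect correctness.
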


\begin{proof}
The proof is given by induction on the typing derivation of $t$.
Consider the cases for the last rule used in the typing derivation of the first assumption.

\begin{itemize}
\item Case (\textsc{int})
\begin{center}
    \begin{minipage}{.25\linewidth}
        \infrule[int]{
            \\
        }{
            \emptyset \vdash n:\textsf{Int}
        }
    \end{minipage}
\end{center}
This case holds trivially because there are no evaluation rules that can be applied to $n$.
\\

\item Case (\textsc{var})
\begin{center}
    \begin{minipage}{.35\linewidth}
        \infrule[var]{
            \vdash A
        }{
            x:A \vdash x:A
        }
    \end{minipage}
\end{center}
This case holds trivially because there are no evaluation rules that can be applied to $x$.
\\

\item Case (\textsc{abs})
\begin{center}
    \begin{minipage}{.6\linewidth}
        \infrule[abs]{
            - \vdash p : B_1 \rhd \Delta'
            \andalso
            \Gamma, \Delta' \,\vdash\, t_1 : A_2
        }{
            \Gamma \,\vdash\, \lam{p}{t_1} : \ftype{A_1}{A_2}
        }
    \end{minipage}
\end{center}
This case holds trivially because there are no evaluation rules that can be applied to $\lam{p}{t_1}$.
\\

\item Case (\textsc{app})
\begin{center}
    \begin{minipage}{.65\linewidth}
        \infrule[app]{
            \Gamma_1 \vdash t_1 : \ftype{B}{A}
            \andalso
            \Gamma_2 \vdash t_2 : B
        }{
            \Gamma_1 + \Gamma_2 \vdash \app{t_1}{t_2} : A
        }
    \end{minipage}
\end{center}
In this case, there are two evaluation rules that can be applied to $t$.

\begin{itemize}
\item Suppose the evaluation rule matches to $[\cdot].$\\
We perform the case analysis for the last ruduction rule.

\begin{itemize}
\item Case (\textsc{E-abs1})
We know the evaluation of the assumption has the following form:
\begin{center}
\begin{prooftree}
    \AxiomC{$ $}
    \RightLabel{$\textsc{E-abs1}$}
    \UnaryInfC{ $\app{(\lam{x}{t'_1})}{t_2} \leadsto \app{(t_2 \rhd x)}{t'_1}$}
    \UnaryInfC{$ \underbrace{\app{(\lam{x}{t'_1})}{t_2}}_{t} \longrightarrow \underbrace{\app{(t_2 \rhd x)}{t'_1}}_{t'}$}
\end{prooftree}
\end{center}
By Lemma \ref{lemma:preservationreduction}, we know the following:
\begin{align*}
    \left.
    \begin{aligned}
        &\Gamma_1+\Gamma_2 \vdash \app{(\lam{x}{t'_1})}{t_2} : A\\
        &\app{(\lam{x}{t'_1})}{t_2} \leadsto \app{(t_2 \rhd x)}{t'_1}
    \end{aligned}
    \right\}
    \hspace{1em}\Longrightarrow\hspace{1em}
    \Gamma_1+\Gamma_2 \vdash \app{(t_2 \rhd x)}{t'_1} : A
\end{align*}
Thus, we obtain the conclusion of the theorem.

\item Case (\textsc{E-abs2})
\begin{center}
\begin{prooftree}
    \AxiomC{$ $}
    \RightLabel{$\textsc{E-abs2}$}
    \UnaryInfC{ $\app{(\lam{\pr{x}}{t'_1})}{t_2} \leadsto \clet{x}{t_2}{t_1'}$}
    \UnaryInfC{$ \underbrace{\app{(\lam{\pr{x}}{t'_1})}{t_2}}_{t} \longrightarrow \underbrace{\clet{x}{t_2}{t_1'}}_{t'}$}
\end{prooftree}
\end{center}
In this case, we know the typing derivation of $t$ has the following form:
\begin{center}
\begin{prooftree}
\AxiomC{$ $}
\RightLabel{(\mbox{[}\textsc{pVar}\mbox{]})}
\UnaryInfC{$ r \vdash x : B \rhd x:\verctype{B}{r}$}
\RightLabel{(\textsc{p}$_\square$)}
\UnaryInfC{$ - \vdash \pr{x} : \vertype{r}{B} \rhd x:\verctype{B}{r}$}
\AxiomC{$ \Gamma_1, x:\verctype{B}{r} \vdash t_1' : A$}
\RightLabel{(\textsc{abs})}
\BinaryInfC{$ \underbrace{ \Gamma_1 \vdash  \lam{\pr{x}}{t'_1} : \ftype{\vertype{r}{B}}{A} }_{P}$}
\end{prooftree}
\begin{prooftree}
\AxiomC{$P$}
\AxiomC{$ \Gamma_2 \vdash t_2 : \vertype{r}{B}$}
\RightLabel{(\textsc{app})}
\BinaryInfC{$ \Gamma_1+\Gamma_2 \vdash \app{(\lam{\pr{x}}{t'_1})}{t_2} : A$}
\end{prooftree}
\end{center}
Therefore, we can construct the derivation tree for $t'$ as follows.
\begin{center}
\begin{prooftree}
\AxiomC{$ \Gamma_2 \vdash t_2 : \vertype{r}{B}$}
\AxiomC{$ \Gamma_1, x:\verctype{B}{r} \vdash t_1' : A$}
\RightLabel{(\textsc{app})}
\BinaryInfC{$ \Gamma_1+\Gamma_2 \vdash \clet{x}{t_2}{t'_1} : A$}
\end{prooftree}
\end{center}
Hence, we have the conclusion of the theorem.\\
\end{itemize}

\item Suppose the evaluation rule matches to $\app{E}{t}$.\\
We know the evaluation of the assumption has the following form:
\begin{center}
\begin{prooftree}
    \AxiomC{$ $}
    \UnaryInfC{ $t'_1 \leadsto t''_1$}
    \UnaryInfC{$ \underbrace{\app{E[t'_1]}{t_2}}_{t} \longrightarrow \app{E[t''_1]}{t_2}$}
\end{prooftree}
\end{center}
where $t_1=E[t'_1]$.

By induction hypothesis, we know the following:
\begin{align*}
    \left.
    \begin{aligned}
        &\Gamma_1 \vdash E[t'_1] : \ftype{B}{A}\\
        &E[t'_1] \longrightarrow E[t''_1]
    \end{aligned}
    \right\}
    \hspace{1em}\Longrightarrow\hspace{1em}
    \Gamma_1 \vdash E[t''_1] : \ftype{B}{A}
    \tag{ih}
\end{align*}
We then reapply (\textsc{app}) to obtain the following:
\begin{center}
    \begin{minipage}{.70\linewidth}
        \infrule[app]{
            \Gamma_1 \vdash E[t''_1] : \ftype{B}{A}
            \andalso
            \Gamma_2 \vdash t_2  : B
        }{
            \Gamma_1+\Gamma_2 \vdash \app{E[t''_1]}{t_2} : A
        }
    \end{minipage}
\end{center}
Thus, we obtain the conclusion of the theorem.\\
\end{itemize}

\item Case (\textsc{let})
\begin{center}
    \begin{minipage}{.70\linewidth}
        \infrule[let]{
            \Gamma_1 \,\vdash\, t_1 : \vertype{r}{A}
            \andalso
            \Gamma_2, x:\verctype{A}{r} \,\vdash\, t_2 : B
        }{
            \Gamma_1 + \Gamma_2 \,\vdash\, \clet{x}{t_1}{t_2} : B
        }
    \end{minipage}
\end{center}
In this case, there are two evaluation rules that we can apply to $t$.
\begin{itemize}
\item Suppose the evaluation rule matches to $[\cdot]$.\\
We know the evaluation of the assumption has the following form:
\begin{center}
\begin{prooftree}
    \AxiomC{$ $}
    \RightLabel{(\textsc{E-clet})}
    \UnaryInfC{ $\clet{x}{[t_1']}{t_2} \leadsto ([t_1'] \rhd \pr{x})t_2$}
    \UnaryInfC{$ \underbrace{\clet{x}{[t_1']}{t_2}}_{t} \longrightarrow ([t_1'] \rhd \pr{x})t_2$}
\end{prooftree}
\end{center}
By Lemma \ref{lemma:preservationreduction}, we know the following:
\begin{align*}
    \left.
    \begin{aligned}
        &\Gamma_1+\Gamma_2 \vdash \clet{x}{[t_1']}{t_2} : B\\
        &\clet{x}{[t_1']}{t_2} \leadsto ([t_1'] \rhd \pr{x})t_2
    \end{aligned}
    \right\}
    \hspace{1em}\Longrightarrow\hspace{1em}
    \Gamma_1+\Gamma_2 \vdash ([t_1'] \rhd \pr{x})t_2 : B
\end{align*}
Thus, we obtain the conclusion of the theorem.

\item Suppose the evaluation rule matches to $\clet{x}{E}{t}$.\\
We know the evaluation of the assumption has the following form:
\begin{center}
\begin{prooftree}
    \AxiomC{$ $}
    \UnaryInfC{ $t'_1 \leadsto t''_1$}
    \UnaryInfC{$ \underbrace{\clet{x}{E[t'_1]}{t_2}}_{t} \longrightarrow \clet{x}{E[t''_1]}{t_2}$}
\end{prooftree}
\end{center}
where $t_1=E[t'_1]$.

By induction hypothesis, we know the following:
\begin{align*}
    \left.
    \begin{aligned}
        &\Gamma_1 \vdash E[t'_1] : \vertype{r}{A}\\
        &E[t'_1] \longrightarrow E[t''_1]
    \end{aligned}
    \right\}
    \hspace{1em}\Longrightarrow\hspace{1em}
    \Gamma_1 \vdash E[t''_1] : \vertype{r}{A}
    \tag{ih}
\end{align*}
We then reapply (\textsc{let}) to obtain the following:
\begin{center}
    \begin{minipage}{.70\linewidth}
        \infrule[let]{
            \Gamma_1 \,\vdash\, E[t''_1] : \vertype{r}{A}
            \andalso
            \Gamma_2, x:\verctype{A}{r} \,\vdash\, t_2 : B
        }{
            \Gamma_1 + \Gamma_2 \,\vdash\, \clet{x}{E[t''_1]}{t_2} : B
        }
    \end{minipage}
\end{center}
Thus, we obtain the conclusion of the theorem.\\
\end{itemize}

\item Case (\textsc{weak})
\begin{center}
    \begin{minipage}{.45\linewidth}
        \infrule[weak]{
            \Gamma_1 \vdash t : A
            \andalso
            \vdash \Delta'
        }{
            \Gamma_1 + \verctype{\Delta'}{0} \vdash t : A
        }
    \end{minipage}
\end{center}
In this case, $t$ does not change between before and after the last derivation.
The induction hypothesis implies that there exists a term $t'$ such that:
\begin{align*}
        t \longrightarrow t'
        \ \land\ \Gamma_1 \vdash t' : A \tag{ih}
\end{align*}
We then reapply (\textsc{weak}) to obtain the following:
\begin{center}
    \begin{minipage}{.38\linewidth}
        \infrule[weak]{
            \Gamma_1 \vdash t' : A
            \andalso
            \vdash \Delta'
        }{
            \Gamma_1 + \verctype{\Delta'}{0} \vdash t' : A
        }
    \end{minipage}
\end{center}
Thus, we obtain the conclusion of the theorem.
\\

\item Case (\textsc{der})
\begin{center}
    \begin{minipage}{.45\linewidth}
        \infrule[der]{
            \Gamma_1, x:B \vdash t : A
        }{
            \Gamma_1, x:\verctype{B}{1} \vdash t : A
        }
    \end{minipage}
\end{center}
In this case, $t$ does not change between before and after the last derivation.
The induction hypothesis implies that there exists a term $t'$ such that:
\begin{align*}
        t \longrightarrow t'
        \ \land\ \Gamma_1, x:B \vdash t' : A \tag{ih}
\end{align*}
We then reapply (\textsc{der}) to obtain the following:
\begin{center}
    \begin{minipage}{.38\linewidth}
        \infrule[der]{
            \Gamma_1, x:B \vdash t' : A
        }{
            \Gamma_1, x:\verctype{B}{1} \vdash t' : A
        }
    \end{minipage}
\end{center}
Thus, we obtain the conclusion of the theorem.
\\

\item Case (\textsc{pr})
\begin{center}
    \begin{minipage}{.40\linewidth}
        \infrule[pr]{
            \verctype{\Gamma}{} \vdash t'' : B
            \andalso
            \vdash r
        }{
            r\cdot\verctype{\Gamma}{} \vdash \pr{t''} : \vertype{r}{B} 
        }
    \end{minipage}
\end{center}
This case holds trivially because there are no evaluation rules that can be applied to $\pr{t''}$.
\\

\item Case (\textsc{ver})
\begin{center}
    \begin{minipage}{.65\linewidth}
        \infrule[ver]{
            \verctype{\Gamma_i}{} \vdash t_i : A
            \andalso
            \vdash \{\overline{l_i}\}
        }{
            \bigcup_i(\{l_i\}\cdot [\Gamma_i]) \vdash \nvval{\overline{l_i=t_i}} : \vertype{\{\overline{l_i}\}}{A}
        }
    \end{minipage}
\end{center}
This case holds trivially because there are no evaluation rules that can be applied to $\nvval{\overline{l_i=t_i}}$.
\\

\item Case (\textsc{veri})
\begin{center}
    \begin{minipage}{.55\linewidth}
        \infrule[veri]{
            \verctype{\Gamma_i}{} \vdash t_i : A
            \andalso
            \vdash \{\overline{l_i}\}
            \andalso
            l_k \in \{\overline{l_i}\}
        }{
            \bigcup_i(\{l_i\}\cdot [\Gamma_i]) \vdash \ivval{\overline{l_i=t_i}}{l_k} : A
        }
    \end{minipage}
\end{center}
In this case, the only evaluation rule we can apply is evaluation for $[\cdot]$.
We know the evaluation of the assumption has the following form:
\begin{center}
\begin{prooftree}
    \AxiomC{$ $}
    \RightLabel{\textsc{E-veri}}
    \UnaryInfC{ $\ivval{\overline{l_i=t_i}}{l_k} \leadsto t_k@l_k$}
    \UnaryInfC{ $\underbrace{\ivval{\overline{l_i=t_i}}{l_k}}_{t} \longrightarrow t_k@l_k$}
\end{prooftree}
\end{center}
By Lemma \ref{lemma:preservationreduction}, we know the following:
\begin{align*}
    \left.
    \begin{aligned}
        &\textstyle{\bigcup_i}(\{l_i\}\cdot [\Gamma_i]) \vdash \ivval{\overline{l_i=t_i}}{l_k} : A\\
        &\ivval{\overline{l_i=t_i}}{l_k} \leadsto t_k@l_k
    \end{aligned}
    \right\}
    \hspace{1em}\Longrightarrow\hspace{1em}
    \textstyle{\bigcup_i}(\{l_i\}\cdot [\Gamma_i]) \vdash t_k@l_k : A
\end{align*}
Thus, we obtain the conclusion of the theorem.
\\

\item Case (\textsc{extr})
\begin{center}
    \begin{minipage}{.50\linewidth}
        \infrule[extr]{
            \Gamma \vdash t_1 : \vertype{r}{A}
            \andalso
            l_k \in r
        }{
            \Gamma \vdash t_1.l_k : A
        }
    \end{minipage}
\end{center}
In this case, there are two evaluation rules that we can apply to $t$.
\begin{itemize}
\item Suppose the evaluation rule matches to $[\cdot]$.\\
We know the evaluation of the assumption has the following form:
\begin{center}
\begin{prooftree}
    \AxiomC{$ $}
    \RightLabel{\textsc{E-ex1} or \textsc{E-ex2}}
    \UnaryInfC{$ t_1.l_k \leadsto t'_1$}
    \UnaryInfC{$ \underbrace{t_1.l_k}_{t} \longrightarrow t'_1$}
\end{prooftree}
\end{center}
By Lemma \ref{lemma:preservationreduction}, we know the following:
\begin{align*}
    \left.
    \begin{aligned}
        &\Gamma \vdash t_1.l_k : A\\
        &t_1.l_k \leadsto t'_1
    \end{aligned}
    \right\}
    \hspace{1em}\Longrightarrow\hspace{1em}
    \Gamma \vdash t'_1 : A
\end{align*}
Thus, we obtain the conclusion of the theorem.

\item Suppose the evaluation rule matches to $E.l$.\\
We know the evaluation of the assumption has the following form:
\begin{center}
\begin{prooftree}
    \AxiomC{$ $}
    \UnaryInfC{$ t'_1 \leadsto t''_1$}
    \UnaryInfC{$ \underbrace{E[t'_1].l_k}_{t} \longrightarrow E[t''_1].l_k$}
\end{prooftree}
\end{center}
where $t_1=E[t'_1]$.

By induction hypothesis, we know the following:
\begin{align*}
    \left.
    \begin{aligned}
        &\Gamma \vdash E[t'_1] : \vertype{r}{A}\\
        &E[t'_1] \longrightarrow E[t''_1]
    \end{aligned}
    \right\}
    \hspace{1em}\Longrightarrow\hspace{1em}
    \Gamma \vdash E[t''_1] : \vertype{r}{A}
    \tag{ih}
\end{align*}
We the reapply (\textsc{extr}) to obtain the following:
\begin{center}
    \begin{minipage}{.55\linewidth}
        \infrule[extr]{
            \Gamma \vdash E[t''_1] : \vertype{r}{A}
            \andalso
            l_k \in r
        }{
            \Gamma \vdash E[t''_1].l_k : A
        }
    \end{minipage}
\end{center}
Thus, we obtain the conclusion of the theorem.\\
\end{itemize}

\item Case (\textsc{sub})
\begin{center}
    \begin{minipage}{.65\linewidth}
        \infrule[\textsc{sub}]{
            \Gamma_1, x:\verctype{B}{r}, \Gamma_2 \vdash t : A
            \andalso
            r \sqsubseteq s
            \andalso
            \vdash s
        }{
            \Gamma_1, x:\verctype{B}{s}, \Gamma_2 \vdash t : A
        }
    \end{minipage}
\end{center}
In this case, $t$ does not change between before and after the last derivation.
The induction hypothesis implies that there exists a term $t'$ such that:
\begin{align*}
    t \longrightarrow t'
    \ \land\ 
    \Gamma_1, x:\verctype{B}{r}, \Gamma_2 \vdash t' : A \tag{ih}
\end{align*}
We then reapply (\textsc{sub}) to obtain the following:
\begin{prooftree}
    \AxiomC{$\Gamma_1, x:\verctype{B}{r}, \Gamma_2 \vdash t' : A$}
    \AxiomC{$ r \sqsubseteq s$}
    \AxiomC{$ \vdash s$}
    \RightLabel{(\textsc{sub})}
    \TrinaryInfC{$ \Gamma_1, x:\verctype{B}{s}, \Gamma_2 \vdash t' : A$}
\end{prooftree}
Thus, we obtain the conclusion of the theorem.

\end{itemize}
\end{proof}


\begin{theorem}[\corelang{} progress]
\label{lemma:progress}
\begin{align*}
    \emptyset \vdash t:A \Longrightarrow
    (\textnormal{\textsf{value}}\ t) \lor (\exists t'. t \longrightarrow t')
\end{align*}
\end{theorem}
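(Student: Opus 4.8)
The plan is to proceed by induction on the derivation of $\emptyset \vdash t : A$, with a case analysis on the last typing rule applied. The guiding principle is the standard one: if $t$ is already a value I am done, and otherwise I must exhibit a decomposition $t = E[s]$ with $s \leadsto s'$, so that $t \longrightarrow E[s']$ by the single evaluation rule. A recurring preliminary observation is that whenever a rule splits the context as $\emptyset = \Gamma_1 + \Gamma_2$, both summands must be empty, since context addition never deletes assumptions (the domain of a sum is the union of the domains); hence every immediate subterm I recurse into is itself closed and the induction hypothesis applies to it.

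First I would dispatch the cases that produce values outright: (\textsc{int}) gives $t = n$, (\textsc{abs}) gives $t = \lam{p}{t_1}$, and (\textsc{pr}) and (\textsc{ver}) give the versioned values $\pr{t_1}$ and $\nvval{\overline{l_i=t_i}}$; each of these is a value by definition. The rules whose conclusion carries a non-empty context --- (\textsc{var}), (\textsc{der}), and (\textsc{sub}) --- cannot be the last rule under the empty context, so they are vacuous; and (\textsc{weak}) under $\emptyset$ forces the weakened context to be empty, leaving the premise $\emptyset \vdash t : A$ on the same term $t$, so the induction hypothesis transfers directly. For (\textsc{veri}), the term $t = \ivval{\overline{l_i=t_i}}{l_k}$ is not a value, and I would simply apply (\textsc{E-veri}) in the empty evaluation context to step $t \longrightarrow t_k@l_k$.

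The substantive cases are the elimination forms (\textsc{app}), (\textsc{let}), and (\textsc{extr}). Taking (\textsc{app}) as representative, with $\emptyset \vdash t_1 : \ftype{B}{A}$ and $\emptyset \vdash t_2 : B$: if $t_1 \longrightarrow t_1'$ I lift the reduction through the evaluation context $\app{E}{t_2}$; if $t_1$ is a value, the inversion lemma (Lemma~\ref{lemma:typedvalue}) for function types forces $t_1 = \lam{p}{t_1'}$, and I apply (\textsc{E-abs1}) or (\textsc{E-abs2}) according to whether $p$ is $x$ or $\pr{x}$. The (\textsc{let}) case has the same shape, using the context $\clet{x}{E}{t_2}$ and, when $t_1$ is a value of type $\vertype{r}{A}$, inversion to obtain a versioned value $u$ followed by (\textsc{E-clet}). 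For (\textsc{extr}), where $t = t_1.l$ with $l \in r$: if $t_1$ steps I use the context $E.l$, and if $t_1$ is a value I invoke the extraction lemma (Lemma~\ref{lemma:extraction}), whose progress half directly yields $t_1.l \longrightarrow t'$ for every $l \in r$.

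I expect the only delicate points to be bookkeeping rather than conceptual. The main thing to get right is the interaction with the structural rules: one must argue carefully that (\textsc{var}), (\textsc{der}), and (\textsc{sub}) are genuinely unreachable under $\emptyset$ and that (\textsc{weak}) collapses to a no-op, so that the induction hypothesis is always applied to a closed subterm. Beyond that, the argument is a routine canonical-forms analysis, with Lemmas~\ref{lemma:typedvalue} and~\ref{lemma:extraction} supplying exactly the shape information and the extraction step that the elimination cases need.
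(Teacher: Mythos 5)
Your proposal is correct and follows essentially the same route as the paper's proof: induction on the typing derivation, canonical-forms analysis via the inversion lemma for the (\textsc{app}) and (\textsc{let}) cases, the extraction lemma's progress half for (\textsc{extr}), and direct treatment of the structural rules (the paper dispatches (\textsc{var}) and (\textsc{der}) as impossible under $\emptyset$ and passes (\textsc{weak}) and (\textsc{sub}) through the induction hypothesis, which matches your handling up to presentation). Your explicit preliminary remark that $\emptyset = \Gamma_1 + \Gamma_2$ forces both summands to be empty is a useful clarification the paper leaves implicit, but it does not change the argument.
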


\begin{proof}
The proof is given by induction on the typing derivation of $t$.
Consider the cases for the last rule used in the typing derivation of the assumption.
\\

\begin{itemize}
\item Case (\textsc{int})
\begin{center}
    \begin{minipage}{.25\linewidth}
        \infrule[int]{
            \\
        }{
            \emptyset \vdash n:\textsf{Int}
        }
    \end{minipage}
\end{center}
This case holds trivially because \textsf{value} $n$.
\\

\item Case (\textsc{var})
This case holds trivially because $x:A$ cannot be $\emptyset$.
\\

\item Case (\textsc{abs})
\begin{center}
    \begin{minipage}{.55\linewidth}
        \infrule[abs]{
            - \vdash p : A_1 \rhd \Delta'
            \andalso
            \Delta' \,\vdash\, t : A_2
        }{
            \emptyset \,\vdash\, \lam{p}{t} : \ftype{A_1}{A_2}
        }
    \end{minipage}
\end{center}
This case holds trivially because \textsf{value} $\lam{p}{t}$.
\\

\item Case (\textsc{app})
\begin{center}
    \begin{minipage}{.65\linewidth}
        \infrule[app]{
            \emptyset \vdash t_1 : \ftype{B}{A}
            \andalso
            \emptyset \vdash t_2 : B
        }{
            \emptyset \vdash \app{t_1}{t_2} : A
        }
    \end{minipage}
\end{center}
There are two cases whether $t_1$ is a value or not.
\begin{itemize}
\item Suppose $t_1$ is a value.\\
By the inversion lemma (\ref{lemma:typedvalue}), we know that there exists a term $t'_1$ and $t_1=\lam{p}{t'_1}$.
Thus, we can apply two rules to $t$ as follows.

\begin{itemize}
\item Case (\textsc{E-abs1})
\begin{center}
\begin{prooftree}
\AxiomC{$ $}
\RightLabel{(\textsc{E-abs1})}
\UnaryInfC{$ \app{(\lam{x}{t'_1})}{t_2} \leadsto \app{(t_2 \rhd x)}{t'_1}$}
\RightLabel{}
\UnaryInfC{$ \underbrace{\app{(\lam{x}{t'_1})}{t_2}}_{t} \longrightarrow \app{(t_2 \rhd x)}{t'_1}$}
\end{prooftree}
\end{center}
Furthermore, we know the following:
\begin{center}
        \begin{minipage}{.50\linewidth}
            \infrule[\ensuremath{\rhd_{\mathrm{var}}}]{
                \\
            }{
                \app{(t_2 \rhd x)}{t'_1} = [t_2/x]t'_1
            }
        \end{minipage}
\end{center}
By choosing $t'=[t_2/x]t'_1$, we obtain the conclusion of the theorem.

\item Case (\textsc{E-abs2})
\begin{center}
\begin{prooftree}
    \AxiomC{$ $}
    \RightLabel{$\textsc{E-abs2}$}
    \UnaryInfC{ $\app{(\lam{\pr{x}}{t'_1})}{t_2} \leadsto \clet{x}{t_2}{t_1'}$}
    \UnaryInfC{$ \underbrace{\app{(\lam{\pr{x}}{t'_1})}{t_2}}_{t} \longrightarrow \underbrace{\clet{x}{t_2}{t_1'}}_{t'}$}
\end{prooftree}
\end{center}
By choosing $t'=\clet{x}{t_2}{t_1'}$, we obtain the conclusion of the theorem.\\
\end{itemize}

\item Suppose $t_1$ is not a value.\\
There exists a term $t'_1$ such that:
\begin{align*}
    \begin{minipage}{.13\linewidth}
        \infrule[]{
            t_1 \leadsto t'_1
        }{
            t_1 \longrightarrow t'_1
        }
    \end{minipage}
\end{align*}
Also, we can apply evaluation for application to $t$.
\begin{center}
    \begin{minipage}{.27\linewidth}
        \infrule[]{
            t_1 \leadsto t'_1
        }{
            \underbrace{\app{t_1}{t_2}}_{t} \longrightarrow \app{t'_1}{t_2}
        }
    \end{minipage}
\end{center}
Thus, by choosing $t'=\app{t'_1}{t_2}$, we obtain the conclusion of the theorem.\\
\end{itemize}

\item Case (\textsc{let})
\begin{center}
    \begin{minipage}{.65\linewidth}
        \infrule[let]{
            \emptyset \,\vdash\, t_1 : \vertype{r}{A}
            \andalso
            x:\verctype{A}{r} \,\vdash\, t_2 : B
        }{
            \emptyset \,\vdash\, \clet{x}{t_1}{t_2} : B
        }
    \end{minipage}
\end{center}
There are two cases whether $t_1$ is a value or not.
\begin{itemize}
\item Suppose $t_1$ is a value.\\
By the inversion lemma (\ref{lemma:typedvalue}), we know that $t_1$ has either a form of $[t'_1]$ or and $\nvval{\overline{l_i=t''_i}}$.
\begin{itemize}
\item Case $t_1=[t_1']$.\\
In this case, we can apply (\textsc{E-clet}) to obtain the following.
\begin{center}
    \begin{minipage}{.8\linewidth}
        \infrule[E-clet]{
            \\
        }{
            \clet{x}{[t_1']}{t_2} \leadsto ([t_1'] \rhd \pr{x})t_2
        }
    \end{minipage}
\end{center}
Thus, we can apply (\textsc{$\rhd_\square$}) and (\textsc{$\rhd_{\textnormal{var}}$}) to obtain the following.
\begin{center}
\begin{prooftree}
\AxiomC{$ $}
\RightLabel{($\rhd_{\textnormal{var}}$)}
\UnaryInfC{$ (t_1' \rhd x)t_2 = [t_1' / x] t_2$}
\RightLabel{($\rhd_{\square}$)}
\UnaryInfC{$ ([t_1'] \rhd \pr{x})t_2 = [t_1' / x] t_2$}
\end{prooftree}
\end{center}
Thus, by choosing $t' = [t'_1/x]t_2$, we obtain the conclusion of the theorem.

\item Case $t_1 = \nvval{\overline{l_i=t''_i}}$.\\
In this case, we can apply (\textsc{E-clet}) to obtain the following:
\begin{center}
\begin{prooftree}
\AxiomC{$ $}
\RightLabel{(\textsc{E-clet})}
\UnaryInfC{$ \clet{x}{\nvval{\overline{l_i=t''_i}}}{t_2} \leadsto (\ivval{\overline{l_i=t''_i}}{l_k} \rhd \pr{x})t_2$}
\RightLabel{}
\UnaryInfC{$ \underbrace{\clet{x}{\nvval{\overline{l_i=t''_i}}}{t_2}}_{t} \longrightarrow (\ivval{\overline{l_i=t''_i}}{l_k} \rhd \pr{x})t_2 $}
\end{prooftree}
\end{center}
Also, we can apply (\textsc{$\rhd_\textnormal{ver}$}) and (\textsc{$\rhd_{\textnormal{var}}$}) to obtain the following.
\begin{center}
\begin{prooftree}
\AxiomC{$ $}
\RightLabel{($\rhd_{\textnormal{var}}$)}
\UnaryInfC{$ (\ivval{\overline{l_i=t''_i}}{l_k} \rhd x)t_2 = [\ivval{\overline{l_i=t''_i}}{l_k} / x] t_2$}
\RightLabel{($\rhd_\textnormal{ver}$)}
\UnaryInfC{$ (\nvval{\overline{l_i=t''_i}} \rhd \pr{x})t_2 = [\ivval{\overline{l_i=t''_i}}{l_k} / x] t_2$}
\end{prooftree}
\end{center}
Thus, by choosing $t' = [\ivval{\overline{l_i=t''_i}}{l_k}/x]t_2$, we obtain the conclusion of the theorem.
\end{itemize}

\item Suppose $t_1$ is not a value.\\
There exists terms $t'_1$ such that:
\begin{align*}
         \begin{minipage}{.13\linewidth}
            \infrule[]{
                 t_1\leadsto t'_1 \\
            }{
                 t_1\longrightarrow t'_1 
            }
        \end{minipage}
\end{align*}
Also, we can apply evaluation for contextual let bindings to $t$.
\begin{center}
    \begin{minipage}{.80\linewidth}
        \infrule[]{
            t_1 \leadsto t'_1
        }{
            \underbrace{\clet{x}{t_1}{t_2}}_{t} \longrightarrow \clet{x}{t'_1}{t_2}
        }
    \end{minipage}
\end{center}
Thus, by choosing $t'=(\clet{x}{t'_1}{t_2})$, we obtain the conclusion of the theorem.\\
\end{itemize}

\item Case (\textsc{weak})
\begin{center}
    \begin{minipage}{.34\linewidth}
        \infrule[weak]{
            \emptyset \vdash t : A
            \andalso
            \vdash \emptyset
        }{
            \emptyset \vdash t : A
        }
    \end{minipage}
\end{center}
In this case, $t$ does not change between before and after the last derivation.
Thus, we can obtain the conclusion of the theorem by induction hypothesis.
\\

\item Case (\textsc{der})\\
This case hold trivially because $\Gamma_1, x:\verctype{B}{1}$ cannot be $\emptyset$.
\\

\item Case (\textsc{pr})
\begin{center}
    \begin{minipage}{.29\linewidth}
        \infrule[pr]{
            \emptyset \vdash t : B
            \andalso
            \vdash r
        }{
            \emptyset \vdash \pr{t} : \vertype{r}{B} 
        }
    \end{minipage}
\end{center}
This case holds trivially because \pr{t} is a value.
\\

\item Case (\textsc{ver})
\begin{center}
    \begin{minipage}{.52\linewidth}
        \infrule[ver]{
            \emptyset \vdash t_i : A
            \andalso
            \vdash \{\overline{l_i}\}
        }{
            \emptyset \vdash \nvval{\overline{l_i=t_i}} : \vertype{\{\overline{l_i}\}}{A}
        }
    \end{minipage}
\end{center}
This case holds trivially because $\nvval{\overline{l_i=t_i}}$ is a value.
\\

\item Case (\textsc{veri})
\begin{center}
    \begin{minipage}{.45\linewidth}
        \infrule[veri]{
            \emptyset \vdash t_i : A
            \andalso
            \vdash \{\overline{l_i}\}
            \andalso
            l_k \in \{\overline{l_i}\}
        }{
            \emptyset \vdash \ivval{\overline{l_i=t_i}}{l_k} : A
        }
    \end{minipage}
\end{center}
In this case, we can apply (\textsc{E-veri}).
\begin{center}
\begin{prooftree}
    \AxiomC{$ $}
    \RightLabel{(\textsc{E-veri})}
    \UnaryInfC{$\ivval{\overline{l_i=t_i}}{l_k} \leadsto t_k@l_k$}
    \RightLabel{}
    \UnaryInfC{$\ivval{\overline{l_i=t_i}}{l_k} \longrightarrow t_k@l_k$}
\end{prooftree}
\end{center}
Thus, by choosing $t'=t_k@l_k$, we obtain the conclusion of the theorem.
\\

\item Case (\textsc{extr})
\begin{center}
    \begin{minipage}{.45\linewidth}
        \infrule[extr]{
            \emptyset \vdash t_1 : \vertype{r}{A}
            \andalso
            l_k \in r
        }{
            \emptyset \vdash t_1.l_k : A
        }
    \end{minipage}
\end{center}
In this case, we have two cases whether $t_1$ is a value or not.
\begin{itemize}
\item Suppose $t_1$ is a value. ($t_1=v_1$)\\
By Lemma \ref{lemma:extraction}, we know the following:
\begin{align*}
    \emptyset \,\vdash\, v_1 : \vertype{r}{A}
    \hspace{1em}\Longrightarrow\hspace{1em}
    \exists t'.
    \left\{
    \begin{aligned}
        &v_1.l_k   \longrightarrow  t' \\
        &\emptyset \vdash t' : A
    \end{aligned}
    \right.
\end{align*}
Thus, we obtain the conclusion of the theorem.

\item Suppose $t_1$ is not a value.\\
There exists a term $t_1$ such that:
\begin{align*}
    \begin{minipage}{.13\linewidth}
        \infrule[]{
            t_1 \leadsto t'_1
        }{
            t_1 \longrightarrow t'_1
        }
    \end{minipage}
\end{align*}
Also, we can apply an exaluation rule for extraction to $t$.
\begin{center}
    \begin{minipage}{.15\linewidth}
        \infrule[]{
            t_1 \leadsto t'_1
        }{
            \underbrace{t_1.l_k}_{t} \longrightarrow t'_1.l_k
        }
    \end{minipage}
\end{center}
Thus, by choosing $t'=t_1'.l_k$, we obtain the conclusion of the theorem.\\
\end{itemize}

\item Case (\textsc{sub})
\begin{center}
    \begin{minipage}{.65\linewidth}
        \infrule[\textsc{sub}]{
            \Gamma_1, x:\verctype{B}{r}, \Gamma_2 \vdash t : A
            \andalso
            r \sqsubseteq s
            \andalso
            \vdash s
        }{
            \Gamma_1, x:\verctype{B}{s}, \Gamma_2 \vdash t : A
        }
    \end{minipage}
\end{center}
In this case, $t$ does not change between before and after the last derivation.
Thus, by induction hypothesis, we obtain the conclusion of the theorem.

\end{itemize}
\end{proof}

\clearpage
\section{\vlmini{} Proofs}
\label{appendix:vlmini_safety}

\begin{definition}[Solution of Algorithmic Type Synthesis]
Suppose that $\Sigma;\Gamma \vdash t \Rightarrow A;\Sigma'; \Delta; \Theta; \mathcal{C}$. A solution for this judgement is a pair $(\theta, \eta, B)$ such that $\theta$ satisfies $\Theta$, $\eta$ satisfies $\mathcal{C}$, and $\theta \eta A$ = $B$.
\end{definition}

\begin{definition}[Solution of Pattern Type Synthesis]
Suppose that $\Sigma; R\vdash p : A \rhd \Gamma; \Sigma'; \Theta; \mathcal{C}$. A solution for this judgement is a pair $(\eta, \theta, B)$ such that $\theta$ satisfies $\Theta$, $\eta$ satisfies $\mathcal{C}$, and $\theta \eta A$ = $B$.
\end{definition}

\begin{lemma}[Relation of Resource Well-formedness]
\label{lemma:rel_res_wf}
\begin{align*}
\vdash \Sigma\;\land\;
\Sigma \vdash \eta\;\land\;
FTV(r) \subseteq \mathrm{dom}(\eta)\;\land\;
\Sigma \vdash r:\labelskind
\quad\Longrightarrow\quad
\vdash \eta r
\end{align*}
\end{lemma}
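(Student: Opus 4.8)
The plan is to prove the implication by a direct case analysis on the syntactic form of the version resource $r$, which in \vlmini{} is one of $\alpha$, $\bot$, or a concrete label set $\{\overline{l_i}\}$; equivalently, one may invert the kinding hypothesis $\Sigma \vdash r : \labelskind$, whose three applicable rules match these three cases exactly. In every case the goal is to establish $\vdash \eta r$, the \corelang{} resource well-formedness of $\eta r$. The assumption $FTV(r) \subseteq \mathrm{dom}(\eta)$ guarantees that $\eta r$ has no remaining free type variables, so the conclusion really is a judgement about a concrete resource, which is exactly the setting of the two \corelang{} rules (one for $\bot$ and one for label sets).

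First I would dispatch the two trivial cases. If $r = \bot$, then the label substitution leaves it fixed, $\eta \bot = \bot$, and $\vdash \bot$ holds immediately by the base resource rule. If $r = \{\overline{l_i}\}$, then $\eta$ acts as the identity on a concrete label set, so $\eta r = \{\overline{l_i}\}$; inverting $\Sigma \vdash \{\overline{l_i}\} : \labelskind$, which can only be derived by $(\kappa_{\mathrm{Label}})$, yields $l_i \in \mathcal{L}$ for every $i$, and this is precisely the premise of the resource well-formedness rule for label sets. Hence $\vdash \{\overline{l_i}\}$.

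The only case that uses $\Sigma \vdash \eta$ is $r = \alpha$. Here $FTV(r) = \{\alpha\}$, so $\alpha \in \mathrm{dom}(\eta)$, and because every binding of a label substitution has the shape $[\alpha \mapsto \{l\}]$, we get $\eta\alpha = \{l\}$ for some concrete set of labels $\{l\}$. To conclude $\vdash \{l\}$ I would appeal to the well-formedness of $\eta$: the binding $[\alpha \mapsto \{l\}]$ carries the side condition $\Sigma \vdash \{l\} : \labelskind$ (the label analogue of the resource substitution rule $(\mathrm{Sw}_{\mathrm{Res}})$ for $\theta$), and inverting that kinding derivation once more gives every label of $\{l\}$ in $\mathcal{L}$. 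Therefore $\vdash \eta\alpha$.

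This lemma is essentially bookkeeping, so I expect no genuine obstacle. The one point requiring care is the bridge between the two judgement forms involved — the \vlmini{} kinding judgement $\Sigma \vdash \cdot : \labelskind$ and the \corelang{} resource well-formedness judgement $\vdash \cdot$ — which must be seen to coincide on variable-free resources, both reducing to the membership condition $l_i \in \mathcal{L}$. I would isolate this correspondence once and reuse it in the $\{\overline{l_i}\}$ and $\alpha$ cases. A secondary detail is to confirm the action of the label substitution $\eta$ on $\bot$ and on concrete label sets, mirroring the type-substitution equations $\theta\bot = \bot$ and $\theta\{\overline{l_i}\} = \{\overline{l_i}\}$.
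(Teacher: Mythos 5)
Your proposal is correct and matches the paper's proof, which simply states ``straightforward by induction on the derivation of $\Sigma \vdash r : \labelskind$'' — since none of the applicable kinding rules have resource premises, that induction degenerates to exactly the three-way case analysis ($\bot$, $\{\overline{l_i}\}$, $\alpha$) you carry out, with the $\alpha$ case resolved via $\Sigma \vdash \eta$ as you describe. Your write-up just supplies the details the paper leaves implicit.
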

\begin{proof}
Straightforward by induction on the derivation of $\Sigma \vdash r:\labelskind$.
\end{proof}

\begin{lemma}[Relation of Type Well-formedness]
\label{lemma:rel_ty_wf}
\begin{align*}
\begin{aligned}
\vdash \Sigma\;\land\;
\Sigma \vdash \theta\;\land\;
\Sigma \vdash \eta\;\land\;\\
FTV(A) \subseteq \mathrm{dom}(\theta)\cup\mathrm{dom}(\eta)\;\land\;
\Sigma \vdash A:\typekind
\end{aligned}
\quad\Longrightarrow\quad
\vdash \eta \theta A
\end{align*}
\end{lemma}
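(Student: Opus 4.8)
The plan is to proceed by induction on the derivation of the \vlmini{} kinding judgement $\Sigma \vdash A : \typekind$, pushing the grounding substitution $\eta\theta$ through the structure of $A$ and reassembling a $\corelang{}$ type well-formedness derivation $\vdash \eta\theta A$ at each step. The hypotheses $\vdash \Sigma$, $\Sigma \vdash \theta$, and $\Sigma \vdash \eta$ are threaded unchanged through the induction, while the inclusion $FTV(A) \subseteq \mathrm{dom}(\theta)\cup\mathrm{dom}(\eta)$ is restricted to each subcomponent whenever I recurse, since $FTV$ of a subterm is contained in $FTV(A)$.

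The structural cases are routine. For $\kappa_{\textsc{Int}}$ we have $A = \inttype{}$, and since $\eta\theta\inttype{} = \inttype{}$ the goal is immediate by Tw$_{\textsc{Int}}$. For $\kappa_\rightarrow$ with $A = \ftype{A_1}{A_2}$, I apply the induction hypothesis to the two premises $\Sigma \vdash A_1 : \typekind$ and $\Sigma \vdash A_2 : \typekind$ to obtain $\vdash \eta\theta A_1$ and $\vdash \eta\theta A_2$, and then use the commutation $\eta\theta(\ftype{A_1}{A_2}) = \ftype{\eta\theta A_1}{\eta\theta A_2}$ from the substitution definitions together with Tw$_\rightarrow$. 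The versioned-type case $\kappa_\square$, $A = \vertype{r}{A'}$, splits into a type part and a resource part: the induction hypothesis handles $A'$, giving $\vdash \eta\theta A'$, and for the resource $r$ (for which $\Sigma \vdash r : \labelskind$ and $FTV(r) \subseteq FTV(A)$) I appeal to the preceding Resource Well-formedness lemma (\ref{lemma:rel_res_wf}) to conclude $\vdash \eta\theta r$, after which Tw$_\square$ closes the case.

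The genuinely delicate cases are the variable cases $\kappa_\alpha$, and, symmetrically, the resource-variable situation feeding into the Resource lemma. Here $A = \alpha$ with $\Sigma(\alpha) = \typekind$, so $\alpha \in FTV(A) \subseteq \mathrm{dom}(\theta)\cup\mathrm{dom}(\eta)$; because $\eta$ only rewrites variables of kind $\labelskind{}$, the variable $\alpha$ must lie in $\mathrm{dom}(\theta)$, and $\Sigma \vdash \theta$ gives $\Sigma \vdash \theta\alpha : \typekind$. The obligation then becomes $\vdash \eta(\theta\alpha)$, i.e., that the image $\theta\alpha$ is itself fully ground once $\eta$ is applied.

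This last point is exactly where I expect the main obstacle to lie: the hypotheses as stated bound $FTV(A)$ but say nothing directly about the free variables appearing in the codomain of $\theta$, so the naive induction does not have a subderivation to recurse on for the image $\theta\alpha$. I plan to discharge it by relying on the invariant that $\eta\theta$ is a \emph{closing} substitution, equivalently that the residual free variables of every image of $\theta$ are covered by $\mathrm{dom}(\eta)$, which is the property maintained by the constraint resolution that produces $\theta$ and $\eta$. Under that invariant I can reapply the Resource Well-formedness lemma to resource images and re-enter the induction on type images, so that both variable cases collapse to the already-established structural reasoning. If the ambient development does not already record this invariant, the clean fix is to strengthen the well-formedness judgements $\Sigma \vdash \theta$ and $\Sigma \vdash \eta$ accordingly and re-check that the synthesis rules preserve it.
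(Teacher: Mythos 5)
Your proof follows the same route as the paper's, whose entire recorded argument is ``Straightforward by induction on the derivation of $\Sigma \vdash A:\typekind$. The proof uses Lemma \ref{lemma:rel_res_wf}'' --- i.e., exactly your structural induction with the resource well-formedness lemma discharging the $\vertype{r}{A'}$ case. Your discussion of the variable case $\kappa_\alpha$ flags a genuine imprecision that the paper silently elides: since $\eta$ substitutes only label sets and \corelang{} type well-formedness has no rule for type variables, concluding $\vdash \eta\theta\alpha$ really does require the images of $\theta$ to be ground up to resource variables covered by $\mathrm{dom}(\eta)$, which the stated hypotheses (bounding only $FTV(A)$, not $FTV(\theta\alpha)$) do not literally guarantee; your proposed strengthening of the $\Sigma \vdash \theta$ judgement is the right repair.
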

\begin{proof}
Straightforward by induction on the derivation of $\Sigma \vdash A:\typekind$. The proof uses Lemma \ref{lemma:rel_res_wf}.
\end{proof}

\begin{lemma}[Relation of Type Environment Well-formedness]
\label{lemma:rel_te_wf}
\begin{align*}
\begin{aligned}
\vdash \Sigma\;\land\;
\Sigma \vdash \theta\;\land\;
\Sigma \vdash \eta\;\land\;\\
FTV(\Gamma) \subseteq \mathrm{dom}(\theta)\cup\mathrm{dom}(\eta)\;\land\;
\Sigma \vdash \Gamma
\end{aligned}
\quad\Longrightarrow\quad
\vdash \eta\theta\Gamma
\end{align*}
\end{lemma}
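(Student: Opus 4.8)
The plan is to prove this by a straightforward induction on the derivation of $\Sigma \vdash \Gamma$, mirroring exactly the structure of the two preceding results, Lemma \ref{lemma:rel_res_wf} and Lemma \ref{lemma:rel_ty_wf}. This is natural because the type-environment well-formedness rules (Tew$_\emptyset$), (Tew$_\textsc{Lin}$) and (Tew$_\textsc{Gr}$) are layered directly on top of type and resource well-formedness. The target judgement $\vdash \eta\theta\Gamma$ is the ground \vlmini{}/\corelang{} type-environment well-formedness of the fully substituted context, so in each inductive step I rebuild a target-system derivation out of the $\eta\theta$-images of the premises.

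First I would dispatch the base case (Tew$_\emptyset$): here $\Gamma = \emptyset$, so $\eta\theta\emptyset = \emptyset$ and $\vdash \emptyset$ holds immediately by the corresponding empty rule of the target system. For the step case (Tew$_\textsc{Lin}$), where $\Gamma = \Gamma', x:A$ with premises $\Sigma \vdash \Gamma'$, $\Sigma \vdash A:\typekind$ and $x\notin\mathrm{dom}(\Gamma')$, I would split the containment hypothesis using that free type variables distribute over context entries, namely $FTV(\Gamma', x:A) = FTV(\Gamma') \cup FTV(A)$, to obtain $FTV(\Gamma') \subseteq \mathrm{dom}(\theta)\cup\mathrm{dom}(\eta)$ and $FTV(A) \subseteq \mathrm{dom}(\theta)\cup\mathrm{dom}(\eta)$. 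The induction hypothesis applied to $\Sigma \vdash \Gamma'$ gives $\vdash \eta\theta\Gamma'$, and Lemma \ref{lemma:rel_ty_wf} applied to $\Sigma \vdash A:\typekind$ gives $\vdash \eta\theta A$. Since $\theta$ and $\eta$ act on types and resources but never rename term variables, the side condition $x\notin\mathrm{dom}(\eta\theta\Gamma')$ is inherited from $x\notin\mathrm{dom}(\Gamma')$, and reassembling with the target (Tew$_\textsc{Lin}$) rule yields $\vdash \eta\theta\Gamma', x:\eta\theta A$, which is exactly $\vdash \eta\theta(\Gamma', x:A)$.

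The case (Tew$_\textsc{Gr}$), where $\Gamma = \Gamma', x:\verctype{A}{r}$, proceeds identically but additionally treats the resource component: from $\Sigma \vdash r:\labelskind$ together with $FTV(r) \subseteq \mathrm{dom}(\theta)\cup\mathrm{dom}(\eta)$ I derive well-formedness of the substituted grade $\eta\theta r$, and then conclude with the target (Tew$_\textsc{Gr}$) rule using the IH on $\Gamma'$ and Lemma \ref{lemma:rel_ty_wf} on $A$ as before.

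The proof is essentially bookkeeping, so I do not anticipate a genuine obstacle; the only point demanding care is this resource subcase, where I must establish $\vdash \eta\theta r$ while the cited Lemma \ref{lemma:rel_res_wf} is phrased purely in terms of $\eta$. I would handle this by the auxiliary observation that $\theta$ preserves the kind $\labelskind$ and keeps free variables within $\mathrm{dom}(\theta)\cup\mathrm{dom}(\eta)$, so that $\theta r$ still satisfies the hypotheses of Lemma \ref{lemma:rel_res_wf} under $\eta$, whence $\vdash \eta\theta r$. A one-line remark of this form closes the gap, and everything else is a direct transcription of the inductions carried out for the previous two lemmas.
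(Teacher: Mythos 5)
Your proposal is correct and follows exactly the route the paper takes: the paper's proof is a one-line remark that the result is ``straightforward by induction on the derivation of $\Sigma \vdash \Gamma$'' using Lemmas \ref{lemma:rel_res_wf} and \ref{lemma:rel_ty_wf}, which is precisely the induction you spell out. Your extra observation about reconciling the $\eta$-only statement of Lemma \ref{lemma:rel_res_wf} with the needed $\vdash \eta\theta r$ in the (Tew$_\textsc{Gr}$) case is a genuine point of care that the paper glosses over, and your handling of it is sound.
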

\begin{proof}
Straightforward by induction on the derivation of $\Sigma \vdash \Gamma$. The proof uses Lemmas \ref{lemma:rel_res_wf} and \ref{lemma:rel_ty_wf}.
\end{proof}

\begin{lemma}[Relation of Resource Environment Well-formedness]
\label{lemma:rel_re_wf}
\begin{align*}
\vdash \Sigma\;\land\;
\Sigma \vdash \eta\;\land\;
FTV(R) \subseteq \mathrm{dom}(\eta)\;\land\;
\Sigma \vdash R
\quad\Longrightarrow\quad
\vdash \eta R
\end{align*}
\end{lemma}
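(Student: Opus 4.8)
The plan is to proceed by case analysis on the derivation of $\Sigma \vdash R$, which amounts to inspecting the two possible shapes of the resource context $R ::= - \mid r$. Since $R$ carries no recursive structure, no genuine induction is required, and the argument reduces almost entirely to a single appeal to Lemma \ref{lemma:rel_res_wf}. This places the present lemma at the very bottom of the well-formedness hierarchy, making it the easiest of the four ``Relation of $\ldots$ Well-formedness'' results.

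First I would handle the case $R = -$, which is derived by (\textsc{Rew}$_{-}$) in \vlmini{}. A label substitution acts as the identity on the empty resource context, so $\eta R = -$, and the goal $\vdash \eta R$ follows immediately from the \corelang{} rule (\textsc{Rew}$_{-}$), which has no premises and hence holds outright.

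Next I would treat the case $R = r$, derived by (\textsc{Rew}$_r$); inverting this rule yields $\Sigma \vdash r : \labelskind$. Because $FTV(R) = FTV(r)$ when $R = r$, the hypothesis $FTV(R) \subseteq \mathrm{dom}(\eta)$ specializes to $FTV(r) \subseteq \mathrm{dom}(\eta)$. Together with $\vdash \Sigma$, $\Sigma \vdash \eta$, and $\Sigma \vdash r : \labelskind$, the premises of Lemma \ref{lemma:rel_res_wf} are satisfied, so that lemma delivers resource well-formedness $\vdash_{\textsc{Rw}} \eta r$. Re-applying (\textsc{Rew}$_r$) then lifts this to the resource environment well-formedness judgement $\vdash_{\textsc{Rew}} \eta r$, which is precisely $\vdash \eta R$ since $R = r$, closing the case.

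The only subtlety is purely notational: keeping the two syntactically indistinguishable judgements apart. Lemma \ref{lemma:rel_res_wf} concludes at the resource well-formedness level ($\vdash_{\textsc{Rw}}$), whereas the target of this lemma lives at the resource environment well-formedness level ($\vdash_{\textsc{Rew}}$), and the passage between them is exactly what (\textsc{Rew}$_r$) supplies. I do not expect any real mathematical obstacle here; the entire content of the proof is the bookkeeping that routes the free-variable side condition and the kinding premise into Lemma \ref{lemma:rel_res_wf}, so in the paper's own terminology this is again ``straightforward by case analysis on the derivation of $\Sigma \vdash R$.''
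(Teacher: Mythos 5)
Your proposal is correct and matches the paper's own argument, which is stated simply as ``straightforward by induction on the derivation of $\Sigma \vdash R$'' using Lemma \ref{lemma:rel_res_wf}; since that derivation has only the two non-recursive rules (\textsc{Rew}$_{-}$) and (\textsc{Rew}$_r$), your case analysis is exactly that induction spelled out. The routing of the hypotheses into Lemma \ref{lemma:rel_res_wf} and the re-application of (\textsc{Rew}$_r$) is precisely what the paper intends.
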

\begin{proof}
Straightforward by induction on the derivation of $\Sigma \vdash R$. The proof uses Lemma \ref{lemma:rel_res_wf}.
\end{proof}

\begin{lemma}[Soundness of Pattern Type Synthesis]
\label{lemma:sound_p_inf}
\begin{align*}
\left.
\begin{aligned}
    \Sigma; R\vdash p : A \rhd \Gamma; \Sigma'; \Theta; \mathcal{C}\\
    (\theta, \eta, \eta \theta A) \textnormal{ is its solution}
\end{aligned}
\right\}
\quad\Longrightarrow\quad
\eta R\vdash p : \eta\theta A \rhd \eta\theta\Gamma
\end{align*}
\end{lemma}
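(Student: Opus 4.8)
The plan is to proceed by induction on the derivation of the algorithmic pattern type synthesis judgement $\Sigma; R \vdash p : A \rhd \Gamma; \Sigma'; \Theta; \mathcal{C}$, with a case analysis on the last rule applied. The running observation that makes the induction go through is that a solution $(\theta,\eta,\eta\theta A)$ for a conclusion restricts to a solution for each premise: the type constraints $\Theta$ and dependency constraints $\mathcal{C}$ of a conclusion always contain those of the premises as conjuncts, so any $\theta$ satisfying the conclusion's $\Theta$ still satisfies the premise's $\Theta$, and similarly for $\eta$ and $\mathcal{C}$. Alongside this I carry the usual well-formedness side conditions on solutions ($\Sigma \vdash \theta$, $\Sigma \vdash \eta$, and the free-variable containment $FTV(\cdot)\subseteq\mathrm{dom}(\theta)\cup\mathrm{dom}(\eta)$), since these are exactly the hypotheses required to invoke Lemmas~\ref{lemma:rel_res_wf}, \ref{lemma:rel_ty_wf}, and \ref{lemma:rel_te_wf} when discharging the well-formedness premises of the declarative rules.

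For the base cases \textsc{pVar} and \textsc{[pVar]} the output constraints are $\top$, so every $(\theta,\eta)$ is a solution and I read off the matching declarative rule directly. From $\Sigma \vdash A : \typekind$ and Lemma~\ref{lemma:rel_ty_wf} I obtain $\vdash \eta\theta A$, and from $\Sigma \vdash r : \labelskind$ and Lemma~\ref{lemma:rel_res_wf} I obtain $\vdash \eta r$; feeding these into the declarative \textsc{pVar} and \textsc{[pVar]} rules yields $- \vdash x : \eta\theta A \rhd x:\eta\theta A$ and $\eta r \vdash x : \eta\theta A \rhd x:\verctype{\eta\theta A}{\eta r}$, which are the required conclusions once we note $\eta\theta r=\eta r$ (see below).

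The interesting case is \textsc{p}$\square$, where $p=[x]$, $R=-$, and the conclusion constraint adds $\{A\sim\vertype{\alpha}{\beta}\}$ for the fresh $\alpha:\labelskind$, $\beta:\typekind$. Since $\theta$ satisfies this equation, $\theta A=\vertype{\theta\alpha}{\theta\beta}$, hence $\eta\theta A=\vertype{\eta\theta\alpha}{\eta\theta\beta}$. Restricting the solution to the premise $\Sigma,\alpha{:}\labelskind,\beta{:}\typekind;\,\alpha \vdash x:\beta \rhd \Delta;\Sigma';\Theta;\mathcal{C}$ and applying the induction hypothesis gives $\eta\alpha \vdash x : \eta\theta\beta \rhd \eta\theta\Delta$, and the declarative \textsc{p}$\square$ rule then produces $- \vdash [x] : \vertype{\eta\alpha}{\eta\theta\beta} \rhd \eta\theta\Delta$. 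To align this with the goal $- \vdash [x] : \eta\theta A \rhd \eta\theta\Delta$ it remains to verify $\eta\alpha=\eta\theta\alpha$, which holds because the freshly generated $\alpha$ occurs in $\Theta$ only through $A\sim\vertype{\alpha}{\beta}$, where $A$ is a type variable (pattern synthesis is always invoked at a fresh type variable, cf.\ $\Rightarrow_{\textsc{abs}}$ and \textsc{pCon}); unification therefore maps that variable to $\vertype{\alpha}{\beta}$ and leaves $\alpha\notin\mathrm{dom}(\theta)$. The data-constructor cases \textsc{pCon} and \textsc{[pCon]} are handled analogously: the added constraint $\{K\,\overline{\alpha_i}\sim A\}$ decomposes $\theta A$ into $K\,\overline{\theta\alpha_i}$, the induction hypothesis applies to each subpattern at its fresh $\alpha_i$, and the results are reassembled with the corresponding declarative constructor-pattern rule.

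I expect the main obstacle to be exactly this substitution bookkeeping in \textsc{p}$\square$ and \textsc{[pCon]}: reconciling the conclusion's resource context $\eta R$ with the resource that surfaces inside the versioned type $\eta\theta A$ after unification, which turns on the invariant that the freshly introduced label variable is never captured by the type-unification substitution $\theta$ and is resolved solely by $\eta$, so that $\eta\theta r=\eta r$ for every resource annotation $r$ drawn from a resource context. Stating this invariant precisely, and checking that the fresh kind-variable extensions of $\Sigma$ preserve the well-formedness of $\theta$ and $\eta$ so that the relation lemmas remain applicable in the inductive step, is the part that requires genuine care; the remaining equalities are routine pushes of $\eta\theta$ through contexts and type constructors.
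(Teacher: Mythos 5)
Your proposal is correct and follows essentially the same route as the paper's proof: induction on the synthesis derivation, the relation lemmas (\ref{lemma:rel_res_wf}, \ref{lemma:rel_ty_wf}) to discharge well-formedness in the \textsc{pVar}/\textsc{[pVar]} cases, and the observation in the \textsc{p}$\square$ case that the fresh resource variable $\alpha$ escapes $\theta$ so that $\eta\theta\alpha=\eta\alpha$. Your justification of that last equality (via freshness and the shape of the unification problem) is somewhat more explicit than the paper's one-line remark that $\theta$ does not substitute for resource variables, and you additionally sketch the \textsc{pCon}/\textsc{[pCon]} cases that the paper's proof omits, but these are refinements of the same argument rather than a different one.
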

\begin{proof}
By induction on the derivation of $\Sigma; R\vdash p : A \rhd \Gamma; \Sigma'; \Theta; \mathcal{C}$.
We perform case analysis on the rule applied last to derive $\Sigma; R\vdash p : A \rhd \Gamma; \Sigma'; \Theta; \mathcal{C}$.

\begin{itemize}
\item Case (\textsc{pInt}):
\begin{align*}
    \begin{minipage}{.55\linewidth}
      \infrule[pInt]{
        \vdash \Sigma
        \andalso
        \Sigma \vdash R
        \andalso
        \Sigma \vdash A:\typekind
      }{
        \Sigma; R \vdash n : A \rhd \emptyset; \Sigma; \{A\sim\inttype\}; \top
      }
    \end{minipage}
\end{align*}
We are given
\begin{gather*}
p = n,\quad
\Gamma = \emptyset,\quad
\Theta = \{A \sim \inttype\},\quad
\mathcal{C} = \top
~.
\end{gather*}
Hence, we have $\theta A = \inttype$ and $\eta = \emptyset$, therefore,
\begin{gather*}
\eta \theta A = \inttype, \quad
\eta \theta \Gamma = \emptyset
~.
\end{gather*}
By Lemma \ref{lemma:rel_re_wf},
\begin{gather*}
\vdash \eta R
~.
\end{gather*}
Therefore, by \textsc{pInt},
\begin{align*}
    \begin{minipage}{.55\linewidth}
      \infrule[pInt]{
        \vdash \eta R
      }{
        \eta R \vdash n : \inttype \rhd \emptyset
      }
    \end{minipage}
\end{align*}

\item Case (\textsc{pVar}), and (\textsc{[pVar]})
:\\
Similarly to the case (\textsc{pInt}).
We use 
lemma \ref{lemma:rel_ty_wf} for the case (\textsc{pVar}), and lemmas \ref{lemma:rel_res_wf} and \ref{lemma:rel_ty_wf} for the case (\textsc{[pVar]}).

\item Case (\textsc{p$\square$}):
\begin{align*}
    \begin{minipage}{.85\linewidth}
      \vspace{0.75\baselineskip}
      \infrule[p$\square$]{
        \Sigma, \alpha:\textsf{Labels}, \beta:\textsf{Type}; \alpha \vdash p' : \beta \rhd \Gamma; \Sigma'; \Theta'; \mathcal{C}
      }{
        \Sigma; - \vdash [p'] : A \rhd \Gamma; \Sigma'; \Theta' \land \{A \sim \vertype{\alpha}{\beta}\}; \mathcal{C}
      }
    \end{minipage}
\end{align*}
We are given
\begin{gather*}
R = -,\quad
p = [p'],\quad
\Theta = \Theta' \land \{A \sim \vertype{\alpha}{\beta}\}
~.
\end{gather*}
Hence, we have $\theta$ unify $\Theta'$, $\theta A = \theta (\vertype{\alpha}{\beta})$, and $\eta \theta A = \eta \theta (\vertype{\alpha}{\beta})$, therefore,
\begin{gather*}
\eta \theta A = \eta \theta (\vertype{\alpha}{\beta}) = \eta \theta A, \quad
\eta R = -
~.
\end{gather*}
Hence, by the induction hypothesis,
\begin{align*}
\eta \alpha \vdash p' : \eta \theta \beta \rhd \eta \theta \Gamma
\end{align*}
Therefore, by \textsc{p$\square$},
\begin{align*}
    \begin{minipage}{.85\linewidth}
      \vspace{0.75\baselineskip}
      \infrule[p$\square$]{
        \eta \alpha \vdash p' : \eta \theta \beta \rhd \eta \theta \Gamma
      }{
        - \vdash [p'] : \vertype{\eta \alpha}{\eta \theta \beta} \rhd \eta \theta \Gamma
      }
    \end{minipage}
\end{align*}
Since $\theta$ does not include type substituions for resource variables, $\eta \alpha = \eta \theta \alpha$, hence $\vertype{\eta \alpha}{\eta \theta \beta} = \eta \theta (\vertype{\alpha}{\beta}) = \eta \theta A$. Therefore, we get the conclusion for this case.

\end{itemize}

\end{proof}

\begin{theorem}[Soundness of Algorithmic Type Synthesis]
\label{lemma:sound_ty_inf}
\begin{align*}
\left.
\begin{aligned}
    \Sigma;\Gamma \vdash t \Rightarrow A;\Sigma'; \Delta; \Theta; \mathcal{C}\\
    (\theta, \eta, \eta \theta A) \textnormal{ is its solution}
\end{aligned}
\right\}
\quad\Longrightarrow\quad
\eta\theta\Delta \vdash t : \eta\theta A
\end{align*}
\end{theorem}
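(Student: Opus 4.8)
The plan is to prove the statement by structural induction on the derivation of the synthesis judgment $\Sigma;\Gamma \vdash t \Rightarrow A;\Sigma'; \Delta; \Theta; \mathcal{C}$, doing a case analysis on the rule applied last and, in each case, exhibiting a declarative \corelang{} derivation of $\eta\theta\Delta \vdash t : \eta\theta A$. Throughout I would lean on two families of auxiliary facts: the well-formedness transfer lemmas (Lemmas \ref{lemma:rel_res_wf}--\ref{lemma:rel_re_wf}), which discharge the side conditions $\vdash \eta\theta\Gamma$ and $\vdash \eta\theta A$ that the declarative rules demand, and a set of routine algebraic lemmas stating that $\eta\theta$ commutes with the semiring-valued context operations, i.e. $\eta\theta(\Delta_1 + \Delta_2) = \eta\theta\Delta_1 + \eta\theta\Delta_2$ and $\eta\theta(r\cdot\Delta) = (\eta\theta r)\cdot(\eta\theta\Delta)$. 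A preliminary observation is that, since the conclusion's $\Theta$ and $\mathcal{C}$ are always conjunctions subsuming each premise's constraints and the kind environments only grow ($\Sigma \subseteq \Sigma'$), the pair $(\theta,\eta)$ restricts to a solution for every sub-judgment, so the induction hypothesis applies.

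The base cases are immediate: $(\Rightarrow_{\textsc{int}})$ uses (int) with $\Delta = \emptyset$; $(\Rightarrow_{\textsc{lin}})$ uses (var); and $(\Rightarrow_{\textsc{gr}})$, whose output is $x:\verctype{A}{1}$, is obtained by (var) followed by (der) since $1 = \emptyset$. The application case $(\Rightarrow_{\textsc{app}})$ is the first compositional one: from the solution satisfying $\{A_1 \sim \ftype{A_2}{\beta}\}$ we learn $\eta\theta A_1 = \ftype{\eta\theta A_2}{\eta\theta\beta}$, so the two induction hypotheses give $\eta\theta\Delta_1 \vdash t_1 : \ftype{\eta\theta A_2}{\eta\theta\beta}$ and $\eta\theta\Delta_2 \vdash t_2 : \eta\theta A_2$; applying (app) together with additivity of $\eta\theta$ over $+$ yields $\eta\theta(\Delta_1 + \Delta_2) \vdash \app{t_1}{t_2} : \eta\theta\beta$.

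The two delicate cases are $(\Rightarrow_{\textsc{abs}})$ and $(\Rightarrow_{\textsc{pr}})$. For $(\Rightarrow_{\textsc{abs}})$ I would first invoke soundness of pattern synthesis (Lemma \ref{lemma:sound_p_inf}) on the pattern premise to obtain $- \vdash p : \eta\theta\alpha \rhd \eta\theta\Gamma'$, and the induction hypothesis on the body to obtain $\eta\theta\Delta \vdash t : \eta\theta B$. The subtlety is that the synthesized body context records the demand on a promoted pattern variable as the identity resource $1$ (via $\Rightarrow_{\textsc{gr}}$), whereas the declarative (abs) rule feeds in the pattern-supplied assumption $x:\verctype{\eta\theta\beta}{\eta\alpha}$; I would bridge this with (sub), lifting $1$ to $\eta\alpha$ (using $1 = \emptyset \sqsubseteq \eta\alpha$), reintroduce any pattern variable absent from $\Delta$ with (weak)/(der), and then peel off the pattern part, identifying $\eta\theta(\Delta\backslash\Gamma')$ as the declarative antecedent before applying (abs).

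The crux is $(\Rightarrow_{\textsc{pr}})$, the only rule that scales the context and emits dependency constraints. Here I would apply the induction hypothesis to the subterm typed under the graded context $\Gamma'$, obtaining $\eta\theta\Delta \vdash t : \eta\theta A$, and match this against the declarative (pr) rule, which concludes $r\cdot[\Gamma] \vdash \pr{t} : \vertype{r}{A}$ from $[\Gamma] \vdash t : A$, taking $r = \eta\alpha$. Two points need care. First, I must argue that $\eta\theta\Delta$ consists solely of versioned assumptions, so that it qualifies as a $[\Gamma]$ in the (pr) premise; this is a structural invariant I would establish separately, namely that synthesizing under the fully graded context $\Gamma' = [\Gamma\cap\textsf{FV}(t)]_{\textsf{Labels}}$ (with pattern-bound linear variables stripped by $\Delta\backslash\Gamma'$ in abstractions) yields a purely graded output. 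Second, I must push $\eta\theta$ through the scaling to get $\eta\theta(\alpha\cdot\Delta) = (\eta\alpha)\cdot(\eta\theta\Delta)$, which is exactly the conclusion context of (pr). The constraint $\mathcal{C}_2$ produced by $\alpha \sqsubseteq_c \Gamma'$ holds of $\eta$ by hypothesis, but since the declarative (pr) rule imposes no lower bound linking $r$ to the context resources, $\mathcal{C}_2$ is never consumed in the derivation (it is the ingredient that enforces version consistency during inference, not soundness). I therefore expect the whole argument to hinge on correctly establishing the purely-graded shape of $\eta\theta\Delta$ and the commutation of substitution with $\alpha\cdot(-)$.
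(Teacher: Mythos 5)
Your proposal follows essentially the same route as the paper's proof: induction on the synthesis derivation with a case analysis on the last rule, invoking Lemma \ref{lemma:sound_p_inf} for the pattern premise of $(\Rightarrow_{\textsc{abs}})$, the well-formedness transfer lemmas (Lemmas \ref{lemma:rel_res_wf}--\ref{lemma:rel_re_wf}) for the side conditions, and commutation of $\eta\theta$ with $+$ and $\cdot$ to reassemble each declarative rule, with $(\Rightarrow_{\textsc{gr}})$ handled by (\textsc{var}) plus (\textsc{der}) exactly as in the paper. If anything you are more scrupulous than the paper, which in the $(\Rightarrow_{\textsc{abs}})$ and $(\Rightarrow_{\textsc{app}})$ cases silently identifies the substituted output contexts with what the declarative rules demand, and does not spell out the (\textsc{sub})/(\textsc{weak}) bridging between the synthesized grade $1$ and the pattern-supplied resource, nor the purely-graded shape of the context under $(\Rightarrow_{\textsc{pr}})$, both of which you correctly flag as the points needing care.
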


\begin{proof}
By induction on the derivation of $\Sigma;\Gamma \vdash t \Rightarrow A;\Sigma'; \Delta; \Theta; \mathcal{C}$.
We perform case analysis on the rule applied last to derive $\Sigma;\Gamma \vdash t \Rightarrow A;\Sigma'; \Delta; \Theta; \mathcal{C}$.

\begin{itemize}
\item Case (\textsc{$\Rightarrow_{\textsc{int}}$}):
\begin{align*}
    \begin{minipage}{.45\linewidth}
      \infrule[$\Rightarrow_{\textsc{int}}$]{
        \vdash \Sigma
        \andalso
        \Sigma \vdash \Gamma
      }{
        \Sigma; \Gamma \vdash n \Rightarrow \inttype; \Sigma; \emptyset; \top; \top
      }
    \end{minipage}
\end{align*}
We are given
\begin{gather*}
t = n,\quad
A = \inttype,\quad
\Delta = \emptyset,\quad
\Theta = \top,\quad
\mathcal{C} = \top
~.
\end{gather*}
Hence, we have $\theta = \emptyset$ and $\eta = \emptyset$, therefore,
\begin{gather*}
\eta \theta \Delta = \emptyset, \quad
\eta \theta A = \inttype
~.
\end{gather*}
Therefore, by (\textsc{int}),
\begin{align*}
    \begin{minipage}{.30\linewidth}
      \infrule[int]{
        \\
      }{
        \emptyset \vdash n : \inttype
      }
    \end{minipage}
\end{align*}

\item Case (\textsc{$\Rightarrow_{\textsc{lin}}$}):
\begin{align*}
    \begin{minipage}{.5\linewidth}
      \infrule[$\Rightarrow_{\textsc{lin}}$]{
        \vdash \Sigma
        \andalso
        \Sigma \vdash \Gamma
        \andalso
        x:A\in\Gamma
      }{
        \Sigma; \Gamma \vdash x \Rightarrow A; \Sigma; x:A; \top; \top
      }
    \end{minipage}
\end{align*}
We are given
\begin{gather*}
t = x,\quad
\Delta = x:A,\quad
\Theta = \top,\quad
\mathcal{C} = \top
~.
\end{gather*}
Hence, we have $\theta = \emptyset$ and $\eta = \emptyset$, therefore,
\begin{gather*}
\eta \theta \Delta = x:A, \quad
\eta \theta A = A
~.
\end{gather*}
Furthermore, by Lemma \ref{lemma:rel_ty_wf}, we have
\begin{align*}
\vdash \eta \theta A~(= A)
~.
\end{align*}
Therefore, by (\textsc{var}),
\begin{align*}
    \begin{minipage}{.30\linewidth}
      \infrule[var]{
        \vdash A\\
      }{
        x:A \vdash x:A
      }
    \end{minipage}
\end{align*}

\item Case (\textsc{$\Rightarrow_{\textsc{gr}}$}):
\begin{align*}
    \begin{minipage}{.55\linewidth}
      \infrule[$\Rightarrow_{\textsc{gr}}$]{
        \vdash \Sigma
        \andalso
        \Sigma \vdash \Gamma
        \andalso
        x:\verctype{A}{r}\in\Gamma
      }{
        \Sigma; \Gamma \vdash x \Rightarrow A; \Sigma; x:\verctype{A}{1}; \top; \top
      }
    \end{minipage}
\end{align*}
We are given
\begin{gather*}
t = x,\quad
\Delta = x:\verctype{A}{1},\quad
\Theta = \top,\quad
\mathcal{C} = \top
~.
\end{gather*}
Hence, we have $\theta = \emptyset$ and $\eta = \emptyset$, therefore,
\begin{gather*}
\eta \theta \Delta = x:\verctype{A}{1}, \quad
\eta \theta A = A
~.
\end{gather*}
Furthermore, by Lemma \ref{lemma:rel_ty_wf}, we have
\begin{align*}
\vdash \eta \theta A~(= A)
~.
\end{align*}
Therefore, we conclude the case by the following derivation.
\begin{center}
\begin{prooftree}
\AxiomC{$ $}
\RightLabel{(\textsc{var})}
\UnaryInfC{$ x:A \vdash x : A$}
\RightLabel{(\textsc{der})}
\UnaryInfC{$ x:\verctype{A}{1} \vdash x:A$}
\end{prooftree}
\end{center}

\item Case (\textsc{$\Rightarrow_{\textsc{abs}}$}):
\begin{align*}
    \begin{minipage}{.95\linewidth}
      \infrule[$\Rightarrow_{\textsc{abs}}$]{
        \Sigma_1, \alpha:\textsf{Type};- \vdash p:\alpha \rhd \Gamma'; \Sigma_2; \Theta_1
        \andalso\\
        \Sigma_2;\Gamma,\Gamma' \vdash t \Rightarrow B;\Sigma_3;\Delta'; \Theta_2; \mathcal{C}
      }{
        \Sigma_1;\Gamma \vdash \lam{p}{t} \Rightarrow \ftype{\alpha}{B};\Sigma_3;\Delta'\backslash\Gamma' ; \Theta_1\land\Theta_2; \mathcal{C}
      }
    \end{minipage}
\end{align*}
We are given
\begin{gather*}
t = \ftype{\alpha}{B},\quad
\Delta = \Delta'\backslash\Gamma',\quad
\Theta = \Theta_1\land\Theta_2,
~.
\end{gather*}
Hence, we have $\theta$ unifies $\Theta_1$ and $\Theta_2$, $\eta$ unifies $\mathcal{C}$. Therefore, we have
\begin{gather*}
\eta \theta \Delta = \eta \theta (\Delta'\backslash\Gamma'), \quad
\eta \theta A = \eta \theta (\ftype{\alpha}{B})
~.
\end{gather*}
Furthermore, by Lemma \ref{lemma:sound_p_inf} and the induction hypothesis, we have
\begin{align*}
- \vdash p:\eta\theta\alpha \rhd \eta\theta\Gamma',\quad
\eta\theta \Delta' \vdash t : \eta \theta B
~.
\end{align*}
Therefore, by (\textsc{abs}),
\begin{align*}
    \begin{minipage}{.60\linewidth}
      \infrule[abs]{
        - \vdash p:\eta\theta\alpha \rhd \eta\theta\Gamma'
        \andalso
        \eta\theta \Delta' \vdash t : \eta \theta B
      }{
        \eta\theta\Delta' \backslash \eta\theta \Gamma' \vdash \lam{p}{t}:\ftype{\eta\theta\alpha}{\eta \theta B}
      }
    \end{minipage}
\end{align*}
Since $\eta\theta\Delta' \backslash \eta\theta \Gamma' = \eta\theta(\Delta' \backslash \Gamma')$ and $\ftype{\eta\theta\alpha}{\eta \theta B} = \eta\theta(\ftype{\alpha}{B})$, we have the conclusion of the case.

\item Case (\textsc{$\Rightarrow_{\textsc{app}}$}):
\begin{align*}
    \begin{minipage}{.95\linewidth}
      \vspace{0.5\baselineskip}
      \infrule[$\Rightarrow_{\textsc{app}}$]{
        \Sigma_1; \Gamma \vdash t_1 \Rightarrow A_1 ; \Sigma_2; \Delta_1; \Theta_1; \mathcal{C}_1
        \andalso\\
        \Sigma_2; \Gamma \vdash t_2 \Rightarrow A_2; \Sigma_3; \Delta_2; \Theta_2; \mathcal{C}_2
      }{
        \Sigma_1;\Gamma \vdash \app{t_1}{t_2} \Rightarrow \beta; \Sigma_3, \beta:\typekind; \Delta_1+\Delta_2; \\\hspace{11em}\Theta_1\land\Theta_2\land\{A_1\sim \ftype{A_2}{\beta}\}; \mathcal{C}_1 \land \mathcal{C}_2
      }
    \end{minipage}
\end{align*}
We are given
\begin{gather*}
t = \app{t_1}{t_2},\quad
A = \beta,\quad
\Delta = \Delta_2+\Delta_2,\\
\Theta = \Theta_1\land\Theta_2\land\{A_1\sim \ftype{A_2}{\beta}\},\quad
\mathcal{C} = \mathcal{C}_1 \land\mathcal{C}_2
~.
\end{gather*}
Hence, we have $\theta$ unifies $\Theta_1$ and $\Theta_2$, and $\theta A_1 = \theta (\ftype{A_2}{\beta})$.
Also, $\eta$ unifies $\mathcal{C}_1$ and $\mathcal{C}_2$, and then we have $\eta \theta A_1 = \eta \theta (\ftype{A_2}{\beta}) = \ftype{\eta \theta A_2}{\eta \theta\beta}$.
Therefore, we have
\begin{gather*}
\eta \theta \Delta = \eta\theta(\Delta_2+\Delta_2),\quad
\eta \theta A = \eta \theta \beta
~.
\end{gather*}
Furthermore, by the induction hypothesises,
\begin{gather*}
\eta \theta \Gamma \vdash t_1 : \eta \theta A_1~(=\ftype{\eta \theta A_2}{\eta \theta\beta}), \quad
\eta \theta \Gamma \vdash t_2 : \eta \theta A_2
~.
\end{gather*}
Therefore, by (\textsc{app}),
\begin{align*}
    \begin{minipage}{.70\linewidth}
      \infrule[app]{
        \eta \theta \Gamma \vdash t_1 : \ftype{\eta \theta A_2}{\eta \theta\beta}
        \andalso
        \eta \theta \Gamma \vdash t_2 : \eta \theta A_2
      }{
        \eta \theta \Gamma \vdash \app{t_1}{t_2} : \eta \theta \beta
      }
    \end{minipage}
\end{align*}

\item Case (\textsc{$\Rightarrow_{\textsc{pr}}$}):
\begin{align*}
    \begin{minipage}{.95\linewidth}
      \vspace{0.5\baselineskip}
      \infrule[$\Rightarrow_{\textsc{pr}}$]{
        \Sigma_1 \vdash [\Gamma\cap\textsf{FV}(t)]_{\textsf{Labels}} \rhd \Gamma'
        \andalso
        \Sigma_1; \Gamma' \vdash t \Rightarrow A'; \Sigma_2; \Delta'; \Theta; \mathcal{C}_1
        \andalso\\
        \Sigma_3 = \Sigma_2, \alpha:\textsf{Labels}
        \andalso
        \Sigma_3 \vdash \alpha \sqsubseteq_c \Gamma' \rhd \mathcal{C}_2
      }{
        \Sigma_1;\Gamma \vdash [t] \Rightarrow \vertype{\alpha}{A'}; \Sigma_3; \alpha \cdot \Delta' ;\Theta; \mathcal{C}_1 \land \mathcal{C}_2
      }
    \end{minipage}
\end{align*}
We are given
\begin{gather*}
t = \pr{t},\quad
A = \vertype{\alpha}{A'},\quad
\Delta = \alpha \cdot \Delta',\quad
\mathcal{C} = \mathcal{C}_1\land\mathcal{C}_2,
~.
\end{gather*}
Hence, $\theta$ unifies $\Theta$ and $\eta$ unifies $\mathcal{C}_1$ and $\mathcal{C}_2$. Therefore, we have
\begin{gather*}
\eta \theta \Delta = \eta \theta (\alpha \cdot \Delta') = (\eta \theta\alpha) \cdot (\eta \theta\Delta'), \quad
\eta\theta A = \eta \theta (\vertype{\alpha}{A'}) = \vertype{\eta \theta\alpha}{\eta \theta A'}
~.
\end{gather*}
Furthermore, by 
the induction hypothesis, we have
\begin{align*}
\eta \theta \Delta' \vdash t : \eta \theta A'
~.
\end{align*}
Therefore, by (\textsc{pr}),
\begin{align*}
    \begin{minipage}{.55\linewidth}
        \infrule[pr]{
            \eta \theta \Delta' \vdash t : \eta \theta A'
            \andalso
            \vdash \eta \theta \alpha
        }{
            (\eta \theta \alpha)\cdot(\eta \theta \Delta') \vdash \pr{t} : \vertype{\eta \theta \alpha}{\eta \theta A'} 
        }
    \end{minipage}
\end{align*}
\end{itemize}
\end{proof}
\clearpage

\end{document}